\definecolor{DarkBlue}{rgb}{0.1,0.1,0.5}
\definecolor{DarkGreen}{rgb}{0.1,0.5,0.1}
\newtheorem{theorem}{Theorem}
\newtheorem{claim}{Claim}
\newtheorem{observation}{Observation}
\newtheorem{example}{Example}
\newtheorem{remark}{Remark}
\newtheorem{proposition}{Proposition}
\newcounter{casenum}
\DeclareMathOperator*{\argmax}{arg\,max}
\DeclareMathOperator*{\argmin}{arg\,min}
\newlist{todolist}{itemize}{2}
\setlist[todolist]{label=$\square$}
\newcommand{\EF}{\textsc{EF}}
\newcommand{\EQ}{\textsc{EQ}}
\newcommand{\EFone}{\textsc{EF1}}
\newcommand{\PO}{\textsc{PO}}
\newcommand{\SO}{\textsc{SO}}
\newcommand{\NT}{\textsc{TS}}
\newcommand{\wNT}{\textsc{wTS}}
\newcommand{\NPH}{\textsc{NP-hard}}
\newcommand{\NoNumber}[1]{%
  \let\@oldalglinenumber\alglinenumber%
  \renewcommand{\alglinenumber}{}%
  \STATE #1%
  \let\alglinenumber\@oldalglinenumber%
}
\DeclarePairedDelimiter\ceil{\lceil}{\rceil}
\DeclarePairedDelimiter\floor{\lfloor}{\rfloor}
\begin{document}
\author{}
\date{}

\title{Non-Monotonicity in Fair Division of Graphs}

\author{
    Hadi Hosseini \\ 
    \small Penn State University\\ 
    \small \texttt{hadi@psu.edu}
    \and 
    Shraddha Pathak\\
    \small Penn State University\\ \small \texttt{ssp5547@psu.edu}
    \and
    Yu Zhou\footnote{Corresponding author. This work was done when Yu Zhou was a Postdoc at Penn State University.}\\
    \small Beijing Normal University\\ 
    \small \texttt{yu.zhou@bnu.edu.cn} 
}
\maketitle

\begin{abstract}

We consider the problem of fairly allocating the vertices of a graph among $n$ agents, where the value of a bundle is determined by its \textit{cut value}---the number of edges with exactly one endpoint in the bundle.
This model naturally captures applications such as team formation and network partitioning, where valuations are inherently \textit{non-monotonic}: the marginal values may be positive, negative, or zero depending on the composition of the bundle.
We focus on the fairness notion of envy-freeness up to one item (\EFone{}) and explore its compatibility with several efficiency concepts such as Transfer Stability (\NT) that prohibits single-item transfers that benefit one agent without making the other worse-off.
For general graphs, our results uncover a non-monotonic relationship between the number of agents $n$ and the existence of allocations satisfying \EFone{} and transfer stability (\NT): such allocations always exist for $n=2$, may fail to exist for $n=3$, but exist again for all $n\geq 4$. 
We further show that existence can be guaranteed for any $n$ by slightly weakening the efficiency requirement or by restricting the graph to forests. All of our positive results are achieved via efficient algorithms.

\end{abstract}


 \section{Introduction}

Fair division of indivisible items is a fundamental problem in multiagent systems and computational social choice.
A central goal in fair division is to allocate a set of indivisible, non-shareable items among (potentially heterogeneous) agents in a manner that is both fair and efficient, guided by well-established axiomatic principles.
However, much of the existing literature assumes that agents have additive, or otherwise \textit{monotonic} valuations.
These assumptions often fail to capture more realistic scenarios where the value of an item is non-monotonic and may depend on the composition of the bundle it belongs to.
%

In this paper, we initiate a formal study of non-monotonic \textit{bundle-dependent} valuations, where the marginal value of an item may vary---potentially becoming negative---depending on the current bundle held by the agent.
%
%
For instance, in settings such as team formation or committee selection with diverse expertise, the value of an additional member, such as a goalkeeper, may vary significantly depending on the existing composition: they may be essential for a team lacking a substitute, but redundant or even burdensome for a fully staffed team due to added financial cost.

We focus on a specific class of bundle-dependent preferences based on \textit{cut-valuations} defined over undirected graphs.
In this model, items correspond to vertices of a graph $G=(V,E)$, and an agent's value for a bundle is determined by the number of edges that are cut, i.e. edges with exactly one endpoint in the bundle.
Thus, the marginal value of an item is positive for bundle $S$ if it has less neighbors inside $S$ than outside $S$, otherwise it is zero or negative. 
This yields valuations that are inherently \textit{non-monotonic}: adding an item can increase or decrease an agent's utility depending on the underlying structure of the bundle. 

We are interested in allocating vertices of the graph (corresponding to items) in a fair way while guaranteeing some notion of economic efficiency among the agents.
Cut-valuations naturally model the aforementioned team formation problems that value fairness and diversity, where vertices represent individuals in a social or similarity graph and edges capture social ties or overlapping expertise.
These non-monotonic valuations can also be seen as a soft relaxation of conflict constraints commonly studied in fair division \citep{DBLP:journals/aamas/HummelH22,chiarelli2020fair,DBLP:conf/nips/LiLZ21}. Such interplay between constraint-based models and their valuation-based counterparts have been previously considered for matroids \citep{DBLP:conf/aaai/BiswasB19,cookson2025constrained,barman2020existence}, as well as cardinality constraints and level valuations \citep{DBLP:conf/ijcai/BiswasB18,christodoulou2024fair}.

 \subsection{Our Contributions}

We consider the problem of partitioning general graphs, as well as forests, with a focus on simultaneously achieving fairness and economic efficiency.
For fairness, we consider \textit{envy-freeness} (\EF{}), which informally requires that no agent strictly prefers another agents' bundle; we also consider its well-studied relaxation, \textit{envy-freeness up to one} (\EFone{}), which requires that any pairwise envy is eliminated by removal of a single item.
Under cut-valuations, allocating all vertices to a single agent is vacuously envy-free: all agents receive a value of zero.
To avoid such degenerate outcomes, we study fairness in conjunction with efficiency, including: (i) \emph{social optimality} (\SO), which maximizes total utility;
(ii) \emph{Pareto optimality} (\PO), where no agent can be made better off without harming another.
We also propose two natural and desirable efficiency notions:
(iii) \emph{transfer stability} (\NT), prohibiting single-item transfers that benefit one agent without making the other worse-off; 
and (iv) \emph{weak transfer stability} ($\wNT$), forbidding transfers that strictly benefit both agents involved.\footnote{In \Cref{sec:prev_tech_not_work}, we demonstrate that techniques developed for monotone valuations fail to extend to the cut-valuation setting.}

\paragraph{General graphs.}
Our first set of results reveal an axiomatic non-monotonic relationship between existence guarantees for \EFone{}+\NT{} allocation and the number of agents $n$ under cut-valuations: They exist for $n=2$ (\Cref{prop:n=2}), may fail to exist for $n=3$ (\Cref{thm:general:non_exist}), but exist again for all $n\geq 4$ (\Cref{thm:general:NT4}).
The $n=3$ counter-example also shows the incompatibility of $\EFone$ with $\PO$ and $\SO$. 
In contrast, the existence of $\EFone$ and $\SO$ allocations follows a monotonic pattern: They always exist for $n=2$, but may not exist when $n\geq 3$ (\Cref{thm:general:EF1_SO}).

This raises a natural question: \textit{Is there an efficiency notion that is always compatible with \EFone{}, regardless of $n$?}
We show that a slight relaxation of \NT{} (\wNT{}) is the strongest efficiency notion which can always be guaranteed alongside \EFone{}, irrespective of $n$ (\Cref{thm:general:ne}). 
Moreover, such allocations can be computed in polynomial time.
We also show that allocations that are $\frac{1}{2}$-\EFone{}, a multiplicative approximation of \EFone{}, can co-exist alongside \PO{} for arbitrary $n$ (\Cref{thm:approximate_EF1_PO}).  

We note that our $\EFone$+$\wNT$ result (\Cref{thm:general:ne}) also improves upon prior guarantees for equitable graph cut. Specifically, we show that the vertices of a graph can be partitioned into $n$ parts such that the absolute difference between the cut-values of any two parts is bounded by the maximum degree $\Delta$ in the graph (\Cref{cor:equitable_cuts}). This improves on the previous bound of $5\Delta + 1$ by \citet{barman2025fair}. 

\paragraph{Forests.}
When the underlying graph is a forest (i.e. acyclic), we show that \EFone{} and \SO{} allocations always exist and can be computed in polynomial time (\Cref{thm:forest}). Interestingly, the standard techniques from the monotone settings---iteratively building toward an \EFone{} allocation where, on assigning each new item, \EFone{} is maintained as an invariant---do not apply to such special cases either. While our algorithm also iteratively builds towards an \EFone{} allocation, there are situations where we temporarily break intermediate $\EFone$ guarantees, but ensure that such violations can be `caught-up' through subsequent assignments.

Finally, we note that the negative counter-example we construct for showing the incompatibility of \EFone{} alongside various fairness notions under general graphs is in fact a \emph{complete bipartite graph} with treewidth $2$. 
This highlights that while strong existence and algorithmic guarantees hold for forests (bipartite graphs; treewidth 1), they break even for graphs that are only slightly different.

\subsection{Related works} \label{sec:related_works}
\paragraph{Fair Division under Non-monotone Valuations}
Early works on non-monotone valuations focus on \emph{doubly monotone} (often \emph{additive}) valuations, where each item is either always a good (positive value) or a chore (negative value) for an agent, regardless of the bundle.  
In such settings, \EFone{} allocations always exist \citep{DBLP:journals/aamas/AzizCIW22,bhaskar2020approximate}.
The existence of \EFone{}+\PO{} allocations remains open, even in the simple additive case \citep{liu2024mixed}.  
While such allocations are known for (additive) goods-only instances \citep{caragiannis2019unreasonable}, their existence in chores-only settings was only recently established \citep{mahara2025existence}. 
Recent work either relaxes \EFone{} \citep{barman2025fairandefficient} or focuses on specific valuations \citep{hosseini2023fairly}.
More general non-monotone settings---where items act as goods or chores depending on the bundle---have been recently explored under special classes such as subadditive, non-negative valuations\footnote{We note that cut-valuations studied in this paper are a subclass of non-negative and were discussed in \citep{barman2025fair}.}, identical trilean, and separable single-peaked \citep{barman2025fair,bilo2025almost,bhaskar2025towards}.

\paragraph{Fair Division of Graphs}
Graphs appear in fair division both as feasibility constraints (e.g. conflicting or connectivity \citep{DBLP:journals/aamas/HummelH22,chiarelli2020fair,DBLP:conf/nips/LiLZ21,kumar2024fair,igarashi2025dividing,DBLP:conf/ijcai/BouveretCEIP17,bei2022price,bilo2022almost}), and to define agents' valuations (e.g. from graphical structures such as shortest paths \citep{hosseini2025fair,hosseini2025algorithmic}, minimum vertex cover \citep{DBLP:conf/www/WangL24}, or edge-based models where each item (an edge) is valued by exactly two agents (the incident vertices) \citep{christodoulou2023fair,deligkas2024ef1,zhou2024complete}).
A related line of research considers partitioning of friends into groups that are balanced and fair \citep{li2023partitioning,deligkas2025balanced}. Here, an agents utilities are monotone and based on intra-group connections. In contrast, our model captures non-monotonic, bundle-dependent preferences, arising in problems where inter-group connections are valued.

\section{Preliminaries}

Given any $j,k\in \mathbb{N}$ such that $j\leq k$, let $[k] := \{1,\ldots, k\}$, and $[j,k] := \{j, j+1, \ldots, {k}\}$. 

An instance of the cut-valuation problem is denoted by $\langle N, G, v \rangle$, where $N = \{1, 2, \ldots, n\}$ is the set of $n$ agents, $G = (V, E)$ is an undirected graph with $|V|=m$ vertices and $|E|$ edges, and a cut-valuation $v$. The vertex set $V$ corresponds to the set of items to be distributed among the agents; throughout the paper, we use items and vertices interchangeably. 
The edge between any two adjacent vertices $o, o^\prime$ is denoted by $e = (o,o')$. 
The neighbors of $o$ within the set $S$ is denoted by $N_S(o) : = \{o'\in S :\ (o, o') \in E\}$. The degree of $o$, denoted by $\deg(o) := |N_{V}(o)|$, is the number of its neighbors in $G$.
A graph is a forest if it is acyclic. Note that a graph may contain multiple disconnected components.

\paragraph{Cut-Valuations.}
The cut-valuation $v : 2^V \to \mathbb{R}$ defines the cardinal preferences of the agents for any subset of items in $V$. 
Specifically, the cut-value of any subset $S$ is $v(S) \coloneq |\{ (o,o')\in E: o\in S\ \text{and}\ o'\in V \setminus S\}|$, which is the number of edges that have one endpoint in $S$ and the other endpoint in $V\setminus S$. 
Crucially, the cut-valuation represents a subclass of \textit{non-monotone} valuations, where an item $o$ has a \textit{positive marginal value} for a bundle $S$ if and only if $o$ has more neighbors outside of $S$ than in $S$, i.e. $|N_S(o)| < |N_{V\setminus S}(o)|$.
Otherwise, $o$ may have a \textit{negative or non-positive marginal value}. 
For the ease of exposition, we refer to an item with positive marginal value (with respect to a subset) as a \textit{good}, otherwise it is called a  \textit{chore} (with respect to the subset).
Without loss of generality, we assume (i) $|V|>n$, and (ii) $G$ has no singleton components because such vertices have zero value and can be assigned arbitrarily.

\paragraph{Allocation.}

An allocation $A = (A_1, \ldots, A_n)$ is a partition of vertices (items) into $n$ pairwise disjoint subsets of $V$, i.e., for every $A_i,A_j \subseteq V$, $A_i \cap A_j = \emptyset$.
Each part $A_i$ contains a bundle of items assigned to agent $i\in N$. 
Note that the items in $A_i$ may not be adjacent. 
An allocation may be \emph{partial} when $\cup_{i \in N} A_i \subsetneq V$, or \emph{complete} when $\cup_{i \in N} A_i = V$. 
Let us review the cut-valuations and potential allocations using \Cref{ex:cutvalue}. 

\begin{example} [Allocations and cut-valuations]
\begin{figure}[htp]
\centering
\begin{tikzpicture}[every node/.style={circle, draw=none, minimum size=5mm, inner sep=1pt, font=\scriptsize}]

    \node (x) at (0,0) {$o_1$};
    \node (a) at (-1.2,-1.5) {$o_2$};
    \node (b) at (0,-1.5) {$o_3$};
    \node (c) at (1.2,-1.5) {$o_4$};
    
    \draw (x) -- (a);
    \draw (x) -- (b);
    \draw (x) -- (c);

    \node (y) at (3,0) {$o_5$};
    \node (d) at (1.8,-1.5) {$o_6$};
    \node (e) at (3,-1.5) {$o_7$};
    \node (f) at (4.2,-1.5) {$o_8$};

    \draw (y) -- (d);
    \draw (y) -- (e);
    \draw (y) -- (f);

    \draw (x) -- (y);

    \draw[blue, thick, dotted, rounded corners] (-0.35,0.35) rectangle (0.35,-0.35) node[above right, xshift=1pt, yshift=1pt] {\scriptsize $S$};

    \draw[red, thick, dashed, rounded corners] (-0.3,0.3) rectangle (0.3,-1.75);
    \node[font=\scriptsize] at (-0.5,-0.9) {\textcolor{red}{$T$}};

    \draw[green!60!black, thick, dashed, rounded corners] (-1.5,-1.2) rectangle (0.38,-1.85);
    \node[font=\scriptsize] at (-1.7,-1.3) {\textcolor{green!60!black}{$U$}};

    \draw[magenta!70!white, thick, dotted, rounded corners] (-1.4,-1.25) rectangle (-0.75,-1.8);
    \node[font=\scriptsize] at (-1.4,-2.1) {\textcolor{magenta}{$W$}};

\end{tikzpicture}
\caption{Graph with $8$ items (vertices). 
}
\label{fig:cut_value_example}
\end{figure}
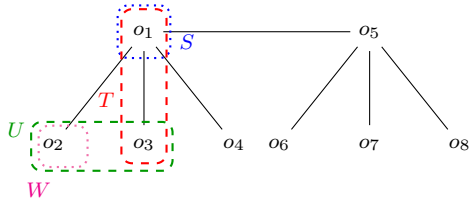

Consider the graph in \Cref{fig:cut_value_example}. The cut-value of the set $S = \{o_1\}$ is $v(S)=4$. However, the set $T$ that is obtained by adding an item $o_3$ to $S$ reduces the cut-value to $v(T)=3$.
In other words, $o_3$ is a chore for the set $S$. However, $o_3$ is a good for set $W=\{o_2\}$ since addition of $o_3$ to $W$ results in the set $U$ that has a cut-value of $v(U)=2$, whereas $v(W)=1$. 
\label{ex:cutvalue}
\end{example}

\begin{remark}[Relation to Graph Partitions]
In graph theory, a $k$-partition of a graph $G = (V, E)$ is a division of $V$ into $k$ disjoint subsets $V = V_1 \cup \cdots \cup V_k$, where $V_i \cap V_j = \emptyset$ for all $i \ne j$, where $V_i$ may be subject to additional structural constraints such as being connected subgraphs or forming independent sets.
In our model, an allocation corresponds to a complete partition of $V$ into $n$ bundles $(A_1, \ldots, A_n)$, one for each agent. 
Our framework \emph{does not} impose a connectivity constraint, i.e. multiple disconnected components may be allocated as a bundle (say $A_i$) to an agent.
\end{remark}

\paragraph{Fairness.}
An allocation $A = (A_1, \ldots, A_n)$ is \textbf{envy-free} (EF) if every agent values their own bundle at least as much as that of any other agent's bundle, i.e., $v(A_i) \geq v(A_j)$ for every two agents $i, j \in N$.
An allocation $A$ is \textbf{envy-free up to one item} (\EFone{}) if for every pair of agents $i,j\in N$ where $v(A_j) > v(A_i)$, there exists an item $o_j\in A_j$ such that $v(A_i) \geq v(A_j \setminus \{o_j\})$\footnote{We refer the reader to \Cref{sec:prev_tech_not_work} for a discussion on alternative definitions of \EFone{} used in the literature.} \citep{lipton2004approximately,budish2011combinatorial}.

In \cref{ex:cutvalue}, although $v(T) > v(W)$, the (hypothetical) removal of $o_1\in T$ eliminates the envy. Note that in this setting, the existence of an \EFone{} allocation is particularly delicate due to the non-monotonicity of valuations since removing a single item (vertex) may sometimes increase the agent's value. 
In this example, the removal of $o_3\in T$ in fact increases the value of $T$ from 3 to 4.

\paragraph{Efficiency.}
In our setting, fairness alone may not be sufficient to rule out undesirable allocations. For instance, allocating all vertices to a single agent is envy-free, however, all agents receive a value of zero.
Thus, we consider \textit{non-empty} allocations wherein each agent receives at least one item and define several plausible efficiency notions.

An allocation, $A$, is \textbf{Social Optimal} (\SO{}) if it maximizes the utilitarian social welfare, i.e., $A \in \argmax_{X \in \Pi_n(V)}\sum_{i \in N}v(X_i)$, where $\Pi_n(V)$ denotes the set of all $n$-partitions of $V$.
An allocation $A$ is \textbf{Pareto Optimal} (\PO{}) if it is not Pareto dominated by another allocation, that is, no other allocation $B$ exists such that $v(B_i) \ge v(A_i)$ for every $i \in N$ with at least one inequality being strict.


An allocation, $A = (A_1, \ldots, A_n)$, is \textbf{Transfer Stable} (\textsc{TS}) if, for every $i, j\in N$, there does not exist an item $o\in A_i$ such that $v(A_i \setminus \{o\}) \geq v(A_i)$ and $v(A_j \cup \{o\}) \geq v(A_j)$ with at least one inequality being strict. 
Allocation $A$ is \textbf{Weak Transfer Stable} (\textsc{wTS}) if, for every $i, j\in N$, there does not exist an item $o\in A_i$ such that $v(A_i \setminus \{o\}) > v(A_i)$ and $v(A_j \cup \{o\}) > v(A_j)$.
Informally, a $\NT$ allocation requires that no single item can be transferred from an agent to another without making either agent worse-off while increasing the value of at least one of the two agents. 
It is \wNT{} if no such transfer exists that necessarily makes both agents strictly better-off.
It is easy to see that $\SO \implies \PO \implies \NT \implies \wNT$ while the converse directions do not hold; see \Cref{ex:efficiency_notions}.

\begin{example}[Understanding Efficiency Notions]\label{ex:efficiency_notions}
Let us revisit the example given in \Cref{fig:cut_value_example} and consider 7 agents. 
The allocation $A=(\{o_1, o_6\}, \{o_2\}, \{o_3\}, \{o_4\}, \{o_5\}, \{o_7\}, \{o_8\})$ is $\SO$. The allocation $(\{o_1, o_5\}, \{o_2\}, \{o_3\}, \{o_4\}, \{o_6\}, \{o_7\}, \{o_8\})$ is $\PO$ but not $\SO$. 
For an example of an allocation that is $\NT$ but not $\PO$, consider the same graph but let there be $n=4$ agents. The allocation $(\{o_1, o_5\}, \{o_2, o_6\}, \{o_3, o_7\}, \{o_4, o_8\})$ satisfies $\NT$ but is not $\PO$ since it is Pareto dominated by the allocation $(\{o_1, o_6, o_7\}, \{o_5\}, \{o_2, o_3\}, \{o_4, o_8\})$. 
Finally, to illustrate an allocation that is $\wNT$ but not $\NT$, consider a cycle with $6$ nodes ($o_1, o_2, \ldots, o_6$) and $n=3$ agents. The allocation in which two consecutive nodes (say $(\{o_1, o_2\}, \{o_3, o_4\}, \{o_5, o_6\})$ are assigned to the same agent is $\wNT$ but not $\NT$.
\end{example}

\section{General Graphs}\label{sec:general_graphs}

In this section, we study general graphs that may contain cycles. 
Our first set of results reveal a non-monotonic relationship between existence guarantees for \EFone{}+\NT{} allocation and the number of agents under cut-valuations: They exist for $n=2$, may fail to exist for $n=3$, but exist again for all $n\geq 4$. Interestingly, this non-monotonic dependence is not present for \SO{}. In \Cref{sec:EF1_SO_example}, we provide a family of instances for every $n\ge 3$ agents such that no allocation is simultaneously \EFone{} and \SO{}.

\subsection{Two Agents: Existence}
To warm up, we start by considering the case of $n=2$, and give an existence guarantee for \EF{} and \SO{}: Since the cut function of a graph $G$ is symmetric, i.e. $v(S) = v(V\setminus S)$ for any $S\subseteq V$, \textit{any} partition between two agents is \EF{}. 
Moreover, an allocation is \SO{} if and only if it is \PO{}, because the cut-values are symmetric and an increase in the value of one agent leads to an increase in value for both. 

\begin{restatable}{proposition}{twoagents}\label{prop:n=2}
When $n=2$, an allocation satisfying \EF{} and \SO{} always exists, but computing such an allocation is \NPH{}.
Moreover, an allocation satisfying \EF{} and \NT{} always exists and can be computed in polynomial time. 
\end{restatable}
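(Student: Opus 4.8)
The plan is to handle the four assertions separately, in each case exploiting the symmetry $v(S) = v(V \setminus S)$ already noted for $n=2$.

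For the existence of \EF{}+\SO{}, I would first invoke the fact (stated above) that symmetry makes \emph{every} partition envy-free, so it suffices to exhibit an \SO{} allocation; since the set of partitions is finite, an \SO{} allocation trivially exists, and any such allocation is automatically \EF{}. For the \NPH{}ness, I would observe that for $n=2$ the social welfare of $(A_1, A_2)$ equals $v(A_1) + v(A_2) = 2\,v(A_1)$, so an allocation is \SO{} precisely when $A_1$ realizes a maximum cut of $G$. Because every allocation is \EF{}, the set of \EF{}+\SO{} allocations coincides exactly with the set of maximum cuts, and computing one is therefore equivalent to solving \textsc{Max-Cut}, which is \NPH{}. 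This yields a direct reduction with no gadgetry.

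For \EF{}+\NT{}, the key step is to characterize \NT{} in terms of local optimality for \textsc{Max-Cut}. I would compute the effect of transferring a vertex $o$ from $A_i$ to the other bundle: using symmetry, the two inequalities in the definition of \NT{} both reduce to the single condition $v(A_i \setminus \{o\}) \ge v(A_i)$, and a short calculation gives $v(A_i \setminus \{o\}) - v(A_i) = |N_{A_i}(o)| - |N_{V\setminus A_i}(o)|$. Hence an allocation violates \NT{} exactly when some vertex has strictly more neighbors inside its own bundle than outside it, i.e. when the partition is not locally optimal under single-vertex flips. Existence and efficient computation then follow from standard local search: starting from an arbitrary partition, repeatedly move any vertex with strictly more same-side than cross-side neighbors. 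Each move increases the cut value by at least one, and the cut is bounded by $|E|$, so the procedure terminates in at most $|E|$ steps. The terminal partition is a local optimum, hence \NT{}, and is \EF{} by symmetry. I would finally verify non-emptiness: since $G$ has no isolated vertices, placing all vertices in one bundle has cut $0$ and admits an improving flip, so the terminal local optimum has positive cut value and both bundles are nonempty.

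The main obstacle is the \NT{}-to-local-optimality characterization—specifically, confirming that symmetry genuinely collapses the two transfer inequalities into the single condition $v(A_i \setminus \{o\}) \ge v(A_i)$ (so that ``at least one strict'' becomes a single strict inequality), and correctly tracking the sign of the marginal change $|N_{A_i}(o)| - |N_{V\setminus A_i}(o)|$. Once this equivalence is established, both the hardness reduction and the local-search algorithm are routine.
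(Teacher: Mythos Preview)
Your proposal is correct and follows essentially the same approach as the paper: symmetry makes every $2$-partition \EF{}, so \EF{}+\SO{} reduces to \textsc{Max-Cut} (giving existence and \NPH{}ness), and \EF{}+\NT{} is achieved by the standard local-search/greedy procedure that repeatedly moves a vertex whenever this strictly increases the cut, terminating in at most $|E|$ steps. Your explicit characterization of \NT{} as local optimality and your non-emptiness check are slightly more detailed than the paper's presentation, but the underlying argument is the same.
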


\begin{proof}
     For two agents, every allocation is \EF{} due to symmetry of cut-valuations. 
     Thus, the problem reduces to finding an \SO{} partition.
     The symmetry of valuations means that the partition corresponding to the maximum cut is \SO{}. That is, a \textsc{max-cut} results in \EF{}+\SO{}; however,
     since computing a \textsc{max-cut} is \NPH{} \citep{garey1974some}, the hardness follows.
     
     Focusing on \NT{}, we devise a simple greedy algorithm as follows:
     Start with an arbitrary $2$-partition of the graph, say $A = (V, \emptyset)$, and iteratively transfer an item from one bundle to the other such that the value of the bundles increases. 
     The value can increase at most $O(m^2)$ times since, the value increases by at least $1$ in each round and for any set $S$, $v(S) \leq |E|$ where $|E|\in O(m^2)$ for simple graphs. 
     Again, by symmetry of cut-valuations for 2-partitions, the final allocation after the greedy process is also \EF{}.
\end{proof}

\subsection{Three Agents: Non-existence} \label{sec:general_graphs_general}

While for $n=2$, \EF{}+\NT{} always admits a polynomial-time solution, we show that for $n = 3$, even \EFone{}---a strictly weaker notion than \EF{}---is incompatible with \NT{} (and thus, with \PO{} and \SO{} as well).
%

\begin{restatable}[Non-existence of \EFone{}+\NT{} allocations]{theorem}{nonexist}
\label{thm:general:non_exist}
    There exists a cut-valuation instance with $n=3$ agents where no \EFone{} allocation is \NT{} (and thus \PO{} and \SO{}). 
\end{restatable}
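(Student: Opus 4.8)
The plan is to exhibit a single small instance and prove, by an exhaustive structural analysis, that no allocation can be simultaneously \EFone{} and \NT{}; since $\SO \implies \PO \implies \NT$, the claimed incompatibility with \PO{} and \SO{} then follows for free. I would take $G$ to be a complete bipartite graph $K_{2,t}$ with an odd number of leaves $t$ (the smallest working choice appears to be $t=3$, giving five vertices and matching the advertised treewidth-$2$ complete-bipartite structure), together with $n=3$ agents. Call the two degree-$t$ vertices \emph{hubs} ($u_1,u_2$) and the $t$ degree-$2$ vertices \emph{leaves}. The first step is to record the cut value in closed form: a bundle with $a\in\{0,1,2\}$ hubs and $b$ leaves has value $at+2b(1-a)$, i.e. $2b$ when $a=0$, exactly $t$ when $a=1$ (independent of $b$), and $2t-2b$ when $a=2$. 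This collapses the entire combinatorics to the pair of integers $(a_i,b_i)$ per agent, subject to $\sum_i a_i = 2$ and $\sum_i b_i = t$, so that the hubs are distributed either as $(2,0,0)$ or $(1,1,0)$.

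Second, I would read off how a single transfer changes values, which is the engine of the argument. Moving a leaf \emph{out} of a bundle raises its value by $2$ if the bundle holds two hubs, leaves it unchanged if it holds one hub, and lowers it by $2$ if it holds no hub; symmetrically, moving a leaf \emph{into} a bundle raises the value by $2$ (zero hubs), does nothing (one hub), or lowers it by $2$ (two hubs). Moving a hub changes the donor's value by $2b-t$ and the recipient's by $t-2b'$, where $b,b'$ are their respective leaf counts. A \NT{}-violating transfer is precisely one in which neither party loses and at least one strictly gains, so these signs immediately classify which configurations are transfer-unstable.

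Third comes the case analysis on the hub pattern. For $(1,1,0)$ (two agents hold one hub each, the third holds only leaves), any leaf held by a hub-agent can be shifted to the leaf-only agent without hurting the donor (its value stays $t$) while strictly helping the recipient; hence \NT{} forces both hub-agents to be leafless, leaving values $(t,t,2t)$, which fails \EFone{} once $t\ge 3$ since deleting one leaf from the size-$2t$ bundle drops it only to $2t-2>t$. For $(2,0,0)$ (one agent holds both hubs), if that agent holds any leaf it can be handed to a leaf-only agent with both parties strictly improving, so \NT{} forces the two-hub agent to be leafless with value $2t$; \EFone{} then demands that each leaf-only agent reach value $t$ after one hub is deleted from the big bundle, i.e. $2b_i\ge t$ for both, forcing $b_2,b_3\ge t/2$ with $b_2+b_3=t$, which is impossible for odd $t$. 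Both patterns therefore fail, so no \EFone{} allocation is \NT{}.

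The main obstacle I anticipate is the \EFone{} verification rather than the transfer bookkeeping: because valuations are non-monotone, deleting an item from an envied bundle can \emph{raise} its value, so for each surviving \NT{} candidate I must delete the value-minimizing item (a hub from a pure-hub bundle drops it to $t$; a leaf from a leaf-only bundle drops it by $2$) and confirm the envy still cannot be repaired. A secondary care point is the non-emptiness constraint, which restricts the admissible leaf distributions and must be tracked so the degenerate ``all-to-one'' allocations are not miscounted; choosing $t$ odd is exactly what renders the balanced leaf split that \EFone{} requires in the $(2,0,0)$ pattern infeasible, which is what makes the construction go through.
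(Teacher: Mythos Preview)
Your proposal is correct and uses the same counterexample as the paper (the complete bipartite graph $K_{2,t}$ with odd $t\ge 3$ and three agents), with the same underlying observations about how leaf transfers affect hub-holders versus leaf-only agents. The only difference is organizational: the paper first uses \EFone{} to force the hubs into different bundles and then exhibits a \NT{}-violating leaf transfer, whereas you case-split on the hub distribution, impose \NT{} first to strip leaves off hub-holders, and then show each surviving configuration fails \EFone{}; both routes arrive at the same parity obstruction on the leaf split.
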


\begin{proof}
\begin{figure}[htp]
    \centering
    
    \begin{tikzpicture}
    \small
    \node[circle, draw=red, thick, minimum size=0.8cm] (Top) at (0, 2) {$o_a$};
    
    \node[circle, draw=blue, thick, minimum size=0.8cm] (Bottom) at (0, 0) {$o_b$};

    \node[circle, draw=yellow, thick, minimum size=0.8cm] (C1) at (-3, 1) {$o_{c_1}$};
    \node[circle, draw=yellow, thick, minimum size=0.8cm] (C2) at (-1.5, 1) {$o_{c_2}$};
    \node[circle, draw=red, thick, minimum size=0.8cm, inner sep=0pt] (C5) at (1.5, 1) {$o_{c_{d-1}}$};
    \node[circle, draw=red, thick, minimum size=0.8cm] (C6) at (3, 1) {$o_{c_d}$};
    
    \node at (0, 1) {$\cdots$};

    \draw[-] (Top) -- (C1);
    \draw[-] (Top) -- (C2);
    \draw[-] (Top) -- (C5);
    \draw[-] (Top) -- (C6);

    \draw[-] (Bottom) -- (C1);
    \draw[-] (Bottom) -- (C2);
    \draw[-] (Bottom) -- (C5);
    \draw[-] (Bottom) -- (C6);
\end{tikzpicture}
\caption{An example showing the incompatibility of \EFone{} and transfer stability (\NT{}) for $n=3$.}
\label{fig:nonexistence}
\end{figure}
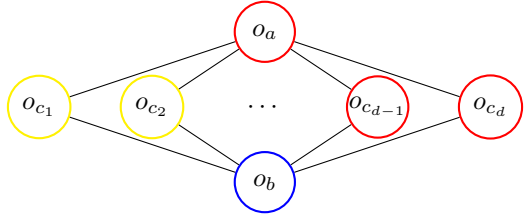

Consider the instance with $3$ agents and $d+2$ items, $o_a, o_b$ and $o_{c_t}$ where $t\in[d]$ as shown in \Cref{fig:nonexistence}.
Suppose $d \geq 3$ and that $d$ is odd. In any $\EFone$ allocation, $o_a$ and $o_b$ must be allocated to different agents; if not, i.e., if $A_1 = \{o_a, o_b\}$, then one of the remaining two agents will receive a value of at most $2\floor*{\frac{d}{2}} < d = v(A_1 \setminus \{o_a\}) = v(A_1 \setminus \{o_b\})$, which violates $\EFone{}$. Thus, $o_a$ is allocated to agent 1 (red) and $o_b$ to agent 2 (blue).

After allocating $\ceil*{\frac{d}{2}}$ of the remaining items to the third agent (yellow), no new item can be allocated to $A_3$ without breaking \EFone{} because each item has marginal value of two for agent 3, and on addition of any new item $o\in \{ o_{\ceil*{\frac{d}{2}}+1}, \ldots, o_{c_d} \}$, we have $v(A_3 \cup \{o\} \setminus \{o'\}) = 2\lceil\frac{d}{2}\rceil > d = v(A_1) = v(A_2)$ for any $o'\in A_3\cup \{o\}$. Thus, all the remaining items must be (wastefully) allocated to either agent $1$ or $2$ in order to maintain \EFone{}. In \Cref{fig:nonexistence}, they are arbitrarily allocated to agent 1 (red).
But the marginal values of these items are 0 for both agents 1 and 2; thus, to maintain \NT{} these items should be transferred to agent 3, which whereas violates \EFone{}. 
\end{proof}  

While \Cref{thm:general:non_exist} shows the incompatibility of \EFone{} with \NT, \PO, and \SO{}, the \EFone{} allocation marked in \Cref{fig:nonexistence} satisfies the weaker \wNT{} condition. In fact, as we show in \Cref{sec:arbitrary_n_positive_results}, allocations that are both \EFone{} and \wNT{} always exist and can be computed in polynomial time.

\subsection{Four or More Agents: Existence}\label{sec:four_and_more}

Surprisingly, when $n\ge 4$, \EFone{} and \NT{} allocations always exist and can be computed in polynomial time. 

\begin{restatable}[\EFone{}+\NT{} allocations when $n\ge 4$]{theorem}{thmEFTSnfour}\label{thm:general:NT4}
When $n \ge 4$, a complete allocation satisfying \EFone{} and \NT{} always exists and can be computed in polynomial time.
\end{restatable}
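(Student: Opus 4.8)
The plan is to obtain \NT{} essentially for free from a potential/local-search argument, and then to work hard to layer \EFone{} on top of it, with the hypothesis $n\ge 4$ entering precisely at the point where the $n=3$ obstruction of \Cref{thm:general:non_exist} would otherwise reappear.

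First I would record the basic marginal identities for cut-valuations. Moving an item $o$ from $A_i$ to $A_j$ changes the two affected bundle values by $v(A_i\setminus\{o\})-v(A_i)=|N_{A_i}(o)|-|N_{V\setminus A_i}(o)|$ and $v(A_j\cup\{o\})-v(A_j)=\deg(o)-2|N_{A_j}(o)|$, so the total change in social welfare equals $2\bigl(|N_{A_i}(o)|-|N_{A_j}(o)|\bigr)$. The key consequence is that any transfer witnessing a \NT{} violation (both agents weakly better, one strictly) strictly increases the social welfare, so every local maximum of social welfare under single-item transfers is automatically \NT{}. Such a local maximum is reachable in polynomial time because the social welfare is an integer in $[0,2|E|]$ that strictly increases at every improving move; this already yields the \NT{} half of the statement, exactly as in the greedy argument of \Cref{prop:n=2}.

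Second, and this is the crux, I would enforce \EFone{} without destroying \NT{}. The idea is to hill-climb on the lexicographic objective (social welfare first, then a balance penalty such as $-\sum_i v(A_i)^2$ or $-\max_i v(A_i)$), breaking ties among welfare-optimal moves toward more balanced profiles. A welfare-preserving rebalancing move is never a \NT{} violation, since one agent strictly gains while the other strictly loses and hence they are not \emph{both} weakly better; thus \NT{} is preserved throughout while the process still terminates. At the terminal allocation every item $o\in A_i$ satisfies $|N_{A_i}(o)|\le|N_{A_j}(o)|$ for all $j$, which forces $|N_{A_i}(o)|\le\deg(o)/n\le\deg(o)/4$; this fraction bound is exactly where $n\ge 4$ is used, and it guarantees that each bundle keeps at most a quarter of the neighbors of any of its vertices inside, so that removing a suitable boundary vertex drops the bundle value substantially.

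Finally I would show that the combination of (i) the $\le 1/n$ internal-neighbor bound and (ii) balance-local-optimality yields \EFone{}. For a pair $i,j$ with $v(A_j)>v(A_i)$, I would exhibit a single item $o\in A_j$ whose removal brings $v(A_j\setminus\{o\})$ down to at most $v(A_i)$, arguing that if no such item existed then a welfare-preserving rebalancing transfer out of $A_j$ would still be available, contradicting termination; here having $\ge 4$ bundles is what guarantees a legal recipient that is not itself boosted into being envied, which is precisely the degree of freedom absent at $n=3$, where the leftover zero-marginal items had nowhere to go without creating a profitable transfer. I expect this last step---simultaneously certifying \EFone{} and ruling out a contradicting rebalancing move---to be the main obstacle, and I would treat the high-degree ``hub'' vertices (the analogue of $o_a,o_b$ in \Cref{fig:nonexistence}) as a separate case, isolating each hub in its own bundle and spreading the remaining low-marginal items evenly across the $\ge 2$ leftover bundles so that no recipient's value rises enough to violate either \EFone{} or \NT{}.
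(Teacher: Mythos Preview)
Your approach has a genuine gap at the step you yourself flag as the main obstacle: lexicographic local optimality with respect to $(\text{welfare},\text{balance})$ does \emph{not} imply \EFone{}, even for $n\ge 4$. Take $G=K_{3,3}$ with parts $\{o_1,o_2,o_3\}$ and $\{a_1,a_2,a_3\}$, and $n=4$. The allocation $A_1=\{a_1\}$, $A_2=\{a_2\}$, $A_3=\{a_3\}$, $A_4=\{o_1,o_2,o_3\}$ has value profile $(3,3,3,9)$ and already attains the global welfare maximum $2|E|=18$. Every single-item transfer out of $A_4$ strictly decreases welfare (since $|N_{A_4}(o_i)|=0<1=|N_{A_k}(o_i)|$ for each $k\le 3$), and the only welfare-preserving transfers shuffle some $a_k$ among $A_1,A_2,A_3$, which never improves either of your proposed balance penalties and never touches $A_4$. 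So your hill-climb is stuck here; yet $v(A_4\setminus\{o_i\})=6>3=v(A_1)$, so \EFone{} fails. Your bound $|N_{A_i}(o)|\le\deg(o)/n$ is correct but only guarantees that removing an item drops the envied bundle's value by at least $\deg(o)(1-2/n)$, which here equals $3$---far short of the gap $6$. The ``hub'' heuristic does not rescue this either, since all six vertices of $K_{3,3}$ have the same degree. (An \EFone{}+\NT{} allocation does exist for this instance, e.g.\ $(\{o_1\},\{o_2\},\{o_3,a_1\},\{a_2,a_3\})$ with profile $(3,3,4,6)$; the point is that your local search cannot reach it from the above starting point.)

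The paper's proof is structurally different. Rather than optimizing welfare and hoping \EFone{} emerges, it directly attacks \EFone{} violations using the potential $\Phi=(v(A_1),-n_1)$ (least bundle value, then minus the number of bundles attaining it), interleaved with a \NT{}-restoring subroutine that never decreases $\Phi$. The role of $n\ge 4$ is not your $\deg(o)/n$ fraction but the structural fact (\Cref{claim:chore_for_two_agents}) that any item is a chore for at most two agents, so with $n\ge 4$ there are always at least \emph{two} agents for whom it has strictly positive marginal. This is exactly what lets the algorithm, in the delicate case where every item of the \EFone{}-envied bundle $A_{i^*}$ is a chore for agent $1$, off-load such items to a third party $k\ne 1,i^*$ while simultaneously keeping one designated ``special'' bundle frozen during the \NT{}-subroutine so that progress in $\Phi$ is not undone. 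Your framework has no analogue of this routing-through-intermediaries mechanism, which is precisely why it stalls on $K_{3,3}$.
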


Before presenting the main algorithm for our constructive proof, we observe that, under identical valuations, 
if the allocation is ordered from the least to the highest valued bundle, then the allocation is \EFone{} \textit{if and only if} the agent with the least-valued bundle is not involved in any \EFone{} violation; we refer the readers to \Cref{sec:proofs_four_or_more} for the proof.
This lemma enables us to restrict attention to \EFone{} violations involving only the agent with the least-valued bundle.

\begin{restatable}{lemma}{identicalsufficesagentone}\label{lem:identical}
    Given an instance with identical valuations and allocation $A=(A_1, \ldots, A_n)$ such that $v(A_1) \leq \ldots \leq v(A_n)$, $A$ is \EFone{} if and only if, for every $j\in N$ such that $v(A_j) > v(A_1)$, there exists an item $o_j\in A_{j}$ s.t. $v(A_1) \geq v(A_j \setminus \{o_j\})$. 
\end{restatable}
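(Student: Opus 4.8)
The plan is to prove both directions directly from the definition of \EFone{}, leveraging two structural facts: the ordering $v(A_1) \leq \cdots \leq v(A_n)$ and, crucially, the fact that all agents share a single valuation $v$.

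The forward (only-if) direction is immediate. If $A$ is \EFone{}, then by definition the envy-elimination guarantee holds for \emph{every} ordered pair of agents, and in particular for every pair $(1,j)$ with $v(A_j) > v(A_1)$. Specializing the \EFone{} condition to $i=1$ recovers exactly the stated property, so nothing further is needed here.

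For the reverse (if) direction, I would fix an arbitrary pair $(i,j)$ with $v(A_j) > v(A_i)$ and exhibit a single item of $A_j$ whose removal neutralizes $i$'s envy. Since the bundles are sorted, $v(A_i) \geq v(A_1)$, and combining this with $v(A_j) > v(A_i)$ yields $v(A_j) > v(A_1)$. Hence the hypothesis applies to index $j$, producing an item $o_j \in A_j$ with $v(A_1) \geq v(A_j \setminus \{o_j\})$. Chaining inequalities gives $v(A_i) \geq v(A_1) \geq v(A_j \setminus \{o_j\})$, so the very same item $o_j$ witnesses the \EFone{} requirement for the pair $(i,j)$. Pairs with $v(A_j) \leq v(A_i)$ impose no constraint, so over all pairs the allocation is \EFone{}.

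The step doing the real work—and the point worth flagging rather than a genuine obstacle—is the substitution of the bound obtained for agent $1$ into the envy condition of agent $i$. This is legitimate precisely because valuations are identical: the quantity $v(A_j \setminus \{o_j\})$ is computed by the same function no matter which agent evaluates it, so a single item selected to extinguish the least-valued agent's envy simultaneously extinguishes the envy of every agent sitting above it in the ordering. Were the valuations agent-specific, this reuse of the witness item would break down, which is exactly why the lemma is confined to the identical-valuation setting.
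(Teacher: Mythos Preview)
Your proof is correct and follows essentially the same approach as the paper: both directions hinge on the chain $v(A_i) \geq v(A_1) \geq v(A_j \setminus \{o_j\})$, which lets the witness item for agent~$1$ serve every agent $i$ precisely because the valuation is identical. Your write-up is, if anything, slightly more careful than the paper's in explicitly noting that $v(A_j) > v(A_i) \geq v(A_1)$ is what triggers the hypothesis for index~$j$.
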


\Cref{alg:general:NT} operates over the space of complete allocations and performs local search to eliminate \EFone{} violations. A subroutine (\Cref{alg:general_NT_subroutine}) is repeatedly invoked to enforce \NT{} by reassigning items whose marginal value to the current holder is weakly negative.

\begin{algorithm}[htp]
\begin{algorithmic}[1]
\small
\STATE Initialize $A = (A_1, \ldots,A_n)$ arbitrarily with any complete allocation.
\STATE Relabel bundles s.t. $v(A_1) \le \cdots \le v(A_n)$. 
\STATE $A\gets \textsc{TS-Subroutine}(\text{None}; A)$ 
\WHILE{($\exists$ $\EFone$ violation from agent $1$ to some other agent)}\label{line:EF1_while_NT}
\begin{tcolorbox}[colback=gray!5!white,colframe=white!75!black]\scriptsize
 \COMMENT{Case I (Lines \ref{line:case1_NT}--\ref{line:case1_NT_end}): When there exists an item in an \EFone{}-violator's bundle with positive marginal value for $A_1$. }
 \end{tcolorbox}
\WHILE{(there exists \EFone{} violation towards an agent $i$, s.t. $v(A_1 \cup \{o\}) > v(A_1)$ for some $o\in A_i$)} \label{line:case1_NT}
    \STATE Transfer item $o$ to $A_1$
    \STATE Relabel bundles s.t. $v(A_1) \le \cdots \le v(A_n)$
    \STATE $A \gets \textsc{TS-Subroutine}(\text{None}; A)$ 
    \label{line:NT-subroutine-1}
\ENDWHILE\label{line:case1_NT_end}
 \begin{tcolorbox}
[colback=gray!5!white,colframe=white!75!black]\scriptsize
 \COMMENT{Case II (Lines \ref{line:case2_NT}--\ref{line:NT-subroutine-2}): When all items in the \EFone{}-violator's bundle have non-positive marginal value for $A_1$. }
 \end{tcolorbox}
    \WHILE{(there exists \EFone{} violation from agent $1$ to 
    special agent $i^{*}$ and 
    $v(A_1) \geq v(A_1 \cup \{o\})$ for every $o\in A_{i^*}$)} \label{line:case2_NT}
    \STATE Transfer any such item $o$ to another agent, $k\neq i^{*}$, s.t. $v(A_k \cup \{o\}) > v(A_k)$ 
      
\ENDWHILE \label{line:case2_NT_end}

\STATE Relabel bundles s.t. $v(A_1) \le \cdots \le v(A_n)$.
    \STATE $A \gets \textsc{TS-Subroutine}(i^*; A)$ 
\label{line:NT-subroutine-2}
\ENDWHILE
\end{algorithmic}
\caption{Computing $\EFone$ + $\textsc{TS}$ allocations on general graphs when $n\ge 4$
} 
\label{alg:general:NT}
\end{algorithm}

\paragraph{Algorithm Description.}
\Cref{alg:general:NT} starts with any arbitrary complete allocation.
Since agents have identical valuation functions, any bundle can be assigned to any agent. Thus, throughout the algorithm, the bundles are sorted in non-decreasing order of value i.e. $v(A_1) \le v(A_2) \le \ldots \le v(A_n)$. 
The algorithm proceeds by iteratively resolving pairwise \EFone{} violations in two cases: 
In \textbf{Case I}, we run a \textit{local search} repeatedly to find a bundle, say $A_i$, such that 
i) $A_i$ violates \EFone{} with respect to the lowest valued bundle $A_1$, and
ii) there exists an item $o\in A_i$ with strictly positive marginal value for agent 1, i.e. $v(A_1 \cup \{o\}) > v(A_1)$. In this case, we transfer the item to $A_1$, update the labels, and repeat.
Otherwise, we are in the case (\textbf{Case II}) where agent 1's violation of \EFone{} towards another bundle, say $A_{i^*}$, cannot be remedied by transferring an item from $A_{i^*}$ to $A_1$ since all items in $A_{i^*}$ are of a non-positive marginal value for agent $1$. In this case, the algorithm transfers an item $o'\in A_{i^*}$ to another bundle, say $A_k$, for whom $v(A_{k} \cup \{o'\}) > v(A_k)$. 
This process repeats until there is no \EFone{} violations from agent $1$ to $i^*$.
The key observation is that when the algorithm is in Case II, there is an \EFone{} violation with respect to only a single agent (see \Cref{ob:general:ne:case2} below); thus, it suffices to consider only a single special agent $i^*$.
Note that to maintain efficiency, after each transfer, the algorithm invokes the \NT{} subroutine (\Cref{alg:general_NT_subroutine}) described next.

\paragraph{Maintaining Efficiency \NT{}-Subroutine.}
This subroutine (\Cref{alg:general_NT_subroutine}) takes as input a complete allocation and a designated \textit{special agent} (call it $i^*$), and returns a \NT{} allocation in which no agent is worse-off compared to the input allocation.
The special agent is used solely to ensure that, in certain cases, its bundle remains unchanged; which becomes critical when arguing the convergence of the algorithm (see the next paragraph). 
The subroutine iteratively identifies an agent holding an item with non-positive marginal value and transfers that item as follows: If the agent with the non-positive marginal value item is the one with the least-valued bundle (aka agent $1$), the item is transferred to any bundle (excluding the special agent $i^*$) where it has strictly positive marginal value; 
otherwise, it is transferred to the least-valued bundle i.e., $A_1$, if it provides strictly positive marginal value there; if not, it is transferred to any agent (excluding the special agent $i^*$) for whom the item has strictly positive marginal value. This process continues until no such item remains, yielding a \NT{} allocation.

\begin{algorithm}[htp]
\begin{algorithmic}[1]
\small
\REQUIRE Special agent $i^*$ and a complete allocation $A=(A_1, \ldots, A_n)$ s.t. $v(A_1) \le \cdots \le v(A_n)$
\ENSURE  A \NT{} allocation s.t. $v(A_1) \le \cdots \le v(A_n)$ 
    \WHILE{(there exists an agent $i\in N$ with an item $o\in A_i$\ \text{s.t.}\ $v(A_i) \leq v(A_i\setminus\{o\})$ }\label{line:general_NT}
        \STATE If $i = 1$ (least valued bundle), then transfer $o$ to an agent $k\neq i^*$ s.t. $v(A_k \cup \{o\}) > v(A_k)$;
        \STATE Else if $i\neq 1$ and $v(A_{1}\cup \{o\}) > v(A_1)$, then transfer $o$ to $A_1$;
        \STATE Else transfer $o$ to any agent $k\neq i^*$ s.t. $v(A_{k}\cup \{o\}) > v(A_k)$.
         \STATE Relabel the bundles s.t. $v(A_1) \le \cdots \le v(A_n)$. 
    \ENDWHILE
\end{algorithmic}
\caption{Transfer Stability (\textsc{TS-Subroutine})} 
\label{alg:general_NT_subroutine}
\end{algorithm}

\paragraph{The Importance of $n\geq 4$ and Special Agent.}

The structure of cut-valuations imposes a limit on the number of agents who can view a given item as a chore (non-positive marginal value); as we show in \Cref{claim:chore_for_two_agents}, this can be at most two such agents.
This means that when $n\ge 3$, in a $\NT$ allocation, each item must be assigned to an agent who values it positively. 
As shown in \Cref{thm:general:non_exist}, sometimes this constraint conflicts with the \EFone{} requirement.

When $n\geq 4$, this bound on the number of agents guarantees the existence of \emph{multiple} agents for whom an item is positively valued.
This enables series of transfer to obtain \EFone{} when direct transfers to the least valued agent may not be possible: An item can be transferred to another agent who values it positively, and if needed, another item from their bundle can be passed to $A_1$.  
To ensure polynomial-time termination, we sometimes need to treat an agent as ``special'' and not alter its bundle during the \textsc{TS-Subroutine}.
In such cases, we again rely on the $n\ge 4$ condition to guarantee that there exists another agent---distinct from the special agent---who values an item positively. %

\begin{restatable}{lemma}{chorestwoagents}\label{claim:chore_for_two_agents} 
    Given a cut-valuation instance and a (partial) allocation, any item $o\in V$ has non-positive marginal value for at most two agents. 
    Moreover, when $n\geq 4$,  item $o$ has positive marginal value for at least two agents.
\end{restatable}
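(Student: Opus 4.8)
The plan is to reduce the entire statement to a short counting argument, after first pinning down a clean closed form for the marginal value of $o$ with respect to an arbitrary bundle. Starting from the cut-value definition, I would observe that adjoining $o$ to a set $S$ flips the status of exactly the edges incident to $o$: every edge to a neighbor already inside $S$ stops being cut (contributing $-1$), while every edge to a neighbor outside $S$ becomes cut (contributing $+1$). Hence the marginal value of $o$ with respect to $S$ equals $|N_{V\setminus S}(o)| - |N_S(o)| = \deg(o) - 2|N_S(o)|$. The one bookkeeping subtlety is the case where $o$ already lies in $A_i$: there the relevant quantity is $v(A_i) - v(A_i\setminus\{o\})$, but since $N_{A_i\setminus\{o\}}(o) = N_{A_i}(o)$ this evaluates to the same expression. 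So uniformly, writing $d_i := |N_{A_i}(o)|$ for the number of $o$'s neighbors lying in $A_i$, the marginal value of $o$ for agent $i$ is $\deg(o) - 2 d_i$, which is non-positive precisely when $d_i \geq \deg(o)/2$.

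Next I would exploit disjointness of the bundles. Since the $A_i$ are pairwise disjoint, the quantities $d_i$ count disjoint subsets of $o$'s neighborhood, so $\sum_{i\in N} d_i \leq \deg(o)$ (with equality when the allocation is complete). For the first claim, suppose toward a contradiction that three distinct agents each view $o$ as a chore, i.e. $d_i \geq \deg(o)/2$ for three indices. Summing these three inequalities gives $\tfrac{3}{2}\deg(o) \le \sum_{i\in N} d_i \le \deg(o)$, which forces $\deg(o) \le 0$ and contradicts the standing assumption that $G$ has no singleton components, so that $\deg(o) \ge 1$. Hence at most two agents can have non-positive marginal value for $o$.

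For the ``moreover'' part, the first claim already does the work: at most two agents see $o$ as a chore, so at least $n-2$ agents satisfy $d_i < \deg(o)/2$, i.e. have strictly positive marginal value; when $n \ge 4$ this leaves at least two such agents. I do not expect a serious obstacle here. The only points needing care are making the marginal-value formula genuinely uniform across the ``$o \in A_i$'' and ``$o \notin A_i$'' cases, and the explicit use of $\deg(o) \ge 1$ to make the three-agent inequality strict; everything else is the pigeonhole-style counting above.
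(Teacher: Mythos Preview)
Your proof is correct. The paper reaches the same conclusion by case analysis on the sign of the marginal: if $o$ has strictly negative marginal for some agent $i$ (i.e.\ $|N_{A_i}(o)| > |N_{V\setminus A_i}(o)|$), set containments $N_{A_j}(o)\subseteq N_{V\setminus A_i}(o)$ and $N_{A_i}(o)\subseteq N_{V\setminus A_j}(o)$ force $o$ to be a strict good for every other agent $j$; if instead $o$ has zero marginal for $i$, at most one further agent (the one holding all remaining neighbors of $o$) can also have zero marginal. Your unified counting argument via the closed form $\deg(o)-2d_i$ together with $\sum_i d_i \le \deg(o)$ avoids this case split and is arguably cleaner; the paper's version, on the other hand, yields the slightly sharper sub-claim that one strictly negative marginal already forces \emph{all} other marginals to be positive (not merely all but one), which it later uses implicitly. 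Both arguments lean on the standing assumption $\deg(o)\ge 1$ in exactly the way you flagged.
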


\begin{figure}[ht]
\centering
\scalebox{0.85}{
\begin{tikzpicture}[
    item/.style={draw, circle, fill=gray!20, minimum size=0.5cm, inner sep=1pt},
    neighbor/.style={draw, circle, fill=white, minimum size=0.4cm, inner sep=1pt},
    bundlebox/.style={draw, dashed, rounded corners},
    emptybox/.style={draw, dashed, rounded corners, thick, gray}, 
    font=\small
]


\node[item] (o1) at (0,0) {$o$};

\node[neighbor] (t1) at (-1.2,0.9) {};
\node[neighbor] (t2) at (-0.6,0.9) {};
\node[neighbor] (t3) at (0.0,0.9) {};
\node[neighbor] (t4) at (0.6,0.9) {};
\node[neighbor] (t5) at (1.2,0.9) {};

\draw[bundlebox, red!60!black] (-1.5,0.6) rectangle (1.5,1.2);
\node at (1.75,0.9) {$A_i$};

\node[neighbor] (b1) at (-1.0,-0.9) {};
\node[neighbor] (b2) at (-0.4,-0.9) {};
\node[neighbor] (b3) at (0.8,-0.9) {};  

\draw[bundlebox, blue!70!black] (-1.3,-1.17) rectangle (0.07,-0.63);
\draw[bundlebox, teal!70!black] (0.45,-1.17) rectangle (1.1,-0.63);
\node at (-1.65,-1.0) {$A_j$};
\node at (1.4,-1.0) {$A_k$};

\foreach \t in {t1,t2,t3,t4,t5} {
  \draw (o1) -- (\t);
}
\foreach \b in {b1,b2,b3} {
  \draw (o1) -- (\b);
}

\node at (0,-1.8) {\textbf{Item $o$ has negative marginal}};
\node at (0,-2.2) {\textbf{value for at most one agent.}};


\node[item] (o2) at (5.5,0) {$o$};

\node[neighbor] (t6) at (4.4,0.9) {};
\node[neighbor] (t7) at (5.0,0.9) {};
\node[neighbor] (t8) at (5.6,0.9) {};

\draw[bundlebox, red!60!black] (4,0.55) rectangle (6,1.2);
\node at (6.2,0.6) {$A_i$};

\node[neighbor] (b4) at (5.1,-0.9) {};
\node[neighbor] (b5) at (5.7,-0.9) {};
\node[neighbor] (b6) at (6.3,-0.9) {};

\draw[bundlebox, blue!70!black] (4.7,-1.17) rectangle (6.65,-0.61);
\node at (6.9,-1.0) {$A_j$};

\draw[emptybox, teal!70!black] (6.9,0.27) rectangle (7.7,1.13);
\node at (7.3,0.7) {$A_k$};

\foreach \x in {t6,t7,t8,b4,b5,b6} {
  \draw (o2) -- (\x);
}

\node at (5.5,-1.8) {\textbf{Item $o$ has zero marginal}};
\node at (5.5,-2.2) {\textbf{value for at most two agents}};
\end{tikzpicture}
}
\caption{An illustration of the characterization in \Cref{claim:chore_for_two_agents}. 
}
\label{fig:claim_chore_for_two_agents_illustration}
\end{figure}
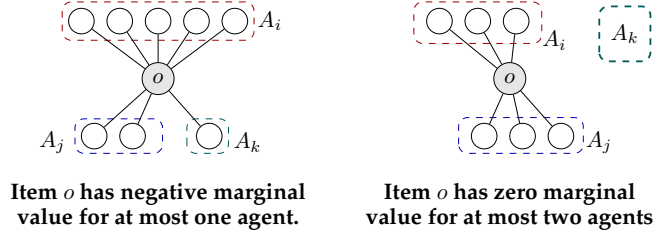

\begin{proof} 
Suppose $o$ has a negative marginal value for agent $i$. This means that $o$ has more neighbors inside $A_i$ than outside $A_i$, i.e., $|N_{A_i}(o)| > |N_{V\setminus A_i}(o)|$. For any other agent $j \neq i$, this implies that $N_{A_j}(o) \subseteq N_{V\setminus A_i} (o)$ and $N_{A_i}(o) \subseteq N_{V\setminus A_j}(o)$. Thus, $|N_{A_j}(o)| \le |N_{V\setminus A_i} (o)| < |N_{A_i}(o)| \le |N_{V\setminus A_j} (o)|$, i.e., $o$ has fewer neighbors inside $A_j$ than outside, so $o$ must have a positive marginal value (i.e. be a good) for all such $j$. For an illustration of this, see \Cref{fig:claim_chore_for_two_agents_illustration}.

Similarly, if agent $i$ has zero marginal value for some $o\in V$ (that is, $|N_{A_i}(o)| = |N_{V\setminus A_i}(o)|$ i.e. the number of neighbors of $o$ inside and outside $A_i$ is equal), then at most one other agent (say $j$) can have zero value for $o$ (when $N_{A_j}(o) = N_{V\setminus A_i}(o)$; see \Cref{fig:claim_chore_for_two_agents_illustration}), and all other agents must have strictly a positive marginal value for $o$. This proves the first part of the lemma, i.e. any item has a non-positive marginal value for at most two agents.
By the pigeonhole principle, when $n\ge 4$, then there are at least two agents for whom $o$ has a positive marginal value.
\end{proof}

Before proving \cref{thm:general:NT4}, we first present three key technical claims. 
First, we show that each agent derives at least half the total value of all bundles that contain only items with non-positive marginal value to that agent.
Intuitively, consider one such chore $o$ that is received by an agent $j$.
Since $o$ is a chore for $i$, there are at least as many edges from $o$ to $A_i$ as from $o$ to $V\setminus A_i$. 
These edges (from $o$ to $A_i$) contribute to the value of $A_i$. 
On the other hand, the value of $A_j$ can be at most equal to sum of all edges from $o\in A_j$, which is at most twice the number of edges from $o$ to $A_i$.

\begin{restatable}{claim}{TSallchores}
\label{ob:general:ne:all_chores}
    For an agent $i \in N$, consider any set of bundles $X$ such that for every bundle $A_k \in X$, it holds that $v(A_i) \ge v(A_i \cup \{o\})$ for every $o \in A_k$. Then, we have
    \[
        v(A_i) \ge \frac{1}{2}\sum_{A_k \in X}v(A_k). 
    \]
\end{restatable}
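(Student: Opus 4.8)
The plan is to reduce everything to an edge count over the items appearing in the bundles of $X$, exploiting the defining structure of cut-valuations. Write $Y := \bigcup_{A_k \in X} A_k$ for the union of the relevant bundles; these belong to agents other than $i$, so $Y \subseteq V \setminus A_i$, as is implicit in the statement and its intended use. The first step is to translate the chore hypothesis into a degree condition. For $o \notin A_i$, moving $o$ into $A_i$ changes the cut value by exactly $|N_{V\setminus A_i}(o)| - |N_{A_i}(o)|$, so the assumption $v(A_i) \ge v(A_i \cup \{o\})$ is equivalent to $|N_{A_i}(o)| \ge |N_{V\setminus A_i}(o)|$. Since $\deg(o) = |N_{A_i}(o)| + |N_{V\setminus A_i}(o)|$, this immediately gives $|N_{A_i}(o)| \ge \tfrac{1}{2}\deg(o)$ for every $o \in Y$: each such item sends at least half of its edges into $A_i$.

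Next I would upper-bound the right-hand side. Writing each cut value from the side of its own bundle,
\[
\sum_{A_k \in X} v(A_k) = \sum_{A_k \in X}\sum_{o \in A_k} |N_{V\setminus A_k}(o)| \le \sum_{A_k \in X}\sum_{o \in A_k}\deg(o) = \sum_{o \in Y}\deg(o),
\]
using only that the number of neighbours of $o$ outside its own bundle is at most its total degree. The purpose of this deliberately loose bound is to avoid tracking edges that lie between two distinct bundles of $X$ (which would be double counted in the left sum); passing to the degree sum dominates all such contributions simultaneously.

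I would then close the loop by relating the degree sum back to $v(A_i)$. Using $\deg(o) \le 2|N_{A_i}(o)|$ from the first step, $\sum_{o \in Y}\deg(o) \le 2\sum_{o \in Y}|N_{A_i}(o)|$, and the quantity $\sum_{o \in Y}|N_{A_i}(o)|$ simply counts the edges with one endpoint in $Y$ and the other in $A_i$. Because $Y \subseteq V\setminus A_i$, every such edge crosses the cut defining $A_i$, so this count is at most $v(A_i)$. Chaining the three estimates yields $\sum_{A_k \in X} v(A_k) \le 2v(A_i)$, which is exactly the claim.

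This is fundamentally a degree-counting computation, so there is no deep obstacle; the only real care is in the bookkeeping of which edges are charged where. The main subtlety is that edges internal to $Y$ (between distinct bundles of $X$) must not be overcounted on the left, while edges from $Y$ into $A_i$ must not be overcounted on the right. Routing both sides through the single intermediate quantity $\sum_{o\in Y}\deg(o)$ handles this cleanly: it keeps every inequality one-directional and sidesteps the parity and double-counting pitfalls that would arise from trying to match individual edges across the two sides.
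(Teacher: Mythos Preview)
Your proof is correct and follows essentially the same approach as the paper: both derive $|N_{A_i}(o)| \ge \tfrac{1}{2}\deg(o)$ from the chore hypothesis, bound each $v(A_k)$ by $\sum_{o\in A_k}\deg(o)$, and then observe that $\sum_{o\in Y}|N_{A_i}(o)|$ counts a subset of the cut edges of $A_i$. The only difference is presentational---the paper chains the inequalities starting from $v(A_i)$ and working downward, whereas you start from $\sum_{A_k\in X} v(A_k)$ and work upward through the intermediate quantity $\sum_{o\in Y}\deg(o)$.
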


\begin{proof}
First, we note that when $o\in A_k$ is a chore with respect to $A_i$, we have
\begin{align}
    |\{(o, o^\prime)\in E:\, o^\prime \in A_i\}| \geq |\{(o, o^\prime)\in E:\, o^\prime \in V\setminus A_i\}|  \nonumber
\end{align}
Adding $|\{(o, o^\prime)\in E:\, o^\prime \in A_i\}|$ on both sides, we get
\begin{align}
    2|\{(o, o^\prime)\in E:\, o^\prime \in A_i&\}| \geq |\{(o, o^\prime)\in E :\, o^\prime \in A_i\}| \nonumber\\& + |\{(o, o^\prime)\in E :\, o^\prime \in V\setminus A_i\}| \nonumber
\end{align}
Dividing both sides by 2, we have 
\begin{equation}
    |\{(o, o^\prime)\in E :\, o^\prime \in A_i\}| \geq \frac{1}{2} |\{(o, o^\prime)\in E :\, o^\prime \in V\}| \label{eq:chore_property}
\end{equation}

Now, using \eqref{eq:chore_property}, we can lower bound the value of $A_i$ in the following manner.
\begin{align*}
    v(A_i) &= |\{(o, o^\prime)\in E:\, o \in A_i, o^\prime \in V \setminus A_i\}| \\
    &\ge \sum_{A_k \in X}|\{(o, o^\prime)\in E :\, o \in A_i, o^\prime \in A_k\}| \tag*{\text{(Since $A_k \subseteq V\setminus A_i$)}}\\
    &\ge \frac{1}{2} \sum_{A_k \in X} \sum_{o\in A_k} |\{(o, o^\prime)\in E:\, o^\prime \in V| \tag*{(From \eqref{eq:chore_property})}\\
    &\geq \frac{1}{2} \sum_{A_k \in X}v(A_k), \tag*{\text{(Since $v(S)\leq \sum_{o\in S} \deg(o)$)}}
\end{align*} thus proving the claim.
\end{proof}

Using this claim, it is easy to see that when Case II happens, i.e., when there is \EFone{} violation from agent $1$ to some agent $i^*$ but $A_{i^*}$ does not have any item with positive marginal value for $A_1$, such a violation can only be towards a \emph{single} agent.
Thus, all remaining bundles contain at least one item with positive marginal value.

\begin{restatable}{claim}{casetwospecialagentcondition}
\label{ob:general:ne:case2}
    {Let $X$ be the set of all bundles towards which the first agent (agent with least valued bundle) has \EFone{} violations. 
    If, for every bundle $A_k \in X$ and $o\in A_k$ we have $v(A_1) \ge v(A_1 \cup \{o\})$, then $|X| \le 1$. 
    Moreover, for any other bundle $A_j \notin X$, there exists an item $o^\prime \in A_j$ such that $v(A_1 \cup \{o^\prime\}) > v(A_1)$.}\footnote{Notice that the same proof also implies that if agent $1$ \textbf{envies} (not necessarily $\EFone$-envies) exactly one agent $i$ and all items in $A_i$ have non-positive value with respect to agent $1$, then every other agent has an item with a positive marginal value for agent $1$. This generalization helps us prove \Cref{thm:general:NT4}.}
\end{restatable}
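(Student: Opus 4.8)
The plan is to leverage \Cref{ob:general:ne:all_chores} twice: once applied to $X$ itself to bound $|X|$, and once applied to an enlarged set to establish the ``moreover'' statement. The key leverage in both cases is that every bundle agent $1$ envies has strictly larger value than $A_1$, which injects the strict inequality needed to contradict the factor-$\tfrac12$ lower bound.

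For the first part, I would apply \Cref{ob:general:ne:all_chores} with $i=1$ directly to $X$: by hypothesis every item of every bundle in $X$ is a chore for agent $1$, so the claim yields $v(A_1) \ge \tfrac{1}{2}\sum_{A_k \in X} v(A_k)$. On the other hand, since agent $1$ has an \EFone{} violation (hence, in particular, envy) towards each $A_k \in X$, we have $v(A_k) > v(A_1)$ for every such bundle. If $|X| \ge 2$, summing these strict inequalities gives $\sum_{A_k \in X} v(A_k) > 2v(A_1)$, which combined with the previous bound forces the contradiction $v(A_1) > v(A_1)$. Hence $|X| \le 1$.

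For the ``moreover'' part, the relevant situation (Case II of \Cref{alg:general:NT}) always has a genuine violation, so I would take $X = \{A_{i^*}\}$ and argue by contradiction: suppose some bundle $A_j \notin X$ with $j \neq 1$ also has all of its items as chores for agent $1$, i.e.\ $v(A_1) \ge v(A_1 \cup \{o\})$ for every $o \in A_j$. Then the enlarged set $X' = X \cup \{A_j\}$ again satisfies the hypothesis of \Cref{ob:general:ne:all_chores}, giving $v(A_1) \ge \tfrac{1}{2}\bigl(v(A_{i^*}) + v(A_j)\bigr)$. Since agent $1$ envies $A_{i^*}$ we have $v(A_{i^*}) > v(A_1)$, and since $A_1$ is the least-valued bundle we have $v(A_j) \ge v(A_1)$; adding these shows $v(A_{i^*}) + v(A_j) > 2v(A_1)$, once more yielding $v(A_1) > v(A_1)$. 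Therefore every bundle outside $X$ must contain an item of strictly positive marginal value for agent $1$.

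The arguments themselves are short; the only delicate point, and the thing I expect to be the main obstacle to state cleanly, is the bookkeeping of strict versus weak inequalities. The strictness comes entirely from the envy relation $v(A_{i^*}) > v(A_1)$, while $A_1$ being least-valued yields only the weak bound $v(A_j) \ge v(A_1)$ — and having exactly one strict term in the sum is precisely what is needed to break the factor-$\tfrac12$ bound. I would also note that this derivation uses only \emph{envy} (not full \EFone{}-envy) towards $A_{i^*}$, which is exactly the strengthening recorded in the footnote and later invoked in the proof of \Cref{thm:general:NT4}.
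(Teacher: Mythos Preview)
Your proposal is correct and follows essentially the same approach as the paper: both arguments invoke \Cref{ob:general:ne:all_chores} on a two-(or more-)element collection containing the envied bundle $A_{i^*}$ together with another all-chore bundle, and derive a contradiction from the strict inequality $v(A_{i^*}) > v(A_1)$ combined with the weak bound $v(A_j) \ge v(A_1)$ for the least-valued agent. The only cosmetic difference is that the paper dispatches both parts in a single stroke (showing that no second agent with an all-chore bundle can exist, which simultaneously yields $|X|\le 1$ and the ``moreover'' clause), whereas you treat them as two separate applications; your observation about strictness and the footnoted generalization matches the paper exactly.
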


\begin{proof}
    Let $i^*$ be an agent who is $\EFone$-envied by agent $1$. 
    From Case 2, we know that every item in $A_{i^*}$ must be a (weak or strict) chore with respect to $A_1$. 
    If there exists another agent (say $k$) whose items are all chores for agent $1$, by \Cref{ob:general:ne:all_chores}, we have that 
    \[
    v(A_1) \ge \frac{1}{2}\cdot (v(A_j) + v(A_k)), 
    \]
    which is a contradiction since agent $1$ strictly envies $i^*$ and $v(A_1) \le v(A_k)$. 
    Therefore, in Case 2 where all items in $A_{i^*}$ are chores with respect to $A_1$, it must be the case that $i^*$ is the only agent who is $\EFone$-envied by agent $1$.
\end{proof}

To argue about the termination of the main algorithm, \Cref{alg:general:NT}, we use a lexicographic improvement in the vector $\Phi = (v(A_1), - n_{1})$, where $v(A_1)$ represents the value of the least-valued bundle and $n_{1}$ represents the number of agents with this least-value.
Finally, to prove that the \textsc{TS-Subroutine} (\Cref{alg:general_NT_subroutine}) terminates in polynomial time, we again use a potential argument; in particular, we show that the social welfare strictly increases and $\Phi$ does not decrease.
Since $\SO \leq 2|E|$ and $O(|E|)\in O(m^2)$, the \textsc{TS-Subroutine} converges in $O(m^2)$ time. 

\begin{restatable}{claim}{TSsubroutineconvergence}\label{ob:NT-subroutine-converges}
    The \textsc{TS-Subroutine} returns an allocation that is \NT{}. Furthermore, in each iteration of \Cref{alg:general_NT_subroutine}, the potential function $\Phi = (v(A_1), -n_1)$ does not decrease, and the social welfare $\textsc{SW}(A) = \sum_{i\in N} v(A_i)$ increases by at least $1$. Therefore, \Cref{alg:general_NT_subroutine} terminates in $O(m^2)$ time. 
\end{restatable}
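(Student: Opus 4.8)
The plan is to isolate the effect of one pass through the while loop of \Cref{alg:general_NT_subroutine} and read off all three conclusions from it. In any iteration the subroutine selects an item $o$ whose marginal value is non-positive for its current holder $i$, so $v(A_i\setminus\{o\})\ge v(A_i)$, and moves it to a recipient $k$ for which $o$ has strictly positive marginal value, so $v(A_k\cup\{o\})>v(A_k)$. Consequently, after the transfer the donor's value weakly increases, the recipient's value strictly increases, and every other bundle is unchanged. I would first record these three facts as the single observation driving the argument, and separately verify well-definedness: by \Cref{claim:chore_for_two_agents} the item $o$ is a chore for at most two agents, one of which is the holder $i$, so with $n\ge 4$ at least $n-2\ge 2$ agents value $o$ positively and at least one of them survives excluding the special agent $i^*$. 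Hence a valid recipient always exists.

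From the three facts the social-welfare statement is immediate: the recipient's cut value jumps by a positive integer and no other term drops, so $\textsc{SW}$ strictly increases by at least $1$. For the potential $\Phi=(v(A_1),-n_1)$ I would show lexicographic non-decrease. Since no bundle ever loses value, the minimum bundle value weakly increases and the first coordinate never drops; it remains to treat the case where the minimum value, call it $v^\ast$, is unchanged. Here I would argue the collection of minimum-valued bundles cannot grow: an unchanged bundle keeps its status, the donor can sit at $v^\ast$ after the move only if it already did (its value only rose), and the recipient strictly exceeds its old value $\ge v^\ast$, hence leaves the minimum set if it was in it. Thus the new minimum set is contained in the old one, so $n_1$ does not increase and $-n_1$ does not decrease, giving $\Phi$ non-decreasing.

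Transfer stability of the output then follows purely from the stopping condition. Upon termination no agent $i$ holds an item $o$ with $v(A_i\setminus\{o\})\ge v(A_i)$, i.e. every item has strictly positive marginal value for its holder. Therefore the first requirement of a \NT{}-violating transfer, namely $v(A_i\setminus\{o\})\ge v(A_i)$, fails for all items and all ordered pairs of agents, so no such transfer exists and the allocation is \NT{}. For the running time, $\textsc{SW}$ is a nonnegative integer bounded by $2|E|=O(m^2)$ and grows by at least $1$ per iteration, so the loop performs $O(m^2)$ iterations.

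The step I expect to be the main obstacle is the second-coordinate bookkeeping of $\Phi$ when the minimum value stays fixed: one must track bundle identity carefully across the relabeling step and separate the donor and recipient roles, the subtlety being that the donor may itself remain at the minimum, so the argument must rule out any net increase in the count rather than prove a strict decrease. The branch structure of \Cref{alg:general_NT_subroutine} (whether $i=1$, and whether the least-valued bundle is the recipient) threatens a tedious case split, but I expect the unified observation ``no value ever decreases and the recipient leaves the minimum set'' to collapse all branches into a single argument.
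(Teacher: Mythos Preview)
Your proposal is correct and follows essentially the same approach as the paper's proof: both argue that each transfer weakly helps the donor, strictly helps the recipient, and leaves all other bundles unchanged, then read off the \NT{} conclusion from the termination condition and the running-time bound from the social-welfare increase. Your treatment of the second coordinate of $\Phi$ is somewhat more explicit than the paper's (which simply asserts that $n_1$ does not increase), but the underlying observation---no bundle value drops, so the minimum set cannot gain members---is the same, and the obstacle you flag turns out not to require a case split.
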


\begin{proof}
    First, we prove that if the subroutine terminates, then the returned allocation is \NT{}.
    Recall that in a \NT{} allocation, there must be no transfer that decreases the value of either of the agents. 
    Thus, we show that, in the returned allocation, the donator bundle's value reduces during any transfer, thereby proving that the allocation is \NT.
    Notice that, on termination, the while-loop condition ensures that every item is assigned to a bundle where its marginal value is positive. Thus, transferring any item will reduces the value of the donator bundle, proving that the allocation is \NT. 
    In the remaining part of the proof we argue that the while loop in the subroutine terminates in polynomial time.
    
    In each iteration of the \textsc{TS-Subroutine}, we identify an item that has been assigned as a strict/weak chore $o$ in some agent $i$'s bundle. 
    Then, $o$ is reallocated to another (carefully chosen) agent for whom $o$ is a strict good (postivie marginal value). 
    Such an agent always exists when $n\geq 3$ by \Cref{claim:chore_for_two_agents}.
    By the same lemma, when $n\geq 4$, there always exists an agent $k\neq i^*$ (not the special agent) for whom $o$ is a strict good (positive marginal value). 
    We transfer $o$ from a bundle where it is a weak or strict chore (zero or negative marginal value) to one (other than $A_{i^*}$) where it is a strict good (positive marginal value). 
    This guarantees that no agent's value decreases, and the receiving agent's value strictly increases. 
    Hence, $v(A_1)$ (the minimum value) does not decrease, $n_{1}$ (the number of agents that have bundles of value $v(A_1)$) does not increase, and the total social welfare increases by at least 1. 

    Since $\SO \leq 2|E|$ and $O(|E|) \in O(m^2)$, the social welfare can increase at most $\SO$ times, i.e. the \textsc{TS-Subroutine} must terminate in $O(m^2)$.
\end{proof}

We are now ready to prove \Cref{thm:general:NT4}.

\begin{proof}[Proof of Theorem \ref{thm:general:NT4}]
    First, we show that if \Cref{alg:general:NT} terminates, then the resulting allocation is both \EFone{} and \NT{}. After that, we prove that the algorithm terminates in polynomial time.
    By the main while loop in Line~\ref{line:EF1_while_NT}, the algorithm continues until agent 1 (the agent with the least value) no longer has an \EFone{} violation towards any other agent. By \Cref{lem:identical}, this ensures that the allocation is \EFone{}. Next, from \Cref{ob:NT-subroutine-converges}, we know that the $\textsc{TS-Subroutine}$ guarantees that the resulting allocation is \NT{}. Since this subroutine is called in Lines \ref{line:NT-subroutine-1} and \ref{line:NT-subroutine-2}, the final allocation also satisfies \NT{}. 
    
    We now prove that the algorithm terminates in polynomial time. We show that (i) in every iteration of Case I there is a lexicographic improvement in $\Phi = (v(A_1), -n_1)$; and (ii) in Case II, either (a) either $\Phi$ is lexicographically improved, or (b) when $\Phi$ does not change, the algorithm moves back to Case I. 
    Since each part of the algorithm runs in polynomial time---$v(A_1)$ can increase at most $O(|E|)\in O(m^2)$ times, $n_1$ can decrease at most $O(n)$ times, the while loop is Case II runs in $O(m)$ time, and the social welfare can improve at most $O(m^2)$ times %
    (which is the potential function for the \textsc{TS-Subroutine}; see \Cref{ob:NT-subroutine-converges})---the algorithm terminates in time polynomial in $n$ and $m$.

    \vspace{3pt}
    \noindent\textbf{Case I.} 
    Let $i$ be an agent for whom there is an \EFone{} violation and let $o\in A_i$ be an item with positive marginal value for $A_1$. 
    After reallocating $o$ to agent $1$, the value of $A_1$ strictly increases. 
    In addition, since there was a \EFone{} violation towards  $i$, we know $A_i$'s new value is still strictly larger than previous least value i.e. $v(A_i\setminus\{o\}) > v(A_1)$.  
    If $v(A_2) = v(A_1)$, the minimum value does not change, but the number of agents receiving the minimum value $n_1$ decreases by $1$. 
    Otherwise, the minimum value increases by at least $1$ and thus $\Phi$ lexicographically improves.
    
    \vspace{3pt}
    \noindent\textbf{Case II.} First, we show that in Case II, $\Phi$ does not decrease. 
    From \Cref{ob:general:ne:case2}, we know that any \EFone{} violation must be towards a \emph{single} special agent $i^*$ and all items in $A_{i^*}$ have a non-positive marginal value for $A_1$. 
    In this case, an arbitrary item $o\in A_{i^*}$ is transferred to an agent $k\ne i^*$ who has positive marginal value for $o$; the existence of such an agent $k\neq i^*$ is guaranteed via \Cref{claim:chore_for_two_agents} when $n\ge 4$.
    Note that $v(A_1)$ does not change since $A_1$'s bundle is untouched, and no agent becomes worse than $v(A_1)$: Agent $i^*$ is $\EFone$-envied by agent $1$ so $v(A_{i^*}\setminus \{o\}) > v(A_1)$, and $o$ has a positive marginal value for $A_k$.  

    Thus, for Case II, we need to show that if $\Phi$ does not lexicographically increase and the allocation is not \EFone{}, then we must move back to Case I, i.e. there cannot be two consecutive runs of Case II unless the potential function $\Phi$ increases.
    To see this, notice that after the execution of the subroutine in Line \ref{line:NT-subroutine-2}, if the resulting allocation is not \EFone{}+\NT{} and $v(A_1)$ and $n_1$ (i.e. $\Phi$) remain unchanged, then the least valued agent remains agent $1$. This is because, from \Cref{ob:NT-subroutine-converges}, we know that the subroutine does not make any agent worse-off. 
    Also, since $A_{i^*}$ was $\NT$ before Case II (i.e. contained no non-positive items for itself), it must be the case that $i^*$ was not involved in any transfers during the subroutine execution. Thus, any $\EFone$-violation must be towards a different agent since agent 1 did not \EFone{}-envy $i^*$ after Case II. 
    Moreover, since the least valued agent is still agent $1$ and (the new) $A_{i^*}$ is still an envied  bundle with no items that have a positive marginal value for $A_1$, from \Cref{ob:general:ne:case2}, the new \EFone{}-envied agent must have an item with positive marginal value for $A_1$, satisfying the condition for Case I.  
\end{proof}

\subsection{Fairness and Efficiency: A Monotonic Result
}\label{sec:arbitrary_n_positive_results}

We have shown that \EFone{} and the efficiency notion $\textsc{TS}$ exhibits non-monotonic existence.
This raises a natural question: \textit{Is there an efficiency notion that is always compatible with \EFone{}, regardless of $n$?}
We answer this question positively by showing that a slight relaxation of $\textsc{TS}$--namely \wNT{}--is guaranteed to exist along with \EFone{} for any $n$ (\Cref{thm:general:ne}).
Towards this, we provide a polynomial-time algorithm that computes such allocations.
We also show that, if we are willing to relax fairness in favor of a stronger efficiency notion, the leximin solution guarantees $\frac{1}{2}$-\EFone{} and is \PO{} for any $n$ (\Cref{thm:approximate_EF1_PO}). We emphasize that this is only an existence guarantee, as computing an $\alpha$-\EFone{} allocation that is also \PO{} is \textsc{NP-hard} for any $\alpha$.

\subsubsection{Weakening Efficiency}
In this section, we show that \EFone{} is compatible with \wNT{}, a slightly weaker notion than \NT{}, for instances with general graphs and any number of agents.
\begin{restatable}[\EFone{}+\wNT{} allocations]{theorem}{EFwTS}
\label{thm:general:ne}
    Given a cut-valuation instance, a complete allocation satisfying \EFone{} and \wNT{} always exists and can be computed in polynomial time.
\end{restatable}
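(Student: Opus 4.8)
The plan is to recycle the local-search framework of \Cref{alg:general:NT}, while exploiting the fact that \wNT{} is a strictly \emph{local} condition that is far easier to maintain than \NT{}. The first step is a structural reduction: I would show that for $n\ge 3$, an allocation is \wNT{} \emph{if and only if} no item is a strict chore for its current holder, i.e. $v(A_i\setminus\{o\}) \le v(A_i)$ for every $o\in A_i$. The ``if'' direction is immediate; for ``only if'', note that if $o\in A_i$ were a strict chore for $i$, then by \Cref{claim:chore_for_two_agents} it is a strict good for all but at most one other agent, so (as $n\ge 3$) some agent $k$ strictly gains from receiving it, and the transfer $i\to k$ strictly benefits both, contradicting \wNT{}. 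The case $n=2$ is handled separately by \Cref{prop:n=2}, whose \EF{}$+$\NT{} allocation is in particular \EFone{}$+$\wNT{}.

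Given this characterization, I would replace the \textsc{TS-Subroutine} by a weaker subroutine that relocates only items which are \emph{strict} chores for their holders (leaving zero-marginal items untouched), sending each such item to an agent for whom it is a strict good; such an agent exists for $n\ge 3$ by \Cref{claim:chore_for_two_agents}. As in \Cref{ob:NT-subroutine-converges}, each transfer strictly increases the social welfare and weakly increases every bundle's value, so $v(A_1)$ never decreases, $n_1$ never increases, and the subroutine terminates in $O(m^2)$ time, returning a \wNT{} allocation. The key payoff of \wNT{} over \NT{} is the resulting invariant: once the subroutine has run, every item is a \emph{weak good} for its holder, so transferring \emph{any} item into the least-valued bundle $A_1$ can never strictly benefit the donor and hence never breaks \wNT{}. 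This extra freedom---unavailable under \NT{}, which additionally forbids relocating zero-marginal items---is precisely what removes the $n\ge 4$ requirement of \Cref{thm:general:NT4} and dispenses with the special-agent bookkeeping, since the bundle of an \EFone{}-violated agent consists of weak goods and is therefore never disturbed by the subroutine.

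The outer loop then drives the allocation toward \EFone{} as before, guided by the potential $\Phi=(v(A_1),-n_1)$ and \Cref{lem:identical}, so that it suffices to eliminate \EFone{} violations emanating from agent $1$. In \textbf{Case I}, where some \EFone{}-envied bundle contains an item that is a strict good for agent $1$, I transfer that item into $A_1$, strictly lifting $v(A_1)$ or reducing $n_1$, and re-run the subroutine. In \textbf{Case II}, where every item of the envied bundle is non-positive for agent $1$, \Cref{ob:general:ne:case2} guarantees that the envied agent $i^*$ is \emph{unique} and that every other bundle contains an item that is a strict good for agent $1$; since the restored invariant makes each such item a weak good for its own holder, importing a suitable one into $A_1$ strictly lifts $v(A_1)$ while keeping \wNT{}, again improving $\Phi$ or routing us back to Case I. Correctness on termination is then immediate: no \EFone{} violation from agent $1$ yields \EFone{} via \Cref{lem:identical}, and the weakly-good invariant yields \wNT{} via the characterization.

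The main obstacle I anticipate is the termination bookkeeping of Case II, mirroring the most delicate part of \Cref{thm:general:NT4}: I must choose \emph{which} strict-good-for-agent-$1$ item to import so that the minimum bundle value provably never decreases (using that the donor only weakly loses) and argue that two consecutive Case~II steps cannot occur unless $\Phi$ strictly improves---i.e. that after restoring the invariant, the unique bundle $A_{i^*}$ is left intact and is no longer \EFone{}-envied, so any residual violation falls under Case~I. Verifying that such a choice is always feasible for $n=3$, the smallest interesting case where importing into $A_1$ (rather than redistributing out of $A_{i^*}$) becomes essential, using \Cref{ob:general:ne:case2} and \Cref{ob:general:ne:all_chores}, and confirming that the combined loop bound stays polynomial in $n$ and $m$, is where the real work lies.
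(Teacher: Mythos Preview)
Your framework largely matches the paper's: the \wNT{} characterization (no item is a strict chore for its holder) is correct, the weakened subroutine is exactly the paper's \textsc{wTS-Subroutine}, Case~I coincides, and the potential $\Phi=(v(A_1),-n_1)$ is the right bookkeeping device. The divergence---and the gap---is entirely in Case~II.

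Your Case~II proposes to import into $A_1$ an item $o'\in A_j$ (for some non-envied $j\neq i^*$) that is a strict good for agent~$1$. The obstacle you flag is real and, as stated, unresolved: after the subroutine, $o'$ is only a \emph{weak} good for $j$, so removing it can strictly decrease $v(A_j)$, possibly below the old $v(A_1)$; the phrase ``the donor only weakly loses'' does not prevent this. There is no evident reason why a ``suitable'' item---one that is simultaneously a strict good for agent~$1$ and whose removal keeps $v(A_j)\ge v(A_1)$---must exist (for $n=3$, $A_j$ could be a singleton whose sole vertex is a high-degree good for agent~$1$), and nothing in \Cref{ob:general:ne:case2} or \Cref{ob:general:ne:all_chores} supplies it. Moreover, even if $\Phi$ did not drop, a single import need not eliminate the \EFone{}-envy toward $i^*$, so your ``route back to Case~I'' claim is incomplete.

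The paper sidesteps this by acting on $A_{i^*}$ rather than on $A_1$. It greedily builds a subset $S\subset A_{i^*}$ (adding one positive-marginal item at a time) until $v(S)>v(A_1)$; by construction agent~$1$ does not \EFone{}-envy $S$. It then sets $A_{i^*}\gets S$ and dumps $A_{i^*}\setminus S$ onto some third agent $j\neq 1,i^*$. The crucial point is that every item in $A_{i^*}\setminus S$ is a chore for $A_1$, which forces it to have at least as many neighbors outside $A_j\cup(A_{i^*}\setminus S)$ as inside, so $v(A_j)$ provably does not decrease; hence no bundle drops below the old $v(A_1)$ and $\Phi$ is preserved. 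If the result is still not \EFone{}, the only possible newly \EFone{}-envied bundle is the enlarged $A_j$, and since $S$ still consists entirely of chores for agent~$1$ with $v(S)>v(A_1)$, the footnoted variant of \Cref{ob:general:ne:case2} guarantees $A_j$ contains a strict good for agent~$1$, landing us in Case~I. This ``shrink the envied bundle and redistribute the excess'' maneuver is the idea missing from your proposal.
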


Our proof is constructive: we provide an algorithm, akin to \Cref{alg:general:NT}, that computes such an allocation for any instance. The full algorithm and proof of \Cref{thm:general:ne} can be found in \Cref{sec:proofs_sec_general}. 
Intuitively, unlike \NT{} which requires each item to have positive marginal in its bundle, \wNT{} permits items to have zero-marginals. This forgoes the need for multiple agents with positive value for an item, removing the need for $n\ge 4$ (\Cref{claim:chore_for_two_agents}).

As a corollary, \Cref{thm:general:ne} implies that any graph with $m$ vertices admits a non-empty $n$-partition of the vertex set (with $n\leq m$) with nearly-equitable the cut-values across parts. 
Notably, the bound of $\Delta$ in \Cref{cor:equitable_cuts} improves on the previous best of $5\Delta + 1$ by \citet{barman2025fair}. The proof of the Corollary is presented in \Cref{sec:proofs_sec_general}.

\begin{restatable}[Equitable graph cuts]{corollary}{equitablegraphcuts}\label{cor:equitable_cuts}
    Given a graph $G = (V, E)$ and an integer $n\leq |V|$, there exists a polynomial-time computation $n$-partition of $V$ into non-empty $V_1, V_2, \ldots , V_n \neq \emptyset$ such that the cut-values satisfy $|v(V_i) - v(V_j)| \leq \Delta$ for all $i, j \in [n]$, where $\Delta:= \max_{o\in V} \deg(o)$ denotes the maximum degree of $G = (V,E)$.
\end{restatable}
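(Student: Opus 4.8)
The plan is to read the partition directly off Theorem \ref{thm:general:ne}: run its polynomial-time algorithm to obtain a complete allocation $A = (A_1, \dots, A_n)$ that is \EFone{} and \wNT{}, and set $V_i := A_i$. Polynomial-time computability is then inherited from the theorem, so the only substantive content left is the equitability bound $|v(V_i) - v(V_j)| \le \Delta$ together with non-emptiness of every part. I expect the bound to follow from \EFone{} alone; the \wNT{} guarantee and the rest of the theorem's strength are not needed for the inequality, only to ensure we are handed a reasonable complete allocation.

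The key quantitative step is to bound how much a single vertex deletion can change a cut-value. For a bundle $S$ and a vertex $o \in S$, I would compute $v(S) - v(S \setminus \{o\})$ edge-by-edge: the edges from $o$ to $V \setminus S$ (which were cut by $S$) stop being cut once $o$ leaves, while the edges from $o$ to $S \setminus \{o\}$ (which were internal to $S$) become cut. Hence $v(S) - v(S\setminus\{o\}) = |N_{V\setminus S}(o)| - |N_{S\setminus\{o\}}(o)|$, whose absolute value is at most $\deg(o) \le \Delta$. This is the only graph-theoretic calculation in the argument.

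Combining the deletion bound with \EFone{} yields the inequality immediately. Fix any pair $i, j$ and assume without loss of generality $v(V_j) \ge v(V_i)$. If this is strict, \EFone{} supplies an item $o_j \in A_j$ with $v(A_i) \ge v(A_j \setminus \{o_j\})$, and therefore
\[
 v(V_j) - v(V_i) \;\le\; v(A_j) - v(A_j \setminus \{o_j\}) \;\le\; \Delta ,
\]
using the previous step for the last inequality. The equality case is trivial, so $|v(V_i) - v(V_j)| \le \Delta$ for every pair, which is the claimed near-equitability (and in particular improves the $5\Delta+1$ bound).

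The main obstacle I anticipate is not the bound but guaranteeing that every part is non-empty, since the corollary requires $V_1, \dots, V_n \ne \emptyset$ whereas a merely \emph{complete} allocation could in principle leave a bundle empty. Here I would first invoke the standing convention from the Preliminaries that allocations assign each agent at least one item, which is feasible exactly because $n \le |V|$, so the algorithm of Theorem \ref{thm:general:ne} operates over non-empty complete allocations. For a self-contained justification I would additionally observe that \wNT{} already constrains the configuration around an empty bundle: an empty bundle has value $0$, and since $G$ has no isolated vertices, every item $o$ satisfies $v(\{o\}) = \deg(o) > 0$, so any strict chore residing in a bundle of size at least two could be moved into the empty bundle to strictly benefit both agents, contradicting \wNT{}. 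Ruling out the residual cases then reduces to a trivial redistribution that preserves both \EFone{} and the $\Delta$ bound, and I expect verifying this final bookkeeping to be the only place requiring care.
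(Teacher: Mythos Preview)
Your approach is essentially identical to the paper's: invoke Theorem~\ref{thm:general:ne}, then combine the \EFone{} inequality $v(A_i)\ge v(A_j\setminus\{o_j\})$ with the one-vertex marginal bound $v(A_j)-v(A_j\setminus\{o_j\})\le\deg(o_j)\le\Delta$ to conclude $|v(V_i)-v(V_j)|\le\Delta$. The only difference is that you discuss non-emptiness explicitly, whereas the paper treats it as a standing convention on allocations satisfying efficiency notions; your care there is warranted but does not change the argument.
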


\subsubsection{Weakening Fairness} \label{app:appEF1PO}

Recall, \Cref{thm:general:non_exist} shows that, for general graphs, an allocation that is both $\EFone$ and $\PO$ need not always exist. 
In this section, we show that a constant multiplicative approximation of $\EFone$ alongside $\PO$ always exists, where a multiplicative approximation of $\EFone$ is defined as follows.

\paragraph{$\alpha$-$\EFone$.} An allocation $A = (A_1, \ldots, A_n)$ is said to be $\alpha$-$\EFone$ if for every $i,j\in N$ such that $v(A_j) > v(A_i)$, there exists an item $o_j\in A_j$ such that $v(A_i) \geq \alpha\ v(A_j \setminus \{o_j\})$ where $\alpha \in (0,1]$.

In particular, we show that the leximin solution (defined below) is $\frac{1}{2}$-$\EFone$.

\paragraph{Leximin} An allocation $A=(A_1, \ldots, A_n)$ is said to be leximin optimal if $A$ maximizes the value of the agent with the smallest-value, subject to that maximizes the value of the second smallest agent's bundle, and so on.

\begin{proposition}[Approximate $\EFone$ and PO]\label{thm:approximate_EF1_PO}
    Given a cut-valuation instance, there always exists an allocation that is $\frac{1}{2}\text{-}\EFone$ and $\PO$.
\end{proposition}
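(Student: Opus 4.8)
The plan is to prove the statement non-constructively by exhibiting a single allocation that is simultaneously \PO{} and $\tfrac12$-\EFone{}: namely, the \emph{leximin-optimal} allocation, which exists since there are only finitely many allocations. Pareto optimality is immediate and standard---if a leximin optimum $A$ were Pareto-dominated by some $B$, then the value vector of $B$ would dominate that of $A$ coordinate-wise with one strict inequality; since sorting preserves coordinate-wise dominance, the sorted vector of $B$ would be lexicographically at least as large and strictly larger somewhere, contradicting leximin-optimality of $A$. The bulk of the work is therefore the $\tfrac12$-\EFone{} guarantee.

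The engine for this is a \emph{local exchange property} of leximin optima: if $A$ is leximin-optimal, then for every ordered pair $i,j$ with $v(A_i) \le v(A_j)$ and every item $o \in A_j$, moving $o$ from $A_j$ to $A_i$ cannot leave both resulting bundles strictly above $v(A_i)$; that is, $\min\{v(A_i\cup\{o\}),\, v(A_j\setminus\{o\})\} \le v(A_i)$. I would prove this by contradiction. The only values that change under such a move are those of $i$ and $j$, so the value multiset loses $\{v(A_i),v(A_j)\}$ (both $\ge v(A_i)$) and gains two values both strictly exceeding $v(A_i)$. Hence the number of bundles with value at most $v(A_i)$ strictly decreases, while the number with value strictly below $v(A_i)$ is unchanged. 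Comparing the two sorted value vectors, they agree throughout the block of entries below $v(A_i)$, and then the new allocation exhausts its $v(A_i)$-valued entries sooner, so the first coordinate where the vectors differ is strictly larger for the new allocation---making it leximin-better, a contradiction. Note this argument holds whether or not $A_i$ is a globally minimum bundle.

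With this property in hand I would fix a pair $i,j$ with $v(A_j) > v(A_i)$ and split into two cases. \textbf{Case 1:} some item $o \in A_j$ has positive marginal value for $A_i$, i.e.\ $v(A_i\cup\{o\}) > v(A_i)$. Then the local exchange property forces $v(A_j\setminus\{o\}) \le v(A_i)$, so this pair is in fact (fully) \EFone{}. \textbf{Case 2 (the crucial one):} every item of $A_j$ is a chore for $A_i$. Here I would invoke \Cref{ob:general:ne:all_chores} with $X=\{A_j\}$ to obtain $v(A_i) \ge \tfrac12 v(A_j)$. It then remains to control $v(A_j\setminus\{o\})$, and the key observation is that the all-chores hypothesis itself pins down the local geometry: for $o\in A_j$, being a chore for $A_i$ gives $|N_{A_i}(o)| \ge |N_{V\setminus A_i}(o)|$, and since $A_i \subseteq V\setminus A_j$ and $A_j \subseteq V\setminus A_i$ this chains into $|N_{V\setminus A_j}(o)| \ge |N_{A_i}(o)| \ge |N_{V\setminus A_i}(o)| \ge |N_{A_j}(o)|$. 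Thus every vertex of $A_j$ has at least as many neighbors outside $A_j$ as inside, so removing \emph{any} item weakly decreases the cut value: $v(A_j\setminus\{o\}) \le v(A_j) \le 2v(A_i)$, which is exactly $\tfrac12$-\EFone{} for this pair.

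I expect the main obstacle to be Case 2. The naive worry is that for non-monotone cut valuations, removing a single item could \emph{increase} $v(A_j)$, so the bound $v(A_i)\ge\tfrac12 v(A_j)$ alone would not translate into a bound on $v(A_j\setminus\{o\})$. The resolution---that the very hypothesis ``all items of $A_j$ are chores for $A_i$'' forces every vertex of $A_j$ to have at least as many external as internal neighbors, ruling out any such increase---is the conceptual crux, and it reuses exactly the inequality chain underlying \Cref{claim:chore_for_two_agents}. A secondary point requiring care is the sorted-vector counting in the local exchange property, which must be handled uniformly across all pairs rather than only for the globally smallest bundle.
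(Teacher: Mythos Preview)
Your proposal is correct and follows the same approach as the paper: both take the leximin-optimal allocation, use the local exchange property to conclude that whenever $i$ \EFone{}-envies $j$ every item of $A_j$ must be a chore for $A_i$, and then use the neighbor-counting inequality (your invocation of \Cref{ob:general:ne:all_chores}; the paper redoes the computation inline) to obtain $v(A_i)\ge\tfrac12 v(A_j)$. Your explicit verification that, under the all-chores hypothesis, removing any $o\in A_j$ cannot increase $v(A_j)$ is a nice touch---the paper's proof ends at $v(A_j)\le 2v(A_i)$ and leaves this last step implicit, so your write-up is in fact slightly more careful here.
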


Note that computing the leximin solution is $\textsc{NP-Hard}$ even for $n=2$ (same as \textsc{max-cut}). 
Thus, \Cref{thm:approximate_EF1_PO} is only an existence result; however, efficient computation of $\alpha$-$\EFone$ and $\PO$ allocations is a futile endeavor for any $\alpha$ since finding a $\PO$ allocation is NP-hard, even for $n=2$ (\Cref{prop:n=2}).

\vspace{0.3cm}

\begin{proof}[Proof of \Cref{thm:approximate_EF1_PO}]
    We show that a leximin solution is $\frac{1}{2}$-$\EFone$. 
    Since leximin solutions are $\PO$, this completes the proof. 
    For a subset of items $S\subseteq V$ and an item $o \in V$, recall that $N_S(o)$ denotes the number of neighbors of $o$ that belong to $S$. 
    Thus, the value of any set $S$, i.e., the number of edges with one endpoint in $S$ and the other in $V\setminus S$, is given by $v(S) = \sum_{o\in S} |N_{V\setminus S} (o)|$.
    
    Let $A=(A_1, \ldots, A_n)$ be a leximin allocation, and let $i,j$ be two agents such that $i$ $\EFone$-envies $j$, i.e., for every $o\in A_j$, we have $v(A_j\backslash\{o\})> v(A_i)$. 
    Since $A$ is leximin, addition of any item $o\in A_j$ to $A_i$ does not increase the value of $A_i$, that is, every $o\in A_j$ must have a non-positive marginal value for $A_i$ i.e. $N_{A_i}(o) \geq N_{V\setminus A_i}(o)$. 
    Thus, an upper-bound on the value of $A_j$ can be obtained as
    \begin{align*}
        &v(A_j) = \sum_{o\in A_j} |N_{V\setminus A_j}(o)| \\
        &= \sum_{o\in A_j} \bigg[|N_{V\setminus A_j}(o)\cap N_{V\setminus A_i}(o)| + |N_{V\setminus A_j}(o)\cap N_{A_i}(o)|\bigg]\\
        &\leq  \sum_{o\in A_j} \bigg[|N_{V\setminus A_j}(o)\cap N_{A_i}(o)|+ |N_{V\setminus A_j}(o)\cap N_{A_i}(o)|\bigg] \\
        &= 2\sum_{o\in A_j} |N_{V\setminus A_j}(o)\cap N_{A_i}(o)|\\
        &\leq 2\ v(A_i). 
    \end{align*}
\end{proof}

\section{Forest Graphs}\label{sec:forests}

In this section, we provide a polynomial-time algorithm that takes as input a cut-valuation instance where the graph $G = (V, E)$ is a forest (i.e. acyclic), and outputs an allocation that is simultaneously \EFone{} and socially optimal (\SO). This algorithmic guarantee also shows the \emph{existence} of such allocations for forest graphs.
Our first observation is a characterization of \SO{} allocations for forests: When $n \geq 2$, no two adjacent items can be allocated to the same agent.

\begin{restatable}{proposition}{forestsSO}\label{prop:SO_forests}
    Given a cut-valuation instance where $G$ is a forest and $n\ge 2$, an allocation $A$ is \SO{} if and only if, for every $(o, o')\in E$, we have $o\in A_i$ and $o'\in A_j\neq A_i$.
\end{restatable}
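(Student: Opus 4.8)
The plan is to express the utilitarian social welfare directly as a count of the edges separated by the allocation, and then invoke bipartiteness of forests to certify the maximum. First I would observe that for \emph{any} graph and any allocation $A=(A_1,\ldots,A_n)$, each edge $e=(o,o')$ contributes exactly $2$ to the total social welfare $\sum_{i\in N} v(A_i)$ when its two endpoints lie in different bundles, and $0$ otherwise. Indeed, if $o,o'$ are in the same bundle then $e$ is never a cut edge; if they lie in distinct bundles $A_i\neq A_j$, then $e$ is counted once in $v(A_i)$ (one endpoint in, one out), once in $v(A_j)$, and nowhere else. Summing over all edges yields
\[
\sum_{i\in N} v(A_i) = 2\,c(A),
\]
where $c(A)$ denotes the number of edges whose endpoints fall in different bundles.

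Next I would record the trivial bound $c(A)\le |E|$, with equality precisely when every edge has its endpoints in distinct bundles. Hence $\sum_{i\in N} v(A_i)\le 2|E|$, and equality holds exactly when the separation condition in the statement is satisfied. So the characterization reduces to showing that the value $2|E|$ is the true optimum, i.e. that it is attainable.

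This attainability step is where the forest hypothesis is essential. Since $G$ is acyclic, it is bipartite and therefore properly $2$-colorable; because $n\ge 2$, I can assign the two color classes (component by component) to two distinct agents, producing an allocation in which every edge is cut and whose social welfare is exactly $2|E|$. Thus the maximum social welfare equals $2|E|$.

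Combining the three observations gives the equivalence in both directions: an allocation $A$ is \SO{} if and only if $\sum_{i\in N} v(A_i)=2|E|$, which holds if and only if every edge $(o,o')$ has its endpoints in different bundles. I expect the attainability step to be the only genuine subtlety, and it is precisely where acyclicity and $n\ge 2$ are used: for a general graph containing an odd cycle with $n=2$, the bound $2|E|$ need not be reachable, and the clean characterization would fail. The remaining pieces are routine double counting.
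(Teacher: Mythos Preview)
Your proposal is correct and follows essentially the same approach as the paper: both arguments bound social welfare by $2|E|$ via the observation that each edge contributes at most $2$ to the sum $\sum_i v(A_i)$, and then use bipartiteness of forests (via acyclicity) together with $n\ge 2$ to exhibit an allocation achieving this bound. Your write-up is slightly more explicit in spelling out the formula $\sum_i v(A_i)=2c(A)$ and the two directions of the equivalence, but the underlying reasoning is the same.
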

\begin{proof}
    A forest is a graph that is acyclic. Since a graph is bipartite if and only if it contains no cycles of odd length, a forest must be bipartite. Let the bipartite graph corresponding to the forest be $G = (V_1 \sqcup V_2, E)$. Consider the  social welfare of the allocation $A = (V_1, V_2, \emptyset, \ldots, \emptyset)$. 
    \[
    \textsc{SW}(A) = \sum_{i \in N} v(A_i) = v(V_1) + v(V_2) = 2|E|.
    \]
    In any graph, since each edge can at most contribute to the valuation of both its endpoints, the optimal social welfare is bounded above by $2|E|$, i.e., $\SO \leq 2|E|$. 
    Thus, the optimal welfare for forest graphs is exactly $2|E|$, which means that no adjacent items can be allocated to the same agent in any $\SO$ allocation.
\end{proof}
As consequence of \Cref{prop:SO_forests}, the problem of finding an allocation that is \EFone{} and \SO{} is equivalent to finding an $\EFone$ allocation where no adjacent vertices (items) are assigned to the same agent. Under this condition, agent's bundle forms an independent set $S$ in the graph, with $v(S) = \sum_{o\in S} \deg(o)$.
This coincides with the conflict-constraints, albeit for valuations derived from cut-functions. 

Note that for $n = 2$, the bi-partition based on the bipartite graph of the forest is both \EF{} and \SO{} since the cut function is symmetric, i.e. $v(S) = v(V\setminus S)$ for any $S\subseteq V$\footnote{In fact, this is true in more generality for \emph{any} bipartite graph (not just forests) and $n=2$ agents.}; thus we assume $n\geq 3$. 
Interestingly, the negative example in \Cref{thm:general:non_exist} is also a bipartite graph (with treewidth = 2), but an \EFone{}+\SO{} allocation does not exist for it when $n=3$. This highlights that while strong existence and algorithmic guarantees (\EFone{}+\SO{}) hold for forests (bipartite; treewidth 1), they break even for graphs that are only slightly different.

\paragraph{Algorithm Overview.}
\cref{alg:EF1_SO_trees} begins with an empty initial allocation and an arbitrary rooting of each tree in the forest. 
The algorithms proceeds iteratively: at each step, it assigns a root item (vertex) to an agent and roots the resulting subtrees at the unallocated neighbors (children) of the assigned item. 
Since we always exclusively assign root items, at most one of their neighbors---the parent---may have been already assigned. 
Therefore, to satisfy the \SO{} condition, it suffices to ensure that the assigned root item is not in the same bundle as its parent. 
We refer to such a root as a \emph{feasible root} (for a bundle)---i.e. one whose parent does not belong to that bundle.  

Throughout the algorithm, we relabel bundles to ensure that $v(A_1) \le v(A_2) \le \ldots \le v(A_n)$ holds at all times (\Cref{lem:identical}). 
Whenever there exists a feasible root for the least valued agent, aka agent $1$, the algorithm assigns this root to $A_1$ as part of 
\textsc{Case 1}.
\textsc{Cases 2} and \textsc{3} handle situations where no root item is feasible for $A_1$. 
Here, the algorithm assigns a root item to a different agent, temporarily violating the \EFone{} condition. 
However, it is carefully designed to \emph{catch up} on such \EFone{} violation through subsequent assignments by assigning $o_t$'s children to agent 1 (or others as needed).
While doing so, we first prioritize the \textbf{leaf-children} of a node $o_t$, i.e. the children of $o_t$ (unallocated neighbors) that have a degree of $1$. 
This helps us ensure that, as long as there is an unallocated item, at least one of the three \textsc{Cases} must hold.

\begin{algorithm}[H]
\begin{algorithmic}[1]
\small
\STATE Initialize with an empty allocation $A=(\emptyset, \ldots, \emptyset)$. 
\STATE Root each of the $k$ trees at arbitrary nodes $r_1, \ldots, r_k$.
\WHILE{(there exists an unallocated item)}
\STATE Relabel bundles s.t. $v(A_1) \leq v(A_2) \leq \ldots \leq v(A_n)$.

\vspace{10pt}

\fcolorbox{gray}{white}{%
  \parbox{\dimexpr\linewidth-5\fboxsep-5\fboxrule}{%
    \textbf{\textsc{Case} 1:} The minimum-valued agent has a feasible root.
  }%
}
\vspace{2pt}

\IF{(agent $1$ has a feasible root node $o_t$)}
    \STATE Allocate $o_t$ to agent $1$, i.e., $A_1\gets A_1 \cup \{o_t\}$.
    \WHILE{(there is an unallocated leaf-child $o_\ell$ of $o_t$)} \label{line:case-1-leaf-child}
    \STATE Let $F = \argmin_{i\in[2,n]} v(A_i)$, and consider an agent $j\in F$. If $|F|\geq 2$, choose $j\in F$ such that there exists an agent in $F\setminus\{j\}$ with a feasible root.\label{line:tie-break_case1}
    \STATE Allocate $o_\ell$ to $j$, i.e., $A_j\gets A_j \cup \{o_\ell\}$.
    \ENDWHILE \label{line:case-1-leaf-child-end}

    \vspace{12pt}
    
\hspace*{-10pt}\fcolorbox{gray}{white}{%
  \parbox{\dimexpr\linewidth-2\fboxsep-2\fboxrule}{%
    \textbf{\textsc{Case} 2:} A feasible root can be allocated to the second minimum-valued agent.
  }%
}

\ELSIF{($v(A_1) > v(A_2\setminus \{o_2^*\})$ or agent $2$ has a feasible root $o_t$ such that $v(o_t)>v(o_2^*)$)}
    \STATE Allocate $o_t$ to agent $2$, i.e., $A_2\gets A_2 \cup \{o_t\}$. 
    \STATE Let $h_2^*\in A_2$ be s.t. $h_2^{*} \in \argmin_{h\in A_2} v(A_2\setminus\{h\})$.
    \WHILE{(there is an unallocated leaf-child $o_\ell$ of $o_t$)} \label{line:2while_start}
    \STATE Let $F = \argmin_{i\in[1]\cup[3,n]} v(A_i)$, and consider an agent $j\in F$. If $|F|\geq 2$, choose $j\in F$ such that there exists an agent in $F\setminus\{j\}$ with a feasible root. \label{line:tie-break_case2}
    \STATE Allocate $o_\ell$ to $j$, i.e., $A_j\gets A_j \cup \{o_\ell\}$.\label{line:case2_leaf_children}
    \ENDWHILE\label{line:2while_end}
    \WHILE{($v(A_1) < v(A_2\setminus\{h_2^*\})$)}\label{line:case2:non-leaf:begin}
        \STATE Allocate an unallocated child of $o_t$ to agent $1$.\label{line:case2_more_children}
    \ENDWHILE\label{line:case2:non-leaf:end}
    \vspace{12pt}

\hspace*{-10pt}\fcolorbox{gray}{white}{%
  \parbox{\dimexpr\linewidth-2\fboxsep-2\fboxrule}{%
    \textbf{\textsc{Case} 3:} A feasible root can be allocated to some other agent.
  }%
}
\vspace{3pt}

\ELSIF{(there exists $j\in \argmin_{i\in [3,n]} v(A_i\setminus \{o_i^*\})$ such that $v(A_1) > v(A_j \setminus\{o_j^*\})$)}
    \STATE Allocate $o_t$ to agent $j$, i.e., $A_j\gets A_j\cup\{o_t\}$.
    \STATE Allocate all leaves of $o_t$ to agent $1$. \label{line:case-3-leaf-child}
    \WHILE{($v(A_1) <\min \{v(A_2), v(A_j\setminus\{o_j^*\})\}$)} \label{line:case3_allocate children condn}
        \STATE Allocate an unallocated child of $o_t$ to agent $1$. \label{line:case3_allocate children condn:in} 
    \ENDWHILE \label{line:case3_allocate children condn:end}
\ENDIF
\ENDWHILE
\end{algorithmic}
\caption{Rooting peeling algorithm for $\EFone$ + $\SO$ allocations on forests 
} \label{alg:EF1_SO_trees}
\end{algorithm}

In particular, the algorithm proceeds by checking for the following three conditions sequentially:

\vspace{-3pt}

\begin{itemize}
    \item \textsc{Case 1:} \emph{Is there a feasible root $o_t$ for the least-valued agent (agent $1$)?} 
    \item \textsc{Case 2:} \emph{Can we allocate a feasible root $o_t$ to the second least-valued agent (agent $2$) such that any resulting $\EFone$ violation can be compensated for?} 
    \item \textsc{Case 3:} \emph{Is there an agent $j\neq 1,2$ who can receive a feasible root $o_t$ such that any resulting $\EFone$ violation can be remedied?} 
\end{itemize}

Throughout, we use $o_i^*$ to denote the item whose removal from $A_i$ causes the maximum drop in value of $A_i$, i.e., $o_i^* \in \argmin_{o\in A_i} v(A_i\backslash\{o\})$.

\begin{theorem}
[Existence of $\EFone$ + $\SO$ allocations]
\label{thm:forest}
    Given cut-valuation instance where $G = (V,E)$ is a forest, a complete allocation that is \EFone{} and \SO{} always exists and can be computed in polynomial time.
\end{theorem}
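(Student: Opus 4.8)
The plan is to prove correctness and polynomial-time termination of \Cref{alg:EF1_SO_trees} by establishing two invariants maintained across iterations of the main while-loop. First, I would argue the \SO{} guarantee is automatic: by \Cref{prop:SO_forests}, an allocation on a forest is \SO{} if and only if no two adjacent items share a bundle, and the algorithm \emph{only ever assigns feasible roots}---a root whose (unique possibly-assigned neighbor) parent lies in a different bundle---so the no-adjacent-pair property is preserved as an invariant. Since the graph is rooted and we peel from the top, each newly assigned vertex has at most one already-allocated neighbor (its parent), and feasibility of the root exactly rules out placing it with that parent. Hence every complete allocation the algorithm returns is \SO{}, and it remains only to establish \EFone{} at termination together with polynomial running time.

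The heart of the proof is the \EFone{} invariant. Using \Cref{lem:identical}, it suffices to track \EFone{} violations involving only the least-valued agent (agent $1$). The key idea, as the algorithm overview explains, is that \EFone{} may be \emph{temporarily} broken in \textsc{Cases 2} and \textsc{3} when a feasible root is handed to an agent $j\neq 1$, but these violations are ``caught up'' by subsequently assigning $o_t$'s children (prioritizing leaf-children) to agent $1$ (or to other low-valued agents). I would formalize this by proving that at the \emph{start} of each iteration of the outer while-loop the allocation restricted to the currently-assigned items is \EFone{}, and that within each case the internal while-loops restore \EFone{} before the next outer iteration begins. For \textsc{Case 1} this is immediate since agent $1$'s value only increases and the distributed leaf-children go to the currently-least-valued of the remaining agents. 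For \textsc{Cases 2} and \textsc{3}, the guarding conditions (e.g. $v(A_1) > v(A_2\setminus\{o_2^*\})$, or $v(o_t) > v(o_2^*)$, and the ``catch-up'' while-loops with conditions $v(A_1) < v(A_2\setminus\{h_2^*\})$ and $v(A_1) < \min\{v(A_2), v(A_j\setminus\{o_j^*\})\}$) are precisely engineered so that enough children of $o_t$ can be pushed to agent $1$ to close the gap; here I would need the crucial structural fact that on a forest an assigned root $o_t$ has enough children to raise $A_1$'s value appropriately, and that $o_t$'s value to its receiving agent is ``backed'' by those children's degrees.

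The main obstacle, and the part requiring the most care, is showing that \textbf{at least one of the three cases always applies} whenever an unallocated item remains, and that each case makes measurable progress so the algorithm terminates in polynomial time. For progress, I would use a potential argument analogous to the one in \Cref{thm:general:NT4}, tracking the lexicographically-ordered value vector (or the pair $(v(A_1), -n_1)$ together with the number of allocated items), and argue each outer iteration either strictly increases $v(A_1)$, decreases the count of least-valued agents, or assigns at least one new item---bounding the total number of iterations by a polynomial in $m$ and $n$. The subtle point is the exhaustiveness of the three cases: I would argue by contradiction that if none of \textsc{Cases 1--3} triggers, then every feasible root available either belongs to agent $1$ (contradicting failure of \textsc{Case 1}) or cannot be placed anywhere without an uncompensable \EFone{} violation---but the leaf-child prioritization ensures that whenever a root is assigned, its degree-$1$ children supply exactly the ``slack'' needed to compensate, so no root is ever truly unassignable as long as unassigned items and a rooted forest structure remain. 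Establishing this combinatorial completeness of the case analysis, rather than the individual \EFone{} bookkeeping, is where I expect the real difficulty to lie.
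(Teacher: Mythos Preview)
Your plan mirrors the paper's decomposition exactly: \SO{} by construction via feasible roots, an \EFone{} invariant across outer iterations (this is \Cref{lem:EF1_every_loop_trees}), and exhaustiveness of the three cases (this is \Cref{lem:only_these_three_cases}). Two points are worth flagging.

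First, your termination argument is more elaborate than necessary. Every outer iteration assigns at least one new item (the root $o_t$), so the loop runs at most $m$ times; the potential $(v(A_1),-n_1)$ is not needed here and in fact can stay constant across iterations (unlike in \Cref{alg:general:NT}, where it drives convergence because items are being \emph{transferred}, not newly assigned).

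Second, and more substantively, your sketch for exhaustiveness is not the paper's argument and, as written, has a gap. You propose to argue directly that if all three cases fail then ``no root is ever truly unassignable,'' relying on leaf-children providing slack. But failure of \textsc{Case 2} and \textsc{Case 3} is about \emph{numerical} conditions ($v(A_1)$ versus $v(A_j\setminus\{o_j^*\})$ and $v(o_t)$ versus $v(o_2^*)$), and there is no evident contradiction obtainable from the current allocation alone. The paper instead proves \Cref{lem:only_these_three_cases} by a \emph{backward-looking} case analysis: it examines which of \textsc{Cases 1--3} was executed in the \emph{previous} iteration and shows that the structure left behind (who received $o_t$, how many leaf-children were distributed, the tie-breaking in Lines~\ref{line:tie-break_case1} and~\ref{line:tie-break_case2}) forces at least one case to trigger now. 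For instance, after \textsc{Case 1}, since $n\ge 3$ some agent $j$ received strictly fewer leaf-children than $v(o_t)$, yielding $v(A_1^{\mathrm{new}})>v(A_j^{\mathrm{new}}\setminus\{h_j^*\})$; after \textsc{Cases 2/3}, the catch-up loops and tie-breaking leave either a feasible root for the new minimum agent or a strict inequality enabling \textsc{Case 2} or \textsc{3}. Without tracking the previous iteration, your direct contradiction does not go through; this is precisely where the ``real difficulty'' you anticipated lies, and it requires the inductive bookkeeping the paper carries out.
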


The algorithm, by construction, always assigns feasible root items, thereby ensuring that no two adjacent items are allocated to the same agent, i.e., the returned allocation is $\SO$. 
For \EFone{}, assigning an item to agent $1$ in \textsc{Case 1} preserves \EFone{}, since on removal of the last added item, $A_1$ was the least-valued bundle.
In contrast, when an item $o_t$ is assigned to an agent $j\neq 1$, each child of $o_t$ (unallocated neighbor of $o_t$) becomes a feasible root item for agent $1$, where each child has a value of at least $1$. 
Note that $o_t$ has at least $v(\{o_t\})-1$ (unassigned) children (since at most one neighbor of $o_t$, i.e. its parent has been assigned so far). 
Thus, agent $1$ can be \emph{almost} compensated for the resulting $\EFone$-violation (of at most $v(\{o_t\})$ toward $j$! 
\textsc{Cases 2} and 3 capture cases where this $\EFone$-violation can be \emph{fully} eliminated. 
In particular, \Cref{lem:EF1_every_loop_trees} shows that, after every execution of each of the \textsc{Cases}, the \EFone{} property is maintained.

\begin{restatable}{lemma}{lemtreesEFone}\label{lem:EF1_every_loop_trees}
    Consider an iteration of \Cref{alg:EF1_SO_trees} in which \textsc{Case X} is executed (where \textsc{X}$\in [3]$), and let the (partial) allocation before the iteration be $A^\mathrm{old}$ and after the iteration be $A^\mathrm{new}$. If $A^\mathrm{old}$ is $\EFone$, then $A^\mathrm{new}$ is also $\EFone$. 
\end{restatable}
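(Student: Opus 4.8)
The plan is to first exploit the structure enforced by the algorithm. Since \Cref{alg:EF1_SO_trees} only ever assigns \emph{feasible} roots, no bundle ever contains two adjacent vertices, so every reachable bundle is an independent set. For an independent set $S$ we have $v(S)=\sum_{o\in S}\deg(o)$, hence $v(S\setminus\{o\})=v(S)-\deg(o)$; in particular $o_i^*$ is simply a maximum-degree vertex of $A_i$ and $v(A_i\setminus\{o_i^*\})=v(A_i)-\max_{o\in A_i}\deg(o)$. Thus on the reachable configurations the valuation is \emph{additive in the degrees}, and every assignment strictly increases the receiving bundle's value. By \Cref{lem:identical}, to certify that $A^{\mathrm{new}}$ is \EFone{} it suffices to show the globally least-valued bundle \EFone{}-envies no one: after each case I would relabel, set $m'=\min_i v(A_i^{\mathrm{new}})$, and verify $m'\ge v(A_k^{\mathrm{new}})-\deg(o_k^\star)$ for every $k$, where $o_k^\star$ is a maximum-degree vertex of $A_k^{\mathrm{new}}$. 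Two reusable bounds drive every case. \textbf{(i) Balancing bound:} whenever leaf-children (degree-$1$ vertices) are distributed greedily to the current minimum of an agent set $T$, any agent $i\in T$ that receives a leaf had value-minus-one equal to the running minimum of $T$ at its last leaf; as values only grow, $v(A_i^{\mathrm{new}})-1\le v(A_{i'}^{\mathrm{new}})$ for all $i'\in T$. \textbf{(ii) Old-\EFone{} bound:} for an agent $k$ whose max-degree vertex is inherited from $A^{\mathrm{old}}$, the quantity $v(A_k)-\deg(o_k^*)$ is controlled by the incoming inequality $v(A_1^{\mathrm{old}})\ge v(A_k^{\mathrm{old}})-\deg(o_k^*)$, while the total value a bundle gains from $o_t$ is exactly $\deg(o_t)$ and the total leaf-mass released by $o_t$ is at most $\deg(o_t)$.

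I would treat \textsc{Case 1} as the template. After adding $o_t$ (degree $d_t\ge 1$) to agent $1$ and greedily distributing its $L\le d_t$ leaf-children over $\{2,\dots,n\}$, first note $m'\ge v(A_1^{\mathrm{old}})$ since no bundle loses value. Then I check each $k$: for $k=1$, removing $o_t$ gives $v(A_1^{\mathrm{new}})-\deg(o_1^\star)\le v(A_1^{\mathrm{old}})\le m'$ because $\deg(o_1^\star)\ge d_t$; for an unchanged $k$, the old-\EFone{} bound gives $v(A_k)-\deg(o_k^\star)\le v(A_1^{\mathrm{old}})\le m'$; for a leaf-receiving $k$, the balancing bound gives $v(A_k^{\mathrm{new}})-\deg(o_k^\star)\le \min_{i\ge 2}v(A_i^{\mathrm{new}})$ and the old-\EFone{} bound gives $v(A_k^{\mathrm{new}})-\deg(o_k^\star)\le v(A_1^{\mathrm{old}})+c_k\le v(A_1^{\mathrm{old}})+d_t=v(A_1^{\mathrm{new}})$, so taking the minimum of the two yields $\le m'$. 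These three checks establish \EFone{} for \textsc{Case 1}.

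For \textsc{Cases 2} and \textsc{3} the same two bounds handle all agents that are unchanged or only receive leaves, but two new features must be controlled. First, when \textsc{Case 1} fails every current root is a child of an $A_1$-vertex, so the children of $o_t$ are feasible roots for agent $1$; this is exactly what makes the catch-up loops well-defined. Second, $o_t$ is given to a non-minimum agent $p$ ($p=2$ in \textsc{Case 2}, $p=j$ in \textsc{Case 3}), creating a potentially large envy from agent $1$ toward $p$; the catch-up loop (Line~\ref{line:case2:non-leaf:begin}, resp.\ Line~\ref{line:case3_allocate children condn}) feeds agent $1$ the remaining children of $o_t$ until $v(A_1)$ reaches the threshold $v(A_2^{\mathrm{new}}\setminus\{h_2^*\})$, resp.\ $\min\{v(A_2),\,v(A_j^{\mathrm{new}}\setminus\{o_j^\star\})\}$. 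Since the threshold equals $v(A_p^{\mathrm{new}})-\deg(o_p^\star)$, this directly certifies that agent $1$ does not \EFone{}-envy $p$ (and in \textsc{Case 3} does not even envy agent $2$), and \EFone{} among the remaining agents follows as in \textsc{Case 1}.

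The step I expect to be the main obstacle is the dual direction in \textsc{Cases 2} and \textsc{3}: showing that no low-valued agent \EFone{}-envies the now-enlarged agent $1$. Here I would use that leaves are exhausted before the top-up, so each catch-up vertex has degree at least $2$ and the loop halts as soon as the threshold is first crossed; hence $v(A_1^{\mathrm{new}})$ minus a single maximum-degree vertex lies strictly below the threshold. The delicate part is to argue that this threshold is itself at most $m'$, for which I would invoke the case-entry conditions---the slack $v(A_1^{\mathrm{old}})>v(A_2^{\mathrm{old}}\setminus\{o_2^*\})$ or $\deg(o_t)>\deg(o_2^*)$ in \textsc{Case 2}, and the minimality $j\in\argmin_{i\ge 3}v(A_i\setminus\{o_i^*\})$ in \textsc{Case 3}---together with the incoming \EFone{} inequality, which bounds the gap $v(A_2^{\mathrm{old}})-v(A_1^{\mathrm{old}})$ by a single vertex degree. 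Combining the overshoot-by-one property, the balancing bound for the leaf-receivers, and these case conditions, and then appealing to \Cref{lem:identical}, would complete the proof that $A^{\mathrm{new}}$ is \EFone{} in every case.
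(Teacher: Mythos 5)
Your overall architecture is sound and close to the paper's: the reduction via \Cref{lem:identical} to checking only the new minimum-valued bundle, the observation that bundles are independent sets so $v(S\setminus\{o\})=v(S)-\deg(o)$, and your two reusable bounds together give a clean and correct treatment of \textsc{Case 1} (the paper instead checks all ordered pairs $i\to j$ directly, but the inequalities are the same). However, there is a genuine gap in \textsc{Cases 2} and \textsc{3}: you assert that the catch-up loop ``feeds agent $1$ the remaining children of $o_t$ until $v(A_1)$ reaches the threshold'' and that this ``directly certifies that agent $1$ does not \EFone{}-envy $p$,'' but you never verify that $o_t$ has \emph{enough} children for the threshold to be reached. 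Feasibility of the children for agent $1$ (your ``first new feature'') does not make the loop well-defined; the loop could exhaust all unallocated children of $o_t$ while $v(A_1)$ is still strictly below $v(A_p^{\mathrm{new}}\setminus\{h_p^*\})$, in which case agent $1$ still \EFone{}-envies $p$ and the whole case collapses. This is precisely where the paper does real work: it counts that $o_t$ has at least $v(o_t)-1$ children, each contributing at least $1$ to $A_1$, and then uses the \emph{entry conditions} of the case --- in \textsc{Case 2}, the disjunction $v(A_1^{\mathrm{old}})>v(A_2^{\mathrm{old}}\setminus\{o_2^*\})$ or $v(o_t)>v(o_2^*)$ (the latter forcing $h_2^*=o_t$), and in \textsc{Case 3}, the strict inequality $v(A_1^{\mathrm{old}})>v(A_j^{\mathrm{old}}\setminus\{o_j^*\})$ --- to show the gap to be closed is at most $v(o_t)-1$. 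You cite these entry conditions, but only for the separate purpose of bounding the threshold by $m'$, not for this termination argument.

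A secondary, related slip: in \textsc{Case 3} the loop condition is $v(A_1)<\min\{v(A_2),\,v(A_j\setminus\{o_j^*\})\}$, so on exit you only get one of the two inequalities $v(A_1)\ge v(A_2^{\mathrm{old}})$ or $v(A_1)\ge v(A_j^{\mathrm{new}}\setminus\{o_j^*\})$, not both; your parenthetical ``and in \textsc{Case 3} does not even envy agent $2$'' treats the disjunction as a conjunction. The paper handles this by first establishing $v(A_j^{\mathrm{new}}\setminus\{o_j^*\})\le v(A_2^{\mathrm{old}})$ (using that \textsc{Case 2} failed, hence $v(o_t)\le v(o_2^*)$ and $v(A_1^{\mathrm{old}})=v(A_2^{\mathrm{old}}\setminus\{o_2^*\})$), so that either disjunct suffices to rule out \EFone{}-envy toward $j$. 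You would need this chain, or an equivalent, to close the case.
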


Since the algorithm starts with an empty allocation which is trivially \EFone{}, inductively one can argue that the final resulting allocation will also be \EFone{}.
It remains to show that at least one of the three \textsc{Cases} is always satisfied whenever an item remains to be allocated [\Cref{lem:only_these_three_cases}]. 
In \Cref{lem:only_these_three_cases}, we show that \Cref{alg:EF1_SO_trees} always terminates with a complete allocation; that is, as long as there is an unallocated item, it can never be the case that none of the if-conditions ($\textsc{Cases}$) is satisfied.

\begin{restatable}{lemma}{completeallocation}
    During the execution of the algorithm whenever there is an unallocated item, at least one of the three \textsc{Cases} must hold. \label{lem:only_these_three_cases}
\end{restatable}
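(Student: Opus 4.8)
\textbf{Proof proposal for \Cref{lem:only_these_three_cases}.}

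The plan is to show the contrapositive at the level of a single iteration: assuming an unallocated item exists but all three \textsc{Case} conditions fail, I would derive a contradiction. The first step is to observe that since the forest is connected within each tree and we root-peel (assigning a root and re-rooting its unallocated children as new roots), the existence of an unallocated item guarantees the existence of at least one \emph{root} item $o_t$ whose parent is either already allocated or nonexistent. For each agent $i$, this root is feasible unless $o_t$'s (unique) parent lies in $A_i$; since a root has at most one allocated neighbor (its parent), $o_t$ is infeasible for at most one agent. Hence there are at least $n-1$ agents for whom $o_t$ is a feasible root, which I would use to pin down which agents can receive it.

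Next I would unpack what it means for each \textsc{Case} to fail. \textsc{Case 1} failing means agent $1$ (the least-valued agent) has no feasible root at all. Combined with the observation above that every root is feasible for all but at most one agent, this forces agent $1$'s parent to coincide, for \emph{every} remaining root, with the blocking vertex — so in particular agent $1$ is the unique blocked agent for each such root, and each root is feasible for all of agents $2, \ldots, n$. This is the key structural consequence I would extract from the failure of \textsc{Case 1}: every unallocated root can legally go to any of agents $2, \ldots, n$.

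With that in hand, I would analyze the failure of \textsc{Cases 2 and 3}. \textsc{Case 2} failing says that for the feasible root $o_t$ assignable to agent $2$, neither $v(A_1) > v(A_2 \setminus \{o_2^*\})$ holds nor does agent $2$ have a feasible root of value exceeding $v(o_2^*)$; \textsc{Case 3} failing says no agent $j \in [3,n]$ with minimal $v(A_j \setminus \{o_j^*\})$ satisfies $v(A_1) > v(A_j \setminus \{o_j^*\})$. I would argue that the conjunction of these negations, together with the sorted order $v(A_1) \le \cdots \le v(A_n)$ and the fact (from the failure of \textsc{Case 1}) that each root is feasible for all of $A_2, \ldots, A_n$, is unsatisfiable whenever an item remains. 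The intended contradiction is quantitative: the degree/value of the available feasible root, relative to the current gaps $v(A_i) - v(A_i \setminus \{o_i^*\})$, must trigger at least one of the two inequalities. I expect the cleanest route is to take the root $o_t$ of maximum value $v(\{o_t\}) = \deg(o_t)$ among all feasible roots and show that if it cannot be placed under \textsc{Case 2} or \textsc{Case 3}, then it could have been placed under \textsc{Case 1}, contradicting that \textsc{Case 1} also failed.

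The main obstacle I anticipate is the bookkeeping around $o_i^*$ and the ``catch-up'' guarantee: I must verify that the \emph{children} of $o_t$ genuinely suffice to compensate agent $1$ so that the while-loop conditions in \textsc{Cases 2 and 3} are consistent with the case entry conditions, i.e. that assigning $o_t$ to a higher-valued agent while channeling its (at least $\deg(o_t) - 1$) children to agent $1$ does not itself get stuck. Here I would lean on the earlier structural facts — that each node has at least $v(\{o_t\}) - 1$ unallocated children because at most its parent is allocated, and that leaf-children are prioritized — to ensure the inequalities defining the \textsc{Cases} partition all remaining configurations. Establishing that these three value-comparisons are jointly exhaustive, rather than merely pairwise plausible, is the delicate step, and I would handle it by a careful case split on whether $v(A_1)$ exceeds, equals, or falls short of the relevant $v(A_i \setminus \{o_i^*\})$ thresholds.
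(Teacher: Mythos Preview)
Your approach attempts a direct, static argument: assume at some moment all three \textsc{Cases} fail and derive a contradiction purely from the structure of the current partial allocation (together with the \EFone{} invariant from \Cref{lem:EF1_every_loop_trees}). This is \emph{not} what the paper does, and there is a genuine gap in your plan.

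Concretely, suppose all three \textsc{Cases} fail. Your first two deductions are fine: every remaining root has its parent in $A_1$ (so every root is feasible for agents $2,\ldots,n$), and combining the failure of \textsc{Cases 2, 3} with \EFone{} forces $v(A_1)=v(A_i\setminus\{o_i^*\})$ for all $i\ge 2$ and $v(\{o_t\})\le v(o_2^*)$ for every root $o_t$. But none of these is a contradiction, and your proposed closing move is circular: you say you would show that a maximum-value root ``could have been placed under \textsc{Case 1},'' yet you have already established that \textsc{Case 1} failing means \emph{no} root is feasible for agent~$1$. The case split on whether $v(A_1)$ exceeds, equals, or falls short of the thresholds $v(A_i\setminus\{o_i^*\})$ also does not help: \EFone{} together with the negation of \textsc{Cases 2, 3} pins everything to the equality case, and nothing further follows from value comparisons alone.

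The paper's proof is instead an \emph{invariant-style} argument over iterations: it assumes the current moment is the \emph{first} at which all three \textsc{Cases} fail, looks at which \textsc{Case} $X\in\{1,2,3\}$ was executed in the immediately preceding iteration, and shows from the specific updates of that \textsc{Case} that $A^{\mathrm{new}}$ must satisfy one of the \textsc{Cases}. Two ingredients you omit are essential there: (i) the analysis of which agents received leaf-children of the last allocated root $o_t$ (e.g.\ in \textsc{Case 1}, since $n\ge 3$ some agent $j\neq 1$ did not receive all leaf-children, forcing the strict inequality $v(A_1^{\mathrm{new}})>v(A_j^{\mathrm{new}}\setminus\{h_j^*\})$ that triggers \textsc{Case 2} or \textsc{3}); and (ii) the tie-breaking rules in Lines~\ref{line:tie-break_case1} and~\ref{line:tie-break_case2}, which are invoked precisely to guarantee that the new least-valued agent retains a feasible root when all of $o_t$'s children are leaves. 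Without tracking the previous iteration and these tie-breaks, the three \textsc{Cases} are simply not exhaustive over arbitrary \EFone{} partial allocations with the right root structure.
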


Combining these lemmas, we get that \Cref{alg:EF1_SO_trees} returns a complete allocation (by \Cref{lem:only_these_three_cases}) in $O(m)$ time that is \EFone{} (by \Cref{lem:EF1_every_loop_trees}) and \SO{} (by construction). Thus, \Cref{thm:forest} stands proved. 
The proofs of these lemmas are provided in \Cref{sec:proofs_forests}. 

\subsection{Proof of \Cref{thm:forest}}\label{sec:proofs_forests}

As discussed, our algorithm, by construction ensures that the returned allocation is \SO{}---no adjacent vertices (items) are allocated to the same agent. 
Thus, in order to prove \Cref{thm:forest}, we need to show that the \Cref{alg:EF1_SO_trees} returns a complete allocation (\Cref{lem:only_these_three_cases}) that is also \EFone{} (\Cref{lem:EF1_every_loop_trees}).
Before we prove these lemmas, we make a useful observation: At any iteration of the algorithm, no unallocated root node is leaf since all leaf-children are assigned in the same iteration as its parent. This is by construction---Lines \ref{line:case-1-leaf-child} to \ref{line:case-1-leaf-child-end}, \ref{line:2while_start} to \ref{line:2while_end}, and \ref{line:case-3-leaf-child} in \Cref{alg:EF1_SO_trees} ensure that all leaf-children of a root node $o_t$ are assigned in the same iteration as $o_t$.

\begin{observation}
     During any iteration of the algorithm, no unallocated root node is a leaf.  \label{lem:o_t not leaf}
\end{observation}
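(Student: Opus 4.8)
\textbf{Proof plan for Observation~\ref{lem:o_t not leaf}.}

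The plan is to argue directly from the structure of the algorithm, tracing how an unallocated node comes to be a root. Recall that each tree is initially rooted at an arbitrary node, and thereafter a node becomes a \emph{root} only by being an unallocated child (neighbor) of a node $o_t$ that was just assigned in some iteration. So the key is to show that every leaf-child of a freshly assigned node $o_t$ is itself assigned during that same iteration; consequently it can never survive as an unallocated root in a later iteration.

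First I would formalize the inductive invariant: \emph{at the start of every iteration, every unallocated root node has degree at least $2$.} The base case concerns the $k$ initial roots $r_1, \ldots, r_k$; since $G$ has no singleton components (by the standing assumption that $G$ has no singleton components and we may root each tree at a non-leaf when the tree has more than one vertex), the initial roots can be taken to have degree at least $2$, or more carefully, any initial root that happens to be a leaf has its unique neighbor as its only child, and that neighbor is handled in the first iteration. The cleaner route, which I would adopt, is to observe that the only nodes that ever become roots \emph{after} the first iteration are the unallocated children of a just-assigned node $o_t$, so it suffices to verify that all leaf-children get assigned when $o_t$ is assigned.

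The heart of the argument is a case check over the three \textsc{Cases}. In each case, immediately after allocating $o_t$ to some agent, the algorithm enters a \texttt{while} loop that explicitly exhausts the unallocated leaf-children of $o_t$: Lines~\ref{line:case-1-leaf-child}--\ref{line:case-1-leaf-child-end} in \textsc{Case 1}, Lines~\ref{line:2while_start}--\ref{line:2while_end} in \textsc{Case 2}, and Line~\ref{line:case-3-leaf-child} in \textsc{Case 3}. I would point out that each such loop terminates only when no unallocated leaf-child of $o_t$ remains, so at the end of the iteration every leaf-child of $o_t$ has been allocated. Hence no leaf-child of $o_t$ can ever appear as an unallocated root in a subsequent iteration. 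Combined with the base case, this shows that no unallocated root node is a leaf at any iteration, completing the induction.

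I do not anticipate a serious obstacle here, as the statement is essentially a bookkeeping consequence of the loop structure; the only point requiring mild care is the base case for the initially chosen roots $r_1, \ldots, r_k$. The cleanest way to dispose of it is to note that if an initial root $r_j$ were a leaf, its unique neighbor would be assigned in the iteration where $r_j$ is assigned (as its sole child), so $r_j$ being a leaf is harmless: it is assigned as a root in its \emph{first} appearance and never lingers. Thus the genuinely substantive content is the per-\textsc{Case} verification that the leaf-child \texttt{while} loops drain all degree-$1$ children of $o_t$ before the iteration ends.
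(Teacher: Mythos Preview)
Your approach matches the paper's, which disposes of the observation in one sentence: it holds ``by construction'' because the leaf-child loops (Lines~\ref{line:case-1-leaf-child}--\ref{line:case-1-leaf-child-end}, \ref{line:2while_start}--\ref{line:2while_end}, and \ref{line:case-3-leaf-child}) allocate every leaf-child of $o_t$ in the same iteration as $o_t$. You are more careful than the paper in flagging the base case for the initial roots $r_1,\dots,r_k$ (which the paper does not address), though your resolution there is slightly off: the unique neighbor of a leaf initial root need not itself be a leaf, so it is not necessarily assigned in that iteration; the cleanest fix is simply to choose each initial root to be a non-leaf whenever the tree has at least two edges.
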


Throughout this section, we assume that $n\ge 3$ and that there are no isolated nodes (i.e. all nodes have degree at least $1$).
With this, we are now ready to prove the two lemmas. We begin by proving that the returned allocation is \EFone{}. Toward this, we prove a stronger claim --- in any iteration of the algorithm, if the (partial) allocation before the iteration was \EFone{}, then the (partial) allocation after this iteration will also be \EFone{}.
Since the algorithm starts with an empty allocation that is trivially \EFone{}, inductively one can argue that the final resulting allocation will also be \EFone{}.

\lemtreesEFone*

\begin{proof}
Since the algorithm relabels the bundles to maintain non-decreasing bundle valued from $A_1$ to $A_n$, we have
\begin{align}
    v(A_1^\mathrm{old})\leq v(A_2^\mathrm{old})\leq \ldots \leq v(A_n^\mathrm{old}) \label{eq:order_of_bundles}
\end{align}
Furthermore, let $o_i^*\in\argmin_{o\in A_i^\mathrm{old}}v(A_i^\mathrm{old}\setminus\{o\})$ and similarly $h_i^*\in\argmin_{h\in A_i^\mathrm{new}}v(A_i^\mathrm{new}\setminus\{h\})$. 

Our analysis is based on which case $X\in[3]$ is executed. For each of these, we show that there is no \EFone{} violation between any pair of agents. We start by considering that \textsc{Case 1} was executed in going from $A^\textrm{old}$ to $A^\textrm{new}$. 

\vspace{0.5em}
    \noindent\textbf{\underline{\textsc{Case 1}:}} The assignment in \textsc{Case 1} of \Cref{alg:EF1_SO_trees} is such that agent $1$ receives $o_t$ and other agents receive the leaf-children of $o_t$, if any. That is, no agent's bundle value goes down. Consider $i, j \in [2, n]$.
    
    \paragraph{\boldmath From $i\to 1\ $:} First, we show that there is no $\EFone$-envy from an agent $i\in [2,n]$ towards agent $1$. Since in the value of $i$'s bundle does not decrease ($i$ may or may not receive an additional leaf-child of $o_t$), we have
    $
        v(A_i^\mathrm{new})\geq v(A_i^\mathrm{old}) \geq v(A_1^\mathrm{old}) = v(A_1^\mathrm{new}\setminus \{o_t\}).
    $ The second inequality follows from \eqref{eq:order_of_bundles}, and the third inequality is due to the update $A_1^\mathrm{new} = A_1^\mathrm{old}\cup \{o_t\}$. Thus, no agent has an \EFone{} violation towards agent $1$.
    
    \paragraph{\boldmath From agent $1\to i\ $:} Next, we show that there is no \EFone{} violation from agent $1$ to any other agent $i\in [2,n]$. Since $A^\mathrm{old}$ was $\EFone$, we know that $v(A_1^\mathrm{old})\geq v(A_i^\mathrm{old}\setminus\{o_i^*\}).$ 
    Also, since a node cannot have more children than its own value (which is equal to its degree), we have  
    $
        v(o_t) = \deg(o_t) \ge  \#\ \text{leaf-children of }o_t.
    $
    Thus, even if $i$ receives \emph{all} of $o_t$'s leaf-children, we get,
    \begin{align*}
        v(A_1^\mathrm{new}) &= v(A_1^\mathrm{old})+v(o_t)
        \ge v(A_i^\mathrm{old}\setminus \{o_i^*\}) + \#\ \text{leaf-children of }o_t
        \ge v(A_i^\mathrm{new}\setminus\{o_i^*\}).
    \end{align*}
    That is, agent $1$ does not $\EFone$-envy any agent $i\in[2,n]$. 

    \paragraph{\boldmath From $i\to j\ $:} Finally, consider any $i, j\in[2,n]$ such that $v(A_j^\mathrm{new})>v(A_i^\mathrm{new})$. 
    If $j$ does not receive any item in this iteration, since $A^\mathrm{old}$ was $\EFone$, we get
    \[v(A_i^\mathrm{new})\geq v(A_i^\mathrm{old})\geq v(A_j^\mathrm{old}\setminus\{o_j^*\}) = v(A_j^\mathrm{new}\setminus\{o_j^*\}),\] implying that $i$ does not $\EFone$-envy $j$.
    On the other hand, if $j$ receives an item in this iteration, then by Line \ref{line:tie-break_case1} in \Cref{alg:EF1_SO_trees}, we have that $v(A_j^\mathrm{new}) =  v(A_i^\mathrm{new})+1$ since only leaf-children, whose value is $1$, are allocated to the least-bundle in $[2,n]$. Thus, removal of any item from $A_j$ eliminates this envy since the graph contains no isolated vertices with value $0$.


 
    \vspace{0.5cm}

    \noindent \textbf{\underline{\textsc{Case 2}:}} The assignment is \textsc{Case 2} of is such that agent $2$ receives a root node $o_t$, and the children of $o_t$, starting with the leaf-children, are assigned to the agent in ${1}\cup [3,n]$ with the least value. Consider $i, j\in [3,n]$.
    
    \paragraph{\boldmath From $i\to 2\ $:} For any agent $i\in [3,n]$ to agent $2$, notice from \eqref{eq:order_of_bundles}, that $v(A_2^\mathrm{old})\leq v(A_i^\mathrm{old})$. 
    Thus, we have
    $v(A_i^\mathrm{new}) \geq v(A_i^\mathrm{old})\geq v(A_2^\mathrm{old}) = v(A_2^\mathrm{new}\setminus \{o_t\}),$ where
    the final inequality is because of the update $A_2^\mathrm{new} = A_2^\mathrm{old}\cup\{o_t\}$. 
    Thus, no agent $i\in [3,n]$ $\EFone$-envies agent $2$ in $A^\mathrm{new}$.

    \paragraph{\boldmath From agent $2\to i\ $:} Notice that, in $A^\mathrm{new}$, agent $2$ (who receives $o_t$) does not $\EFone$-envy any other agent $i\in[3,n]$ (who may receive some leaf-children of $o_t$). 
    This is because $A^\mathrm{old}$ was $\EFone$, and $v(o_t)\geq  \#\ \text{leaf-children of }o_t$, whereby
    \begin{align*}
        v(A_2^\mathrm{new}) = v(A_2^\mathrm{old})+v(o_t)\geq  v(A_i^\mathrm{old}\setminus \{o_i^{*}\})  + \#\ \text{leaf-children of }o_t\geq v(A_i^\mathrm{new}\setminus\{o_i^*\}). 
    \end{align*}

    \paragraph{\boldmath From $i\to j\ $:} If $v(A_j^\mathrm{new})>v(A_i^\mathrm{new})$ and $j$ does not receive any item in this iteration (i.e. $A^\textrm{new} = A^\textrm{old}$), then
    $v(A_i^\mathrm{new})\geq v(A_i^\mathrm{old})\geq v(A_j^\mathrm{old}\setminus\{o_j^*\}) = v(A_j^\mathrm{new}\setminus\{o_j^*\}).$ That is, there is no \EFone{}-envy from $i$ to $j$. 
    If $j$ receives an item in this iteration, then by Line \ref{line:tie-break_case2} in \Cref{alg:EF1_SO_trees}, we have that $v(A_j^\mathrm{new}) =  v(A_i^\mathrm{new})+1$ (value of a leaf is $1$), implying that the removal of any item from $A_j$ eliminates this envy. 

    \paragraph{\boldmath From agent $1 \to i\ $ and $1\to 2$:} If some agent $i\in[3,n]$ also receives an item, we know that $v(A_1^\mathrm{new})\geq v(A_i^\mathrm{old})\geq v(A_2^\mathrm{old}) = v(A_2^\mathrm{new}\setminus \{o_t\})$. 
    Towards an agent $j\in [3,n]$ if $j$ does not receive an item, we know that $v(A_1^\mathrm{new})>v(A_1^\mathrm{old})\geq v(A_j^\mathrm{old}\setminus\{o_j^*\}) =  v(A_j^\mathrm{new}\setminus\{o_j^*\})$. 
    If $j$ receives an item, then by Line \ref{line:tie-break_case2}, we know $v(A_j^\mathrm{new})\leq v(A_1^\mathrm{new})+1$; thus, on removal of any item from $j$'s bundle, any envy will be eliminated.

    If no agent in $i\in [3,n]$ receives an item: Agent $1$ does not $\EFone$-envy $i$ since $A^\mathrm{old}$ is $\EFone$, and $v(A_1^\mathrm{new})\geq v(A_1^\mathrm{old})$. 
    Towards agent $2$, by the while condition in Line \ref{line:case2:non-leaf:begin}, we know that when the while loop terminates, there is no $\EFone$-envy from agent $1$ to $2$. 
    It remains to show that the while loop will terminate, i.e., there are enough children of $o_t$ to allocate to agent $1$ such that $v(A_1^\mathrm{new})$ can reach $v(A_2^\mathrm{new} \setminus \{h_2^*\})$.
    Since $o_t$ has $v(o_t)-1$ children, each of which contributes at least $1$ to agent $1$'s bundle when allocated, giving $o_t$'s children to agent $1$ can increase her value by at least $v(o_t)-1$. 
    If $v(A_1^\mathrm{old})\geq v(A_2^\mathrm{old})-v(o_2^*)+1$, we have 
    \begin{align*}
        v(A_1^\mathrm{old}) + v(o_t)-1 &\ge v(A_2^\mathrm{old})-v(o_2^*) + v(o_t)\\
        &= v(A_2^\mathrm{new} \setminus v(o_2^*))\ge v(A_2^\mathrm{new} \setminus v(h_2^*)). 
    \end{align*}
    On the other hand, if $v(o_t) \geq v(o_2^*)+1$, we know that $h_2^* = o_t$. 
    Then, 
    \begin{align*}
        v(A_1^\mathrm{old}) + v(o_t)-1 &\ge v(A_2^\mathrm{old})-v(o_2^*) + v(o_t) - 1\\ &\ge v(A_2^\mathrm{old}) = v(A_2^\mathrm{new} \setminus v(h_2^*)). 
    \end{align*}

    \paragraph{\boldmath From $i \to 1\ $ and $2\to 1$:} With this, we are only left to evaluate the possible envy from agent $i\in [2,n]$ to agent $1$.
    For this, we perform the analysis by considering two cases based on whether the last allocated item of this iteration ($o_f$) was done in Line \ref{line:case2_leaf_children} or in Line \ref{line:case2_more_children}.

    \emph{If the last allocated item was in Line \ref{line:case2_leaf_children}:} If $v(A_1^\mathrm{new})>v(A_i^\mathrm{new})$, then from Line \ref{line:tie-break_case2}, it must be the case that $v(A_1^\mathrm{new}) = v(A_i^\mathrm{new})+1$, whereby any envy can be eliminated by removing any item from agent $1$'s bundle. 

    \emph{If the last allocated item was in Line \ref{line:case2_more_children}:} From the while condition in Line \ref{line:case2_more_children}, we know that the $v(A_1^\mathrm{new}\setminus\{o_f\}) < v(A_2^\mathrm{new}\setminus\{h_2^*\})$. 
    Furthermore, we know from the allocation process that $A_2^\mathrm{new}=A_2^\mathrm{old}\cup \{o_t\}$ whereby $ v(A_2^\mathrm{new}\setminus\{o_t\})=v(A_2^\mathrm{old})$. 
    Finally, from \eqref{eq:order_of_bundles}, we know that $v(A_2^\mathrm{old})\leq v(A_i^\mathrm{old})\leq v(A_i^\mathrm{new})$; the second inequality is because no agent's bundle decreases in value. 
    Thus, combining all, we have
    \begin{align*}
        v(A_1^\mathrm{new}\setminus\{o_f\}) < v(A_2^\mathrm{new}\setminus\{h_2^*\}) \leq v(A_2^\mathrm{new}\setminus\{o_t\})=v(A_2^\mathrm{old})\leq v(A_i^\mathrm{old})\leq v(A_i^\mathrm{new}).
    \end{align*}
    That is, no agent $i\in[2,n]$ $\EFone$-envies agent $1$.

\vspace{0.5cm}

With this, we consider the final case and show that if \textsc{Case 3} was executed while going from $A^\mathrm{old}\to A^\mathrm{new}$ and if $A^\mathrm{old}$ was \EFone{}, then $A^\mathrm{new}$ must also be \EFone{}.
    
    \vspace{0.5cm}

    \noindent \textbf{\underline{\textsc{Case 3}:}} Let $j\in [3,n]$ be the agent who receives the feasible root $o_t$. Also, consider an arbitrary agent $i\in[2,n]$ such that $i\neq j$. Notice that any such agent $i$ is not allocated an item in this iteration.
    
    \paragraph{\boldmath From any agent $k\ne i$ to $i$:}
    For any $i$, notice that $v(A_i^\mathrm{new}) = v(A_i^\mathrm{old})$. 
    Also, for every agent $k$, we have, $v(A_k^\mathrm{new})\geq v(A_k^\mathrm{old})$. 
    Thus, since $A^\mathrm{old}$ was $\EFone$, we get that 
    \[v(A_k^\mathrm{new})\geq v(A_k^\mathrm{old})\geq v(A_i^\mathrm{old}\setminus \{o_i^*\})= v(A_i^\mathrm{new}\setminus \{o_i^*\})\]

    \paragraph{\boldmath From $i\to j\ $:}
    Since \textsc{Case 2} does not hold but $A^\mathrm{old}$ is $\EFone$, we get that $v(A_1^\mathrm{old}) = v(A_2^\mathrm{old}\setminus \{o_2^*\})$. 
    Thus, from the condition of \textsc{Case 3}, we get
    $
        v(A_j^\mathrm{old}) - v(o_j^*) < v(A_1^\mathrm{old}) = v(A_2^\mathrm{old}) - v(o_2^*).
    $ Adding $v(o_t)$ to both sides, we get
    \begin{align*}
        v(A_j^\mathrm{old}) - v(o_j^*) + v(o_t) \leq v(A_2^\mathrm{old}) - v(o_2^*) + v(o_t)\leq v(A_2^\mathrm{old}),
    \end{align*} 
    where the last inequality is because \textsc{Case 2} does not hold and it must be the case that $v(o_t) \leq v(o_2^*)$. 
    Moreover, the left side of the equation is simply $v(A_j^\mathrm{new})-v(o_j^*)$ since $o_t$ was allocated to agent $j$. 
    Also, from \eqref{eq:order_of_bundles} and the values of bundles do not decrease, we get that $v(A_2^\mathrm{old})\leq v(A_i^\mathrm{old})\leq v(A_i^\mathrm{new})$ for agent $i\in[2,n]$. Combining it all together, we get 
    \[v(A_j^\mathrm{new}) - v(o_j^*) \leq v(A_i^\mathrm{new}),\] i.e., there is no $\EFone$-envy from any agent $i\in [2,n]$ to $j$.

    \paragraph{\boldmath From agent $1\to j\ $:} To show that there is no $\EFone$-envy toward $j$, we need to show that the while loop through Lines \ref{line:case3_allocate children condn} to \ref{line:case3_allocate children condn:end} terminates. 
    That is, $v(A_1^\mathrm{new}) \ge v(A_2^\mathrm{old})$ or $v(A_1^\mathrm{new}) \ge v(A_j^\mathrm{new} \setminus\{o_j^*\})$.
    Note that we have just shown that there is no EF1-envy towards $j$ if the former holds. 
    In particular, we need to show that there are enough children of $o_t$ to cover-up for any created $\EFone$-envy toward $j$. 
    For this, notice that, from the condition of  \textsc{Case 3}, we have that
    $
        v(A_j^\mathrm{old}) - v(o_j^*) < v(A_1^\mathrm{old}),$ i.e.,
        \[~ v(A_j^\mathrm{old}) - v(o_j^*) +1 \leq v(A_1^\mathrm{old}).\]
     Moreover, we know that 
    $
        \#\ \text{children of }o_t \in \{v(o_t), v(o_t)-1\},$ i.e.,
        \[v(o_t)-1\leq \#\ \text{of children of }o_t.\]
     Thus, adding the two equations, we get 
    \begin{align*}
        v(A_j^\mathrm{new}\setminus\{o_j^*\}) = v(A_j^\mathrm{old}) - v(o_j^*) + v(o_t)\leq v(A_1^\mathrm{old}) + (\#\ \text{of children of }o_t).
    \end{align*}
    \paragraph{\boldmath From any agent $k\to 1\ $:}

    First, we claim that there is no envy toward agent $1$ after all leaf-children of $o_t$ are assigned to agent $1$'s bundle. 
    Note that since we are in \textsc{Case 3}, it means that \textsc{Case 2} does not hold. 
    That is, it must be the case that $v(o_t)\leq v(o_2^*)$ and $v(A_1^\mathrm{old}) = v(A_2^\mathrm{old}\setminus \{o_2^*\})$ since $A^\mathrm{old}$ is $\EFone$. 
    Using this and the fact that the $\#\ \text{leaf-children of }o_t \leq v(o_t)$, we get 
    \begin{align*}
        v(A^\mathrm{old}_1) + \#\ \text{leaf-children of }o_t \leq v(A^\mathrm{old}_1) + v(o_t)= v(A^\mathrm{old}_2\setminus \{o_2^*\}) + v(o_t)\leq v(A^\mathrm{old}_2).
    \end{align*} From \eqref{eq:order_of_bundles}, we know that $v(A_2^\mathrm{old})\leq v(A_k^\mathrm{old})$ and since the bundle values do not decrease, we get $v(A^\mathrm{old}_1) + \#\ \text{leaf-children of }o_t \leq v(A_k^\mathrm{new})$.

    If the last allocated item was in Line \ref{line:case3_allocate children condn:in}, from its while loop condition we know that before that last item $o_f$ was allocated to agent $1$, we had $v(A_1^\mathrm{new}\setminus\{o_f\}) < v(A_2^\mathrm{old}) \leq v(A_k^\mathrm{new})$. 
    Thus, in $A^\mathrm{new}$, any agent $k$ does not envy agent $1$ up to the removal of the last added item $o_f$.

\end{proof}

\completeallocation*
\begin{proof}
    Toward a contradiction, suppose that we encounter a point for the first time when none of the three if-conditions (\textsc{Cases}) are satisfied. 
    Since this is the first time such a situation occurs, it must be that one of the if-conditions was satisfied in the previous iteration. 
    Let the allocation before that iteration be $A^\mathrm{old}=(A_1^\mathrm{old},\ldots, A_n^\mathrm{old})$ and the allocation after that iteration be $A^\mathrm{new}=(A_1^\mathrm{new}, \ldots, A_n^\mathrm{new})$. 
    Let $o_i^*\in\argmin_{o\in A_i^\mathrm{old}}v(A_i^\mathrm{old}\setminus\{o\})$ and similarly $h_i^*\in\argmin_{h\in A_i^\mathrm{new}}v(A_i^\mathrm{new}\setminus\{h\})$. 
    Again, without loss of generality, we assume that $v(A_1^\mathrm{old})\leq v(A_2^\mathrm{old})\leq \ldots \leq v(A_n^\mathrm{old})$.
    
    We will show that $A^\mathrm{new}$ must satisfy at least one of the three \textsc{Cases}. 
    In particular, we consider three cases based on whether the first, second, or third if-loop was executed when going from $A^\mathrm{old}$ to $A^\mathrm{new}$.

    \vspace{0.3cm}
    
    \noindent \underline{\textsc{Case 1:}} The first if-loop was executed (i.e., agent $1$ received a feasible root $o_t)$.

    \emph{If agent $1$ is still the minimum-valued agent}, we argue that either \textsc{Case 2 or 3} must hold, i.e., there exists an agent $j$ for whom $v(A_1^\mathrm{new}) > v(A_j^\mathrm{new}) - v(h_j^*)$.
    We know that $v(o_t)$ is at least as much as the number of leaf-children of $o_t$, i.e. $v(o_t) \geq \#\ \text{leaf-children of $o_t$}.$ 
    Since $n\geq 3$, there exists an agent $j$ who does not receive all leaf-children of $o_t$, i.e., $v(o_t) > \#\ \text{leaf-children of $o_t$ received by $j$}.$ Also, from \Cref{lem:EF1_every_loop_trees}, we know that the allocation $A^\mathrm{old}$ is $\EFone$, i.e., $v(A_1^\mathrm{old}) \geq v(A_{j}^\mathrm{old}\setminus \{o_j^*\}).$ Combining both together, we get that 
    \begin{align*}
        v(A_1^\mathrm{new})& = v(A_1^\mathrm{old})+v(o_t)\\& > v(A_j^\mathrm{old}\setminus\{o_j^*\}) + \#\ \text{leaf-children of $o_t$ received by $j$}\\& = v(A_j^\mathrm{new}\setminus\{o_j^*\})\geq v(A_j^\mathrm{new}\setminus\{h_j^*\}).
    \end{align*}

    \vspace{3pt}

    \emph{If agent $1$ is no longer the least-valued agent}, let the least-valued agent be $k$. If $k$ received a leaf-child of $o_t$, we know that
    $
        v(A_k^\mathrm{new})>v(A_k^\mathrm{old}).
    $
    Since agent $1$ was the least-valued agent in $A^\mathrm{old}$, 
    $v(A_k^\mathrm{old}) \geq v(A_1^\mathrm{old}).$  Moreover, $v(A_1^\mathrm{old})=v(A_1^\mathrm{new}) - v(o_t)$. Combining all, we get 
    \[v(A_k^\mathrm{new}) > v(A_1^\mathrm{new}) - v(o_t)\geq v(A_1^\mathrm{new}\setminus\{h_1^*\}),\] i.e., there exists an agent $k$ against whom \textsc{Case 2 or 3} must hold.

    Finally, if agent $k$ did not receive any leaf-children of $o_t$ and $k$ is the minimum-valued agent, then if $o_t$ has any non-leaf children, then such a node would be a feasible root node for $k$, i.e., we would be in \textsc{Case 1}. 
    If $o_t$'s children are only leaf nodes, then, when the last leaf-child of $o_t$ was being added, there must have been two nodes with the same minimum value. 
    In this case, Line \ref{line:tie-break_case1} in \Cref{alg:EF1_SO_trees} ensures that $k$ has a feasible root, whereby again satisfying in \textsc{Case 1}. 
    
    From \Cref{lem:o_t not leaf}, these are the only possibilities since $o_t$ cannot be a leaf.

    \vspace{0.5cm}
        
    \noindent \underline{\textsc{Case 2:}} The second if-loop was executed (i.e., the second-minimum valued agent $2$ received a feasible root $o_t$).

    \emph{If the while-loop through Lines \ref{line:case2:non-leaf:begin} to \ref{line:case2:non-leaf:end} does not get executed, or, if there is a time when $v(A^\mathrm{new}_1) = v(A_3)$ during execution of Lines \ref{line:2while_start}-\ref{line:2while_end},} then the condition in Line \ref{line:tie-break_case2} ensures that for an agent with the least value, there must be a feasible root, i.e., \textsc{Case 1} must hold. 
    If $v(A^\mathrm{new}_1)$ never reaches $v(A_3)$, then only agent $1$ may get a leaf-child of $o_t$ during execution of Lines \ref{line:2while_start}-\ref{line:2while_end}. 
    Since $v(A_1^\mathrm{old}) < v(A_2^\mathrm{old})$ and $\#\ \text{leaf-children of }o_t \le v(o_t)$, we have $v(A_1^\mathrm{new}) < v(A_2^\mathrm{new})$ and thus agent $1$ is still the least agent. 
    Consider two possibilities: First, if agent $1$ does not get a leaf-child of $o_t$ either, then all children of $o_t$ are non-leaf nodes and become feasible roots for agent $1$, i.e., \textsc{Case 1} must hold.
    Second, if agent $1$ gets $o_t$'s leaf children, then $v(A_1^\mathrm{new}) > v(A_1^\mathrm{old}) \ge v(A_i^\mathrm{old} \setminus \{o_i^*\}) = v(A_i^\mathrm{new} \setminus \{h_i^*\})$ holds for every $i \in [3, n]$, i.e., \textsc{Case 2 or 3} must hold.

    \emph{If the while-loop through Lines \ref{line:case2:non-leaf:begin} to \ref{line:case2:non-leaf:end} gets executed}, then just before the while-loop, it holds that 
    \begin{align*}
        v(A_1^\mathrm{new}) &< v(A_2^\mathrm{new} \setminus \{h_2^*\})\\ &\le v(A_2^\mathrm{new} \setminus \{o_t\})\\ &= v(A_2^\mathrm{old})\le v(A_3^\mathrm{old}). 
    \end{align*}
    This implies that no agent $i\in[3,n]$ obtains an item during execution of Lines \ref{line:2while_start}-\ref{line:2while_end}.  
    If $i\in[2,n]$ is the least valued agent in $A^\mathrm{new}$, we show that we are in \textsc{Case 2 or 3}. Before the last item $o_f$ was allocated to $A_1$, we know \begin{align*}v(A_1^\mathrm{new}\setminus\{o_f\})&< v(A^\mathrm{new}_2\setminus\{h_2^*\})\\ 
    &\leq v(A^\mathrm{new}_2\setminus\{o_t\})\\
    &\leq v(A_i^\mathrm{old})\leq v(A_i^\mathrm{new}).\end{align*} 
    If agent $1$ remains the least-valued agent in $A^\mathrm{new}$, we have 
    \[
        v(A_i^\mathrm{new}\setminus\{h_i^*\}) = v(A_i^\mathrm{old}\setminus\{o_i^*\})\leq v(A_1^\mathrm{old}) < v(A_1^\mathrm{new}), 
    \]
    where the second last inequality is because $A^\mathrm{old}$ satisfies $\EFone$ and the last inequality is because agent $1$ gets items during execution of the while-loop through Lines \ref{line:case2:non-leaf:begin} to \ref{line:case2:non-leaf:end}. 
    Therefore, \textsc{Case 2 or 3} must hold for the least-valued agent in $A^\mathrm{new}$.

    \vspace{0.5cm}

    \noindent \underline{\textsc{Case 3:}} The third if-loop was executed (i.e., a feasible root was allocated to the agent $j\in[3,n]$ with the minimum $v(A_j\setminus\{o_j^*\})$).

    If an agent $i\in [2,n]$ is the least-valued agent in $A^\mathrm{new}$, recall that after allocating all $o_t$'s leaf-children to agent $1$, we have $v(A_1^\mathrm{new}) = v(A_1^\mathrm{old}) + \#\ \text{leaf-children of }o_t \le v(A_2^\mathrm{old})$. 
    Combining this fact with the while condition on Line~\ref{line:case3_allocate children condn}, we know that before the final item, call $o_f$, was allocated to agent $1$ , \[v(A_1^\mathrm{new}\setminus\{o_f\}) < v(A_2^\mathrm{old})\leq v(A_i^\mathrm{old}) \leq v(A_i^\mathrm{new}).\] 
    Thus, either \textsc{Case 2 or 3} must hold for the least-valued agent $i$ in $A^\mathrm{new}$.

    If agent $1$ remains the least-valued agent, we note that the bundle of any agent in $i\in[2,n] \setminus \{j\}$ has not changed. 
    If agent $1$ receives an item, then we have 
    \[
        v(A_1^\mathrm{new}) >v(A_1^\mathrm{old})\geq v(A_i^\mathrm{old}\setminus\{o_i^*\}) = v(A_i^\mathrm{new}\setminus\{h_i^*\}), 
    \]
    which means that \textsc{Case 2 or 3} must hold for agent $1$ (minimum-valued agent in $A^\mathrm{new}$). 
    If agent $1$ does not receives an item either, then all children of $o_t$ are non-leaf nodes and become feasible roots for agent $1$, i.e., \textsc{Case 1} must hold. 
\end{proof}


\section{Concluding Remarks}
We studied the compatibility of \EFone{} with various efficiency guarantees under cut-valuations, noting surprising non-monotonicity in existence guarantees against the number of agents. 
Since the cut-valuations are identical, all our results extend to the equitability (\EQ)-based fairness notions.
Future work includes exploring stronger fairness notions such as \EF{X} and \EQ{X}.
Another direction is to relax the fairness requirement.
In \cref{app:appEF1PO}, we show that a $\frac{1}{2}$-\EFone{} allocation that is also PO always exists for general graphs. While the approximation factor cannot improve beyond $\frac{2}{3}$ (\Cref{thm:general:non_exist}), it remains an open question whether (and how far) it can be improved.
Finally, understanding the complexity of deciding whether a given instance admits an \EFone{} allocation satisfying a target efficiency criterion $X\in \{\NT{}, \PO{}, \SO{}\}$ is an interesting future direction.

\section*{Acknowledgments}
This research was supported by the National Science Foundation (NSF) through CAREER Award IIS-2144413 and Award IIS-2107173.
Yu Zhou is supported by Research Start-up Funds of Beijing Normal University at Zhuhai (No. 310425209505). 
We thank the anonymous reviewers for their fruitful comments. We are grateful to Siddharth Barman for introducing this problem, and for many helpful discussions.

\bibliographystyle{plainnat}
\bibliography{ref}

@inproceedings{DBLP:conf/ijcai/BiswasB18,
  author       = {Arpita Biswas and
                  Siddharth Barman},
  title        = {Fair Division Under Cardinality Constraints},
  booktitle    = {{IJCAI}},
  pages        = {91--97},
  publisher    = {ijcai.org},
  year         = {2018}
}

@inproceedings{christodoulou2023fair,
  title={Fair allocation in graphs},
  author={Christodoulou, George and Fiat, Amos and Koutsoupias, Elias and Sgouritsa, Alkmini},
  booktitle={Proceedings of the 24th ACM Conference on Economics and Computation},
  pages={473--488},
  year={2023}
}

@inproceedings{zhou2024complete,
  title={A Complete Landscape of EFX Allocations on Graphs: Goods, Chores and Mixed Manna},
  author={Zhou, Yu and Wei, Tianze and Li, Minming and Li, Bo},
  booktitle={33rd International Joint Conference on Artificial Intelligence (IJCAI 2024)},
  pages={3049--3056},
  year={2024},
  organization={International Joint Conferences on Artificial Intelligence}
}

@article{christodoulou2024fair,
  title={Fair and Truthful Allocations Under Leveled Valuations},
  author={Christodoulou, George and Christoforidis, Vasilis},
  journal={arXiv preprint arXiv:2407.05891},
  year={2024}
}

@article{deligkas2024ef1,
  title={EF1 and EFX orientations},
  author={Deligkas, Argyrios and Eiben, Eduard and Goldsmith, Tiger-Lily and Korchemna, Viktoriia},
  journal={arXiv preprint arXiv:2409.13616},
  year={2024}
}

@inproceedings{chiarelli2020fair,
  title={Fair packing of independent sets},
  author={Chiarelli, Nina and Krnc, Matja{\v{z}} and Milani{\v{c}}, Martin and Pferschy, Ulrich and Piva{\v{c}}, Nevena and Schauer, Joachim},
  booktitle={Combinatorial Algorithms: 31st International Workshop, IWOCA 2020, Bordeaux, France, June 8--10, 2020, Proceedings 31},
  pages={154--165},
  year={2020},
  organization={Springer}
}

@inproceedings{DBLP:conf/aaai/BiswasB19,
  author       = {Arpita Biswas and
                  Siddharth Barman},
  title        = {Matroid Constrained Fair Allocation Problem},
  booktitle    = {{AAAI}},
  pages        = {9921--9922},
  publisher    = {{AAAI} Press},
  year         = {2019}
}

@inproceedings{DBLP:conf/ijcai/BouveretCEIP17,
  author       = {Sylvain Bouveret and
                  Katar{\'{\i}}na Cechl{\'{a}}rov{\'{a}} and
                  Edith Elkind and
                  Ayumi Igarashi and
                  Dominik Peters},
  title        = {Fair Division of a Graph},
  booktitle    = {{IJCAI}},
  pages        = {135--141},
  publisher    = {ijcai.org},
  year         = {2017}
}

@article{bilo2022almost,
  title={Almost envy-free allocations with connected bundles},
  author={Bil{\`o}, Vittorio and Caragiannis, Ioannis and Flammini, Michele and Igarashi, Ayumi and Monaco, Gianpiero and Peters, Dominik and Vinci, Cosimo and Zwicker, William S},
  journal={Games and Economic Behavior},
  volume={131},
  pages={197--221},
  year={2022},
  publisher={Elsevier}
}

@article{bei2022price,
  title={The price of connectivity in fair division},
  author={Bei, Xiaohui and Igarashi, Ayumi and Lu, Xinhang and Suksompong, Warut},
  journal={SIAM journal on Discrete Mathematics},
  volume={36},
  number={2},
  pages={1156--1186},
  year={2022},
  publisher={SIAM}
}

@article{DBLP:journals/aamas/HummelH22,
  author       = {Halvard Hummel and
                  Magnus Lie Hetland},
  title        = {Fair allocation of conflicting items},
  journal      = {Auton. Agents Multi Agent Syst.},
  volume       = {36},
  number       = {1},
  pages        = {8},
  year         = {2022}
}

@inproceedings{DBLP:conf/nips/LiLZ21,
  author       = {Bo Li and
                  Minming Li and
                  Ruilong Zhang},
  title        = {Fair Scheduling for Time-dependent Resources},
  booktitle    = {NeurIPS},
  pages        = {21744--21756},
  year         = {2021}
}

@inproceedings{DBLP:conf/www/WangL24,
  author       = {Fangxiao Wang and
                  Bo Li},
  title        = {Fair Surveillance Assignment Problem},
  booktitle    = {{WWW}},
  pages        = {178--186},
  publisher    = {{ACM}},
  year         = {2024}
}

@article{hosseini2025fair,
  title={Fair distribution of delivery orders},
  author={Hosseini, Hadi and Narang, Shivika and W{\k{a}}s, Tomasz},
  journal={Artificial Intelligence},
  pages={104389},
  year={2025},
  publisher={Elsevier}
}

@inproceedings{li2023partitioning,
  title={Partitioning friends fairly},
  author={Li, Lily and Micha, Evi and Nikolov, Aleksandar and Shah, Nisarg},
  booktitle={Proceedings of the AAAI Conference on Artificial Intelligence},
  volume={37},
  number={5},
  pages={5747--5754},
  year={2023}
}

@inproceedings{cookson2025constrained,
  title={Constrained fair and efficient allocations},
  author={Cookson, Benjamin and Ebadian, Soroush and Shah, Nisarg},
  booktitle={Proceedings of the AAAI Conference on Artificial Intelligence},
  volume={39},
  number={13},
  pages={13718--13726},
  year={2025}
}

@article{barman2025fair,
  title={Fair Division Beyond Monotone Valuations},
  author={Barman, Siddharth and Verma, Paritosh},
  journal={arXiv preprint arXiv:2501.14609},
  year={2025}
}

@article{barman2020existence,
  title={Existence and computation of maximin fair allocations under matroid-rank valuations},
  author={Barman, Siddharth and Verma, Paritosh},
  journal={arXiv preprint arXiv:2012.12710},
  year={2020}
}

@article{igarashi2025dividing,
  title={Dividing Conflicting Items Fairly},
  author={Igarashi, Ayumi and Manurangsi, Pasin and Yoneda, Hirotaka},
  journal={arXiv preprint arXiv:2506.14149},
  year={2025}
}

@inproceedings{kumar2024fair,
  title={Fair Scheduling of Indivisible Chores},
  author={Kumar, Yatharth and Equbal, Sarfaraz and Gurjar, Rohit and Nath, Swaprava and Vaish, Rohit},
  booktitle={Proceedings of the 23rd International Conference on Autonomous Agents and Multiagent Systems},
  pages={2345--2347},
  year={2024}
}

@article{bilo2025almost,
  title={On Almost Fair and Equitable Allocations of Indivisible Items for Non-monotone Valuations},
  author={Bil{\`o}, Vittorio and Loebl, Martin and Vinci, Cosimo},
  journal={arXiv preprint arXiv:2503.05695},
  year={2025}
}

@inproceedings{lipton2004approximately,
  title={On approximately fair allocations of indivisible goods},
  author={Lipton, Richard J and Markakis, Evangelos and Mossel, Elchanan and Saberi, Amin},
  booktitle={Proceedings of the 5th ACM Conference on Electronic Commerce},
  pages={125--131},
  year={2004}
}

@article{budish2011combinatorial,
  title={The combinatorial assignment problem: Approximate competitive equilibrium from equal incomes},
  author={Budish, Eric},
  journal={Journal of Political Economy},
  volume={119},
  number={6},
  pages={1061--1103},
  year={2011},
  publisher={University of Chicago Press Chicago, IL}
}

@inproceedings{garey1974some,
  title={Some simplified NP-complete problems},
  author={Garey, Michael R and Johnson, David S and Stockmeyer, Larry},
  booktitle={Proceedings of the sixth annual ACM symposium on Theory of computing},
  pages={47--63},
  year={1974}
}

@article{liu2024mixed,
  title={Mixed fair division: A survey},
  author={Liu, Shengxin and Lu, Xinhang and Suzuki, Mashbat and Walsh, Toby},
  journal={Journal of Artificial Intelligence Research},
  volume={80},
  pages={1373--1406},
  year={2024}
}

@article{bhaskar2020approximate,
  title={On approximate envy-freeness for indivisible chores and mixed resources},
  author={Bhaskar, Umang and Sricharan, AR and Vaish, Rohit},
  journal={arXiv preprint arXiv:2012.06788},
  year={2020}
}

@article{mahara2025existence,
  title={Existence of Fair and Efficient Allocation of Indivisible Chores},
  author={Mahara, Ryoga},
  journal={arXiv preprint arXiv:2507.09544},
  year={2025}
}

@article{caragiannis2019unreasonable,
  title={The unreasonable fairness of maximum Nash welfare},
  author={Caragiannis, Ioannis and Kurokawa, David and Moulin, Herv{\'e} and Procaccia, Ariel D and Shah, Nisarg and Wang, Junxing},
  journal={ACM Transactions on Economics and Computation (TEAC)},
  volume={7},
  number={3},
  pages={1--32},
  year={2019},
  publisher={ACM New York, NY, USA}
}

@inproceedings{bhaskar2025towards,
  title={Towards Envy-Freeness Relaxations for General Nonmonotone Valuations},
  author={Bhaskar, Umang and Kumar, Gunjan and Pandit, Yeshwant and Rakshitha},
  booktitle={Proceedings of the 24th International Conference on Autonomous Agents and Multiagent Systems},
  pages={298--306},
  year={2025}
}

@article{hosseini2025algorithmic,
  title={The Algorithmic Landscape of Fair and Efficient Distribution of Delivery Orders in the Gig Economy},
  author={Hosseini, Hadi and Schierreich, {\v{S}}imon},
  journal={arXiv preprint arXiv:2503.16002},
  year={2025}
}

@inproceedings{deligkas2025balanced,
  title={Balanced and fair partitioning of friends},
  author={Deligkas, Argyrios and Eiben, Eduard and Ioannidis, Stavros D and Knop, Du{\v{s}}an and Schierreich, {\v{S}}imon},
  booktitle={Proceedings of the AAAI Conference on Artificial Intelligence},
  volume={39},
  number={13},
  pages={13754--13762},
  year={2025}
}

@inproceedings{hosseini2023fairly,
  title={Fairly allocating goods and (terrible) chores},
  author={Hosseini, Hadi and Mammadov, Aghaheybat and W{\k{a}}s, Tomasz},
  booktitle={Proceedings of the Thirty-Second International Joint Conference on Artificial Intelligence},
  pages={2738--2746},
  year={2023}
}

@article{barman2025fairandefficient,
  title={Fair and Efficient Allocation of Indivisible Mixed Manna},
  author={Barman, Siddharth and HV, Vishwa Prakash and Sethia, Aditi and Suzuki, Mashbat},
  journal={arXiv preprint arXiv:2507.03946},
  year={2025}
}

@article{DBLP:journals/aamas/AzizCIW22,
  author       = {Haris Aziz and
                  Ioannis Caragiannis and
                  Ayumi Igarashi and
                  Toby Walsh},
  title        = {Fair allocation of indivisible goods and chores},
  journal      = {Auton. Agents Multi Agent Syst.},
  volume       = {36},
  number       = {1},
  pages        = {3},
  year         = {2022}
}

\appendix
\setcounter{footnote}{1}
\section{Additional Preliminaries}\label{sec:prev_tech_not_work}
\subsection{Limitation of the Monotone Valuation Techniques}
Techniques for computing $\EFone$ allocations under monotone valuations often rely on building up the allocation incrementally, ensuring that each intermediate (partial) allocation is $\EFone$. A key property in such approaches is that if a partial allocation is $\EFone$, then the remaining items can be assigned in a way---possibly involving shifting bundles among agents---that preserves $\EFone$ in the final allocation.

However, this property does not extend to non-monotone valuations, even the cut-based valuations considered in this paper. In particular, the following proposition presents a counterexample: A partial allocation that is $\EFone$ that cannot be completed to a full allocation while maintaining the $\EFone$ condition.

Notice that this example is based on a forest graph for which we show the existence of allocations that satisfy both $\EFone$ and $\SO$ on forest graphs in \Cref{thm:forest}.

\begin{proposition}
    Unlike monotone valuations, partial $\EFone$ cannot be ``completed'' while maintaining $\EFone$ for cut-valuations instances.
\end{proposition}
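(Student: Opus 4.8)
The plan is to exhibit an explicit counterexample: a forest $G$, a number of agents (I will take $n=3$), and a \emph{partial} allocation $P$ that is \EFone{} but admits no completion to a complete \EFone{} allocation. The proof then splits into three parts: (i) describe the instance and $P$; (ii) verify that $P$ is \EFone{}; and (iii) argue that every assignment of the unallocated vertices destroys \EFone{}. Since \Cref{thm:forest} guarantees that this very instance \emph{does} admit an \EFone{} (indeed \EFone{}+\SO{}) complete allocation, the content of the proposition is purely that the incremental ``maintain-\EFone{}-as-an-invariant'' strategy can walk into a dead end, whereas the good allocation is reachable only by not committing to $P$ in the first place.

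For the construction I would center the instance on a star---a vertex $c$ of high degree together with its leaves---possibly augmented by one or two small extra components that seed all three agents with \emph{positive} starting values (a zero-valued agent is incompatible with any committed bundle that cannot be reduced to value $0$ by a single removal, so all agents should start positive for $P$ to be \EFone{}). The two structural levers I intend to exploit are the non-monotonicity of cut-valuations and the stiffness of independent leaf-bundles. First, recall from \Cref{ex:cutvalue} that removing a vertex can \emph{increase} a bundle's value; in particular, removing the center $c$ from a bundle that contains it raises that bundle's value, so this removal is useless to an agent trying to eliminate envy toward it. Second, a bundle consisting solely of degree-one leaves changes value by exactly one under any single removal, so it is ``hard to reduce'': if such a bundle ends up two or more above some agent's value, the resulting \EFone{} violation cannot be repaired. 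The commitments in $P$ are chosen so that the partial allocation is balanced to within the slack \EFone{} allows, yet the leftover leaves---each of which, by \Cref{claim:chore_for_two_agents}, is a good for all but at most one agent---must, by a value-accounting argument over any completion, land so as to open an irreparable gap of at least two between some pair of agents.

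Carrying out (ii) is a routine pairwise check, with the one caveat that, because removal may increase value, the \EFone{} test must take the \emph{minimum} of $v(A_j\setminus\{o\})$ over $o\in A_j$ rather than a conveniently chosen $o$. Step (iii) is the crux and the main obstacle: I must show \emph{exhaustively} that no completion works, which amounts to blocking every non-monotonic ``rebalancing'' move that might otherwise rescue \EFone{}. The two dangerous moves are (a) dumping a leftover leaf onto the over-valued agent to pull its value down toward balance, and (b) removing the center to kill an envy edge. The construction must be tuned---through the center's degree and the multiplicities of leaves pre-assigned to each agent---so that move (a) is neutralized (the relevant leaf is a chore there, or performing it merely drops that agent below a different threshold) and move (b) is neutralized by non-monotonicity. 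The delicate part is selecting these parameters so that the \emph{single} partial allocation is simultaneously \EFone{} and a genuine dead end for \emph{all} completions; once they are fixed, the argument is a finite case analysis that partitions completions by the destination of the center and of each leftover leaf and exhibits a surviving \EFone{} violation in every branch.
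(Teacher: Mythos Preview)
Your proposal is not a proof; it is a plan for a proof in which the only hard step---actually exhibiting a concrete instance and a concrete partial allocation---is left to be ``tuned'' later. A proposition of this type is proved by a single explicit counterexample, and you never fix one: you describe the \emph{shape} (a star plus auxiliary components, $n=3$, values balanced to the \EFone{} boundary, leftover leaves that are chores for one agent and goods for the rest) but then say ``once the parameters are fixed, the argument is a finite case analysis.'' That is exactly the content of the proof, and it is missing. Without specific numbers there is no way to check that step~(ii) and step~(iii) can be satisfied \emph{simultaneously}; in particular, your ``move (a)'' and ``move (b)'' discussion presupposes a configuration whose existence you have not demonstrated.

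By contrast, the paper's proof is much more concrete and much shorter than your outline suggests. It uses $n=4$ agents and a forest of three disjoint stars of degrees $3$, $5$, and $3$ (so $14$ vertices). Exactly one leaf, $o_2$, is left unallocated; the remaining vertices are distributed so that the red agent holds only the adjacent center $o_1$ (value $3$) and every other bundle $A_j$ satisfies $\min_{o\in A_j} v(A_j\setminus\{o\})=3$ on the nose. The single leftover leaf $o_2$ is a chore for red (it is adjacent to $o_1$) and a good for everyone else, so assigning it to red drops red to $2<3$, while assigning it to any other agent pushes that agent's minimum-removal value to $4>3$; either way \EFone{} breaks. There is no need for your two ``dangerous moves'' analysis: the construction is rigged so that a single pairwise check per branch suffices. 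If you want to salvage an $n=3$ version, you will need to actually write down the graph and the partial allocation and carry out the four-way case split on where the leftover item goes.
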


\begin{proof}
    Consider an instance with four agents (red, yellow, green, and blue) and 14 items where valuations are induced by the graph shown below. 

    \begin{center}
    \begin{tikzpicture}
    \node[circle, draw=red, thick, minimum size=0.7cm] (A1) at (0, 0) {$o_1$};
    \node[circle, thick, minimum size=0.7cm] (A2) at (-1, -1.5) {$o_2$};
    \node[circle, draw=yellow, thick, minimum size=0.7cm] (A3) at (0, -1.5) {$o_3$};
    \node[circle, draw=yellow, thick, minimum size=0.7cm] (A4) at (1, -1.5) {$o_4$};
    
    \draw[-] (A1) -- (A2);
    \draw[-] (A1) -- (A3);
    \draw[-] (A1) -- (A4);

    \node[circle, draw=green, thick, minimum size=0.7cm] (B1) at (4, 0) {$o_5$};
    \node[circle, draw=yellow, thick, minimum size=0.7cm] (B2) at (2.5, -1.5) {$o_6$};
    \node[circle, draw=yellow, thick, minimum size=0.7cm] (B3) at (3.5, -1.5) {$o_7$};
    \node[circle, draw=blue, thick, minimum size=0.7cm] (B4) at (4.5, -1.5) {$o_8$};
    \node[circle, draw=blue, thick, minimum size=0.7cm] (B5) at (5.5, -1.5) {$o_9$};
    \node[circle, draw=blue, thick, minimum size=0.7cm] (B6) at (6.5, -1.5) {$o_{10}$};
    
    \draw[-] (B1) -- (B2);
    \draw[-] (B1) -- (B3);
    \draw[-] (B1) -- (B4);
    \draw[-] (B1) -- (B5);
    \draw[-] (B1) -- (B6);

    \node[circle, draw=blue, thick, minimum size=0.7cm] (C1) at (9, 0) {$o_{11}$};
    \node[circle, draw=green, thick, minimum size=0.7cm] (C2) at (8, -1.5) {$o_{12}$};
    \node[circle, draw=green, thick, minimum size=0.7cm] (C3) at (9, -1.5) {$o_{13}$};
    \node[circle, draw=green, thick, minimum size=0.7cm] (C4) at (10, -1.5) {$o_{14}$};
    
    \draw[-] (C1) -- (C2);
    \draw[-] (C1) -- (C3);
    \draw[-] (C1) -- (C4);

\end{tikzpicture}
\end{center}

A partial $\EF$ allocation is marked in the graph. Notice that object $o_2$ is unassigned so far. However, $o_2$ cannot be allocated to any of the agents without breaking $\EFone$. 
This is because the red agent envies all other agents, but their value is exactly equal to that of the red agent up to the removal of the largest good. However, $o_2$ is a chore (negative marginal value) for agent $1$ (red) and a good (positive marginal value) for all other agents.
\end{proof}

Another common technique (under additive goods-only instances) is that of Nash welfare (the allocation maximizing the product of agents' utilities). However, our counter example for \Cref{thm:general:non_exist} also serves as an example to show that the maximum Nash welfare allocation is not \EFone{} for this valuation class.

\subsection{Alternative \EFone{} definitions}
The definition of \EFone{} that we use in this paper is used where for every pair of agent $i,j\in N$ such that $v(A_j) > v(A_i)$, there exists $o_j\in A_j$ such that $v(A_i) \ge v(A_j \setminus \{o_j\})$. 
This definition is typically used in goods-only instances, i.e. instances in which for every $o\in V$ and $S\subseteq V$, we have $v(S \cup \{o\}) \ge v(S)$. 
In settings with mixed items (i.e. goods and chores), a weaker notion is often adopted. 
Here, allocation $A$ is said to be \EFone{} if, for every $i, j\in N$ with $v(A_j) > v(A_i)$, either there exists $o_j\in A_j$ such that $v(A_i)\ge v(A_j \setminus \{o_j\})$ or an $o_i\in A_i$ such that $v(A_i \setminus \{o_i\}) \ge v(A_j)$. 
That is, either there is a good in the envied agent's bundle or a chore in the envious agent's bundle whose (hypothetical) removal eliminates the envy. 

It is clear that any allocation that satisfies the first definition also satisfies the other. 
Since we adopt the first definition, all of our positive results also give equivalent guarantees for the mixed-item variant. 
At the same time, we note that the counterexample for our negative results, \Cref{thm:general:non_exist}, also works for the weaker (mixed-item) definition, even though a negative example for the stronger (good-based) variant does not imply a non-example for the weaker (mixed item-variant) definition.

\section{Non-existence of $\EFone$ and $\SO$ for $n\geq 3$}\label{sec:EF1_SO_example}

In contrast to the non-monotonic existential guarantees for $\EFone$ and $\NT$ allocation (always exists for $n=2$ and $n\geq 4$ (\Cref{thm:general:NT4}), but need not exist when $n=3$ (\Cref{thm:general:non_exist})), in this section, we provide a family of instances for every $n\geq 3$ agents such that no allocation is $\EFone$ as well as the stronger efficiency guarantee of $\SO$. 
Note that from \Cref{prop:n=2}, we know that $\EFone$ and $\SO$ allocations always exist when $n=2$. 

\begin{theorem}[Non-existence of $\EFone$ and $\SO$ allocations for $n\geq 3$] \label{thm:general:EF1_SO}
    There exists an instance for every $n\geq 3$ agents for which no allocation is both $\EFone$ and $\SO$.
\end{theorem}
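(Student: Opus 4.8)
The plan is to exhibit, for each $n \ge 3$, a single graph whose \emph{only} socially optimal allocations are forced into a rigid shape that cannot be \EFone{}. For $n=3$ the instance of \Cref{thm:general:non_exist} (the complete bipartite graph $K_{2,d}$) already does the job, so the real work is to design a family $G_n$ that works for all $n$. The starting point is the following strengthening of \Cref{prop:SO_forests} to general graphs: since $\sum_{i} v(A_i) = 2\cdot|\{\text{bichromatic edges}\}|$, we have $\SO = 2|E|$ whenever $\chi(G)\le n$, and in that case an allocation is \SO{} if and only if it is a proper $n$-colouring of $G$; moreover every bundle is then an independent set, so $v(A_i) = \sum_{o\in A_i}\deg(o)$. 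This reduces the question to: does some proper $n$-colouring balance the induced degree-weights well enough to be \EFone{}?

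Second, I would construct $G_n$ so that the colouring is essentially forced. Take $n-1$ \emph{heavy} vertices $u_1,\dots,u_{n-1}$ forming a clique $K_{n-1}$, together with a pool of $m$ \emph{light} vertices $w_1,\dots,w_m$, each joined to all of $u_1,\dots,u_{n-1}$. Then $\deg(w_j)=n-1$, $\deg(u_i)=m+n-2$, and $\chi(G_n)=n$ (the clique needs $n-1$ colours and each $w_j$, being adjacent to every $u_i$, needs one more). Being a clique, $u_1,\dots,u_{n-1}$ occupy $n-1$ distinct colour classes in any proper $n$-colouring; since each $w_j$ is adjacent to all heavy vertices, it cannot share a class with any of them, so \emph{all} of the pool is forced into the single remaining class. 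Hence the \SO{} bundle-value multiset is completely determined: one pool bundle of value $m(n-1)$, and $n-1$ singleton heavy bundles each of value $m+n-2$. Note $G_n$ contains triangles (e.g.\ $u_1 w_1 u_2$), so it is not a forest, consistent with \Cref{thm:forest}.

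Third, I would verify the \EFone{} violation on this forced profile. For $m\ge 4$ and $n\ge 3$ the pool bundle strictly dominates every heavy bundle, since $m(n-1)-(m+n-2)=(n-2)(m-1)>0$, so each heavy agent envies the pool agent. Removing a single pool item lowers its value by $\deg(w_j)=n-1$ to $(m-1)(n-1)$, and a direct computation gives $(m-1)(n-1)-(m+n-2)=(m-2)(n-2)-1$, which is positive exactly when $(m-2)(n-2)>1$; choosing $m\ge 4$ makes this hold for every $n\ge 3$. Thus no single-item removal from the pool bundle eliminates the envy, so the (unique up to relabelling) \SO{} profile is not \EFone{}, and no allocation is simultaneously \EFone{} and \SO{}.

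The main obstacle is establishing the \emph{rigidity} of the \SO{} allocations: one must argue both that $\SO=2|E|$ forces a proper colouring (which needs $\chi(G_n)\le n$ exactly, so that the maximum welfare is attainable and every proper colouring uses all $n$ colours) and that the clique-plus-pool structure pins down the partition up to relabelling, leaving no freedom to rebalance the degree-weights. Once rigidity is in hand, the \EFone{} failure is a one-line inequality; the delicate part is making the construction $n$-uniform (a single choice such as $m\ge 4$ working for all $n$) rather than tuning parameters separately for each $n$, and checking that the non-emptiness convention is automatically satisfied, since any proper colouring of $G_n$ must use all $n$ colours and hence fills all $n$ bundles.
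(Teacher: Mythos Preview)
Your argument is correct. It shares the paper's high-level template---a clique of ``heavy'' vertices together with an independent pool, so that \SO{} allocations are exactly proper colourings and the bundle-value profile is forced---but you use a clique on $n-1$ vertices where the paper uses $n-2$. This small change buys a lot: with $K_{n-1}$ you have $\chi(G_n)=n$, so every proper $n$-colouring is unique up to relabelling (all pool vertices are forced into the single remaining class), and the \EFone{} failure reduces to the clean inequality $(m-2)(n-2)>1$, handled uniformly by $m\ge 4$. The paper's $(n-1)$-partite construction has $\chi=n-1$, which leaves \emph{two} bundles available for the pool; one must then rule out every split of the pool, and the paper's verification is loose on this point (it records the singleton value as $2n$ when the degree is actually $3n-3$, and the asserted inequality $(n-2)(n-1)>2n$ fails for small $n$; in fact for $n\in\{3,4,5\}$ the paper's graph does admit an \EFone{}+\SO{} allocation via a balanced pool split). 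Your $K_{n-1}$ variant sidesteps this entirely: rigidity is automatic, non-emptiness is forced by $\chi=n$, and a single construction works for all $n\ge 3$.
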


\begin{proof}
    For an instance with $n$ agents, consider an $(n-1)$-partite graph constructed as follows:
    \begin{itemize}
        \item Each of the first $n-2$ parts contains a single item which is connected to every other node in the graph.
        \item The final part contains $2n$ identical items. These items are not connected to each other, but each has edges to every item outside this part.
    \end{itemize}

    Since the graph is $(n-1)$-partite, any max $(n-1)$-cut (thus also a max $n$-cut) places each part in a different bundle, whereby ensuring that every edge is a cut-edge. 
    Thus, the social welfare in any $\SO$ allocation is equal to $2|E|$.

    To maximize social welfare, the $n-2$ parts with a single item must be allocated as singletons to distinct agents, say agents $1, \ldots, {n-2}$. 
    The remaining $2n$ items must then be divided among the two remaining agents. 
    Since all the items in this part are identical, the most balanced allocation assigns $n$ items each to agent ${n-1}$ and agent $n$. 

    However, every agent $i \in [n-2]$ $\EFone$-envies both $n-1$ and $n$, because
    \[v(A_{n-1}\setminus \{o^\prime\}) = v(A_n\setminus \{o\}) = (n-2)(n-1) > 2n = v(A_i),\] for every $o\in A_{n}, o^\prime\in A_{n-1}$.

    Note that the inequality $(n-2)(n-1) > 2n$ holds for any $n\in \mathbb{Z}$, but the theorem allows for $n\geq 3$ since the construction of an $(n-1)$-partite graph requires there to be at least two parts.
\end{proof}

\section{Missing Proofs from \Cref{sec:four_and_more}} \label{sec:proofs_four_or_more}

\identicalsufficesagentone*

\begin{proof}
    First, recall that the cut-valuations are identical for all agents. 
    Now, to prove that the allocation $A$ is \EFone{}, we need to show that for any pair of agent $i,j\in N$, agent $i$ does not envy agent $j$ up to the (hypothetical) removal of some item $o_j\in A_j$. 
    That is, for every pair $i,j\in N$, there exists $o_j\in A_j$ such that $v(A_i) \ge v(A_j \setminus \{o_j\})$. Since we know that $v(A_1) \le v(A_i)$ for any $i$, it suffices if $v(A_1) \ge v(A_j\setminus \{o_j\})$. 
    This is true since we know that agent $1$ is not involved in any \EFone{} violation. 
    In the other direction, if agent $1$ is involved in an \EFone{} violation, then, by definition, $A$ is not \EFone{}. 
    Thus, the lemma stands proved.
\end{proof}




\section{Missing Proofs from \Cref{sec:arbitrary_n_positive_results}} \label{sec:proofs_sec_general}
In this section, we present the algorithm and proof for \Cref{thm:general:ne}.

\EFwTS*

Our proof is constructive---we provide an algorithm (Algorithm \ref{alg:general:NE}) that computes \textsc{wTS} and \EFone{} allocations for general graphs in polynomial time. 
The general idea of the algorithm is as follows: Similar to \Cref{alg:general:NT}, we start from an arbitrary, but complete allocation. 
As long as the allocation is not $\EFone$, we reallocate some items such that item(s) are transferred from the $\EFone$-envied agent's bundles, sometimes to the least-valued bundle $A_1$ and sometimes to another. 
To satisfy \textsc{wTS}, items that have negative marginal values (i.e. are strict chores) in their respective bundles are transferred to some other agent. 
In each of these transfers, we ensure that the minimum value received by the agents ($v(A_1)$) increases by at least 1; and if it remains unchanged, the number of agents receiving the minimum value ($n_1$) decreases by at least 1.
Thus, the vector $\Phi=(v(A_1), -n_1)$ improves lexicographically and serves as a potential function for the algorithm.
Since the minimum value is upper-bounded by $m^2$ and the number of agents is $n$, the algorithm returns an $\EFone$+$\textsc{wTS}$ allocation in polynomial time.

Since the valuations are identical and we can relabel the bundles,  we assume, without loss of generality, that $v(A_1) \le \cdots \le v(A_n)$ always holds at any point in the algorithm (\Cref{lem:identical}). 
Whenever the allocation is not \EFone{}, we first check 
for the following two cases (in order): 

\smallskip
\noindent\textbf{Case 1: } If there is an \EFone{}-violation towards agent $i$ and there exists an item $o \in A_i$ that has a positive marginal value for $A_1$, we reallocate $o$ to agent $1$. 
Note that after this reallocation, the new values of both, agent $1$ and agent $i$, is strictly larger than the previous minimum value $v(A_1)$. 

\smallskip
\noindent\textbf{Case 2: } If there are still agents whom there is an \EFone{}-violation, but all items in their bundles have zero or non-negative marginals for $A_1$. 
First, we note that there can be only one agent (say agent $i$) towards whom agent $1$ has $\EFone$-envy (\Cref{ob:general:ne:case2}). 
In this case, we first find a subset $S \subset A_i$ such that $v(S)$ is strictly larger than $v(A_1)$ but there is no $\EFone$-envy from agent $1$ to $S$. 
Then, we let $i$ keep $S$ and reallocate all other items $A_i \setminus S$ to an agent that is not agent $1$ or $i$ (when $n\geq 3$, such an agent always exists). 
We show that either the new allocation is already $\EFone$, and if not, the algorithm goes to Case 1 without decreasing $\Phi$.

Throughout the algorithm, we repeatedly check for the weak transfer stability property $\textsc{wTS}$ by checking whether any item has been allocated as a strict chore. If it is, we transfer it in a way that increases $v(A_1)$ or decreases the number of agents with value $v(A_1)$.

\begin{algorithm}[tp]
\begin{algorithmic}[1]
\REQUIRE A complete allocation $A$ s.t. $v(A_1) \le \cdots \le v(A_n)$. 
\ENSURE A $\textsc{wTS}$ allocation s.t. $v(A_1) \le \cdots \le v(A_n)$.
    \WHILE{(there exists agent $i\in N$\ with item $o\in A_i$ s.t. $v(A_i) < v(A_i\setminus \{o\})$)}\label{line:general_wNT}
        \STATE If $i\neq 1$ (least valued bundle), then transfer $o$ to $A_1$; 
        \STATE Else if $i = 1$, transfer $o$ to $A_2$.
        \STATE Relabel the bundles s.t. $v(A_1) \le \cdots \le v(A_n)$. 
    \ENDWHILE
\end{algorithmic}
\caption{\textsc{wTS-Subroutine}} 
\label{alg:general_wNT_subroutine}
\end{algorithm}

\begin{algorithm}[tp]
\begin{algorithmic}[1]
\STATE Initialize $A = (A_1, \ldots,A_n)$ arbitrarily with any complete allocation.
\STATE Relabel bundles s.t. $v(A_1) \le \cdots \le v(A_n)$.  
\STATE $A\gets \textsc{wTS-Subroutine}(A)$ \label{line:wTS_call_1}
\WHILE{($\exists$ \EFone{} violation from agent $1$ to some other agent)}\label{line:general_main_while}
    \STATE Let $X$ be the set of agents that agent $1$ \EFone{}-envies. 
    \IF{($\exists\ i\in X\ \text{and}\ \exists\ o\in A_i$ s.t. agent 1 has positive marginal value for $o$)}
        \STATE Transfer $o$ to agent $1$.
    \ELSE
        \STATE Initialize $S \gets \emptyset$. 
        \WHILE{($v(S) \le v(A_1)$)}
            \STATE Let $o$ be an item in $A_i \setminus S$ with positive marginal value for $S$. 
            \STATE Update $S \gets S \cup \{o\}$. 
        \ENDWHILE
        \STATE Arbitrarily choose an agent $j \neq 1, i$. 
        \STATE Update $A_i \gets S, A_j \gets A_j \cup (A_i \setminus S)$. 
    \ENDIF
    \STATE Relabel bundles s.t. $v(A_1) \le \cdots \le v(A_n)$. 
    \STATE $A\gets \textsc{wTS-Subroutine}(A)$. \label{line:wTS_call_2}\label{line:general:nwt}
\ENDWHILE
\end{algorithmic}
\caption{Computing $\EFone$ + $\textsc{wTS}$ allocations on general graphs
} 
\label{alg:general:NE}
\end{algorithm}


We note that the above algorithmic result also shows the \emph{existence} of $\EFone$+$\textsc{wTS}$ (and hence also non-empty $\EFone$) 
allocations for the class of non-monotonic valuations defined by the cut functions in a graph. 

Next, we show that the \textsc{wTS-Subroutine} terminates in polynomial time. 
In particular, we show that $\Phi = (v(A_1), -n_1)$ lexicographically improves.

\begin{claim}\label{ob:wNT-subroutine-converges}
    The \textsc{wTS-Subroutine} returns an allocation that is \wNT{}. Furthermore, in each iteration of the \textsc{wTS-Subroutine} (\Cref{alg:general_wNT_subroutine}), the minimum value across agents ($v(A_1)$) does not decrease. Moreover, when $v(A_1)$ remains the same, the number of agents attaining the minimum value ($n_1$) reduces. 
    Thus, \Cref{alg:general_wNT_subroutine} runs in $O(m^2 n)$ time.
\end{claim}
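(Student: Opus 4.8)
The plan is to separate the two assertions: first that the returned allocation is \wNT{}, and second the potential-based termination analysis. For the \wNT{} guarantee I would simply inspect the while-loop exit condition. The loop of \Cref{alg:general_wNT_subroutine} continues precisely while some agent $i$ holds an item $o$ with $v(A_i) < v(A_i \setminus \{o\})$, i.e.\ a \emph{strict chore}. Hence upon termination no such item exists, so for every agent $i$ and every $o \in A_i$ we have $v(A_i \setminus \{o\}) \le v(A_i)$. Since a \wNT{} violation requires an item $o \in A_i$ with $v(A_i \setminus \{o\}) > v(A_i)$ (together with a strictly benefiting receiver), the absence of strict chores immediately rules out any \wNT{} violation, settling the first claim.

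For the potential argument I would track $\Phi = (v(A_1), -n_1)$ and show it improves lexicographically in every iteration. The crucial tool is \Cref{claim:chore_for_two_agents}: if $o$ is a \emph{strict} chore for one agent, then its marginal value is strictly positive for every other agent. I would then split into the two cases of the algorithm. In the case $i \neq 1$, the transferred item $o$ is a strict chore for $A_i$, so removing it strictly raises $v(A_i)$, and it is a strict good for agent $1$, so adding it strictly raises $v(A_1)$; every other bundle is unchanged and already had value at least the old minimum. In the case $i = 1$, $o$ is a strict chore for $A_1$ (removal strictly raises $A_1$'s value) and a strict good for agent $2$ (so $v(A_2)$ strictly rises). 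In both cases the two modified bundles end strictly above the previous minimum while all other bundles are untouched, so the new minimum is at least the old one; hence $v(A_1)$ never decreases. When the minimum value is unchanged it can only be attained by untouched bundles, and since agent $1$ (and possibly the second modified bundle) has left the minimum set, $n_1$ strictly drops. Thus $\Phi$ improves lexicographically each iteration.

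For the running time I would bound the two coordinates of $\Phi$. The minimum value $v(A_1)$ is a cut value, hence an integer in $[0, |E|]$ with $|E| = O(m^2)$, so it can increase at most $O(m^2)$ times. Between consecutive increases, $v(A_1)$ is fixed while $n_1 \in \{1, \ldots, n\}$ strictly decreases, giving at most $n$ iterations per value. Multiplying yields $O(m^2 n)$ iterations, each of which is polynomial (locating a strict chore, transferring, and re-sorting), establishing the stated bound.

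The step I expect to be the main obstacle is verifying, in both transfer cases, that the \emph{receiving} bundle's value increases \emph{strictly}; this is exactly where \Cref{claim:chore_for_two_agents} is indispensable, since without the guarantee that a strict chore is a strict good elsewhere, the receiving bundle could stay flat and the minimum could fail to rise, breaking lexicographic progress. Care is also needed to confirm that no \emph{other} bundle can drop below the old minimum (it cannot, as only the two involved bundles change and both end strictly above it), so that the minimum is genuinely non-decreasing.
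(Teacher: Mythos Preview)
Your proposal is correct and follows essentially the same route as the paper: both argue \wNT{} from the loop exit condition, and both establish termination via the lexicographic potential $\Phi = (v(A_1), -n_1)$, using \Cref{claim:chore_for_two_agents} to guarantee that a strict chore for one agent is a strict good for every other agent, and then splitting into the cases $i \neq 1$ and $i = 1$. Your write-up is, if anything, slightly more explicit than the paper's about why both modified bundles end strictly above the old minimum and hence why $n_1$ must drop when $v(A_1)$ is unchanged.
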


\begin{proof}
    First, we prove that if the subroutine terminates, then the returned allocation is \wNT{}. 
    Recall that in a \wNT{} allocation, there must be no transfer that strictly improves the value of both agent. 
    Thus, we show that, in the returned allocation, the donator bundle's value either stays the same or reduces during a transfer, thereby proving that the allocation is \wNT.
    Notice that, on termination, the while-loop condition ensures that every item is assigned to a bundle where its marginal value is non-negative. Thus, transferring any item will either reduce the value of the donator bundle or keep the value the same. 

    In the remaining part of the proof we argue that the while loop in the subroutine terminates in polynomial time.
    Towards this, we show that either the minimum value across agents ($v(A_1)$) increases, or the number of agents ($n_1$) attaining the minimum value reduces---without decreasing the minimum value itself. 
    The \textsc{wTS-Subroutine} identifies a strict chore $o$ for some agent $i$ (an item in $A_i$ with negative marginal value). If $i \neq 1$, then $o$ is transferred from $i$'s bundle to agent $1$'s bundle. 
    By \Cref{claim:chore_for_two_agents}, $o$ must have positive marginal value for $A_1$, and transferring it to $A_1$ increases $v(A_1)$.
    If there are multiple agents with the same value as $v(A_1)$, then the minimum value does not increase, but the number of agents having this value reduces. 
    We note that $v(A_i\setminus \{o\}) > v(A_i) \geq v(A_1)$ so $i$'s value remains strictly above the old minimum after the transfer.
    Similarly, when $i = 1$, we simply remove the (strict) chore from agent $1$'s bundle and transfer it to any other agent (say agent $2$). 
    Again, since $o$ is a strict chore (negative marginal value) in $A_1$, it must be a good (positive marginal value) for $A_j$ (\Cref{claim:chore_for_two_agents}), and the transfer strictly improves both agents' values. 
    In either case, either $v(A_1)$ increases, or in the case when multiple agents attain the minimum value, the minimum value does not change but the number of agents with the minimum value decreases by at least $1$.
\end{proof}

Now we are ready to prove Theorem \ref{thm:general:ne}. 

\begin{proof}[Proof of Theorem \ref{thm:general:ne}]
    First, we show that if \Cref{alg:general:NE} terminates, then the resulting allocation is both \EFone{} and \wNT{}. 
    After that, we prove that the algorithm terminates in polynomial time.
    By the main while loop in Line~\ref{line:general_main_while}, the algorithm continues until agent 1 (the agent with the least value) no longer has an \EFone{} violation towards any other agent. 
    By \Cref{lem:identical}, this ensures that the allocation is \EFone{}. 
    Next, from \Cref{ob:wNT-subroutine-converges}, we know that the $\textsc{wTS-Subroutine}$ guarantees that the resulting allocation is \wNT{}. 
    Since this subroutine is called in Lines \ref{line:wTS_call_1} and \ref{line:wTS_call_2}, the final allocation also satisfies \wNT{}. 
    
    In particular, we show that (i) in every iteration of Case 1 there is a lexicographic improvement in $\Phi = (v(A_1), -n_1)$; and (ii) in Case 2, when it does not lead to a $\wNT$ allocation, calls \textsc{wTS-Subroutine} through which $\Phi$ lexicographically improves, i.e. the allocation already $\wNT$ but not $\EFone$, then the algorithm moves into Case 1 without decreasing $\Phi$. That is, there cannot be two consecutive rounds in which Case 2 is called unless $\Phi$ improves.
    
    Since the minimum value received by the agents can increase at most $O(E)\in O(m^2)$ times, and the number of agents receiving the least value can decrease at most $O(n)$ times, the algorithm ends in $O(m^2 n)$.

    We first consider Case 1. 
    Let $i$ be an agent who is $\EFone$-envied by agent $1$ and $o$ be an item in $A_i$ that is a good (has positive marginal value) with respect to the set $A_1$. 
    After reallocating $o$ to agent $1$, the value that agent $1$ receives strictly increases. 
    In addition, since $i$ was $\EFone$-envied by agent $1$, her new value $v(A_i\setminus\{o\})$ is still strictly larger than $v(A_1)$. 
    If $v(A_2) = v(A_1)$, the minimum value does not change, but the number of agents receiving the minimum value decreases by $1$. 
    Otherwise, the minimum value increases by at least $1$. 
    
    Next, we consider Case 2. 
    Let $i$ be the agent who is $\EFone$-envied by agent $1$, and $S$ be a subset of $A_i$ such that $v(S)$ is strictly larger than $v(A_1)$ but agent $1$ does not $\EFone$-envy $S$. Also, let $j\neq 1, i$ be the agent who receives the items in $A_i \setminus S$ along with its original bundle $A_j$; one such agent always exists when $n\ge 3$. 
    First, we show that $j$'s value does not decrease. 
    This is because every item in $A_i \setminus S$ is a chore (non-positive marginal utility) for agent $1$ and thus, from \Cref{claim:chore_for_two_agents}, each item in $A_i \setminus S$ has at least as many neighbors outside $A_j \cup (A_i \setminus S)$ as inside it.
    
    Now, if the new allocation is not $\wNT$, then \Cref{alg:general_wNT_subroutine} is called in which , by \Cref{ob:wNT-subroutine-converges}), improves $\Phi$.
    If the new allocation after reallocating $A_i \setminus S$ to $j$ is $\wNT$ but not $\EFone$, by \Cref{ob:general:ne:case2}, agent $1$ only $\EFone$-envied $i$ before this step. 
    Suppose that in the new allocation there is also $\EFone$-envy towards $j$. 
    From (the footnote of) \Cref{ob:general:ne:case2}, we know that some item in $A_j$ must be a good (positive marginal value) with respect to $A_1$; thus, the algorithm goes into Case 1. 
\end{proof}

\equitablegraphcuts*
\begin{proof}
    Let $A=(A_1, \ldots, A_n)$ be the outcome of \Cref{thm:general:ne}; set $V_i \gets A_i$.
    For any $i,j\in [n]$ such that $v(V_i) < v(V_j)$, since $A$ is $\EFone$, we know that
\[
    v(V_j \setminus \{o\}) \leq v(V_i).
\]
Also, we know that the marginal benefit from any item $o\in V$ is at most the maximum degree in the graph $\Delta$. That is,
\[ 
    v(V_j) - v(V_j \setminus \{o\}) \leq \Delta.
\]
Combining both these inequalities, we get that 
\[
    v(V_j) \leq v(V_j \setminus \{o\}) + \Delta \leq v(V_i) + \Delta.
\]
That is, for any $i, j\in [n]$, we have 
\[
    |v(V_i) - v(V_j)| \leq \Delta,
\] thereby completing the proof.
\end{proof}

\end{document}